\newcolumntype{Y}{>{\centering\arraybackslash}X}
\theoremstyle{plain}
\newtheorem{corollary}{Corollary}
\newtheorem{lemma}{Lemma}
\newtheorem{proposition}{Proposition}
\newtheorem{theorem}{Theorem}
\newtheorem{acorollary}{Corollary}
\newtheorem{alemma}{Lemma}
\newtheorem{aproposition}{Proposition}
\theoremstyle{definition}
\newtheorem{definition}{Definition}
\newtheorem{example}{Example}
\newtheorem{aexample}{Example}
\newtheoremstyle{named}{}{}{\itshape}{}{\bfseries}{.}{.5em}{\thmnote{#3's }#1 1}
\theoremstyle{named}
\newtheorem{namedcorollary}{Corollary}
\DeclareMathOperator{\bd}{bd}
\DeclareMathOperator{\cvx}{cvx}
\DeclareMathOperator{\Int}{int}
\DeclareMathOperator{\NC}{NC}
\DeclareMathOperator{\supp}{supp}
\DeclareMathOperator{\TR}{TR}
\DeclareMathOperator{\MR}{MR}
\title{A Tale of Two Monopolies\thanks{We thank Debasis Mishra, Benny Moldovanu, as well as conference/seminar participants from ACM EC 2025, APIOC 2025, ESWC 2025, Academia Sinica, Fudan University, National Taiwan University, Shanghai University of Finance and Economics, for stimulating discussions. Enze Huang provided excellent research assistance on numerical simulations.}}
\author{
Yi-Chun Chen\thanks{National University of Singapore. Email: \url{yichun@nus.edu.sg}}
\and
Zhengqing Gui\thanks{National University of Singapore. Email: \url{zgui@nus.edu.sg}}
}
\date{\today}
\begin{document}

\maketitle

\begin{abstract}

We apply marginal analysis à la \citet{bulow1989simple} to characterize revenue-maximizing selling mechanisms for a multiproduct monopoly. We derive marginal revenue from price perturbations over arbitrary sets of bundles and show that optimal mechanisms admit no revenue-increasing perturbation for bundles with positive demand, nor revenue-decreasing perturbations for zero-demand bundles. For any symmetric two-dimensional type distribution under mild regularity, this analysis fully characterizes the optimal mechanism across independence, substitutability, and complementarity. For general type distributions and allocation spaces, our approach identifies bundles that must carry positive demand and provides conditions under which pure bundling or separate selling is suboptimal.

\bigskip

\textbf{Keywords:} Multiproduct monopoly, multidimensional screening, marginal revenue.

\textbf{JEL Classification:} D42, D82, L12.
\end{abstract}

\clearpage

\section{Introduction}
\label{Sec_Intro}

Revenue maximization for a multiproduct monopoly is notoriously difficult. Even seemingly simple cases involving just two items remain poorly understood, despite decades of research \citep[p. 735; henceforth DDT]{daskalakis2017strong}. Known proofs of optimality of even a two-dimensional uniform distribution are involved, while optimal mechanisms for other simple distributions may require probabilistic bundling with a menu of infinitely many bundles.\footnote{See \citet{manelli2006bundling,pavlov2011optimal}, and \citet{giannakopoulos2018duality} for proofs regarding uniform distribution. DDT demonstrate distributions under which optimal mechanisms have infinite menu size.} In contrast, the revenue-maximizing mechanism for a single-product monopoly constrained by incentive compatibility (IC) is simply a take-it-or-leave-it price, which can be solved with the introductory textbook monopoly model. Building on this ``simple economics'' insight from \citet{bulow1989simple}, we revisit the analysis of multiproduct monopoly.

We start with the leading case of an additive buyer whose types/values for items are distributed according to an $N$-dimensional probability density. We adopt the taxation principle to recast the revenue-maximization problem as solving nonlinear pricing for bundles. The taxation principle, like the revelation principle, bears no loss in the class of implementable mechanisms, but unlike the revelation principle, it flips the (IC-)constrained optimization problem into pricing bundles self-selected by buyer types drawn from a given distribution. A taxation mechanism attaches prices to bundles of any fraction of each good and thereby generalizes what \citet{manelli2006bundling} called a price schedule which is only defined on bundles of goods in whole units. As in nonlinear pricing, a fractional unit can be flexibly interpreted as quantity, quality, or probability, but unlike nonlinear pricing, a taxation mechanism prices absolute rather than incremental units; see \citet[pp. 51-53]{wilson1993nonlinear}.

We solve the bundle pricing problem by performing standard demand analysis. Different from \citet{bulow1989simple}, we focus on determining the marginal revenue (MR) with respect to a price perturbation rather than a quantity perturbation.\footnote{\citet [Theorem 1]{manelli2006bundling} pioneer studying the marginal revenue with respect to price perturbations and yet restrict attention to \emph{deterministic} mechanisms and first-order conditions. Alternatively, \citet{jehiel2007mixed} analyze the marginal effects of the weights attached to each partition of the objects among the bidders on welfare and revenue, in the spirit of \citet{roberts1979characterization}, to obtain mixed-bundling auctions.}
Given any optimal taxation mechanism, holding the prices of other bundles fixed, total revenue must not increase with any small price change for bundles with positive demand (FOC), nor with a small price decrease for bundles with zero demand (SOC). Together, they constitute our main result, \Cref{Thm_MR}. In a single-product monopoly, each allocation is demanded by an interval of buyer types, and MR with respect to an interior allocation is the integral of an ``MR density'', which equals the derivative of the buyer's virtual value times the density of his type. In a multiproduct monopoly, each allocation is demanded by a convex set of buyers, and the MR density is equal to the negative of the density of the transformed measure in DDT.\footnote{The transformed measure of DDT captures the marginal effect on revenue with respect to an increase in the rent surrendered to a subset of buyer types, whereas MR captures the marginal effect on revenue with respect to an increase in the price of a subset of bundles. The two effects are of the same magnitude but opposite in sign, as the price functional and the buyer's indirect utility function are convex conjugates.}

Despite the connection, the MR analysis differs from DDT's optimal transport approach in several aspects. First, FOC and SOC are directly formulated on the primal problem based on their economic interpretations, whereas DDT focus on the dual problem, which requires finding a measure that convexly dominates the transformed measure and constructing an optimal multidimensional transport plan based on this new measure.
Second, FOC and SOC are sufficiently general to apply to almost all convex type spaces and allocation spaces, including settings with substitution or complementarity, whereas DDT's framework requires additional extensions to accommodate such environments.\footnote{See \citet{kash2016optimal}. For additive valuations drawn from $[0,1]^N$, \citet{kleiner2019strong} provides a shorter alternative proof for the strong duality result established by DDT.} Third, while solving DDT's dual problem certifies the optimality of a given mechanism, our FOC and SOC can pin down optimal mechanisms directly from primitives, especially in the two-product case. They also identify bundles that must be sold to a positive mass of buyers when the number of products is arbitrary.

To illustrate the power of FOC and SOC, we first examine symmetric two-dimensional joint distributions with positive MR density (SRS distributions), which may exhibit independence or correlation. Strict regularity corresponds to a strict version of the hazard rate condition introduced by \citet{mcafee1988multidimensional}, the earliest class of ``regular distributions'' proposed in the literature.\footnote{The hazard rate condition was imposed to extend the ``no-haggling result'' of \citet{maskin1984monopoly} to selling two items, but it was discovered later by \citet{thanassoulis2004haggling} and \citet{manelli2006bundling} among others that revenue maximization for some strictly regular distributions still requires randomization.} For SRS distributions, FOC and SOC produce similar implications as with one item, namely that the optimal mechanism must sell a whole unit for \emph{at least one good}. 
As it turns out, the demand for these bundles is easy to pin down. As the distribution is symmetric, it suffices to consider a symmetric mechanism that sells bundles taking the form of $(q,1)$ to types above the 45-degree line, which we focus on subsequently, and bundles taking the form of $(1,q)$ to types below the 45-degree line. The indirect utility of type $(x_1,x_2)$ above the 45-degree line can then be expressed in terms of the indirect utility function $\overline{u}$ of $(x_1,1)$ according to $\overline{u}(x_1)-(1-x_2)$. By the envelope theorem, a boundary type $(x_1,1)$ consumes the bundle $(\overline{u}'(x_1),1)$; hence, type $(x_1,x_2)$ also consumes the bundle $(\overline{u}'(x_1),1)$ up to where their participation constraint is binding, namely, $x_2=1-\overline{u}(x_1)$. As a result, solving the symmetric optimal mechanism amounts to solving the one-dimensional indirect utility function $\overline{u}$.

FOC and SOC offer a unified treatment to identify optimal mechanisms for SRS distributions. Specifically, each SRS distribution identifies a threshold function $\zeta(x_1)$ at which the MR for bundle $(\overline{u}'(x_1),1)$ is equal to zero. For the two-dimensional uniform distribution on $[0,1]^2$, $\zeta(x_1)$ is constantly equal to $2/3$. Moreover, we have an MR deficit (surplus) at $x_1$ if $1-\overline{u}$ is above (below) $\zeta$. Our first observation is that the types with nearly maximal values run an MR deficit, so SOC requires that the grand bundle be active. Moreover, it is impossible for bundle(s) $(q,1)$ with $q>0$ to have a positive demand; otherwise, its MR is zero (FOC), which requires, by concavity of $1-\overline{u}$, an MR deficit on the left to cancel the MR surplus on the right. Then, selling any quantity of good 1 smaller than $q$, which will be bought by types to the left of those purchasing $(q,1)$, can have neither zero demand (by SOC) nor positive demand (by FOC), as illustrated by panels (a) and (b) of Figure \ref{Fig_Uniform}. FOC for the grand bundle then pins down a cutoff $(2-\sqrt{2})/3$ at which the MR surplus exactly offsets the MR deficit; see panel (c) of Figure \ref{Fig_Uniform}. To the left of the cutoff, FOC for bundle $(0,1)$ implies that $1-\overline{u}$ must coincide with $\zeta$. As a result, the marginal analysis fully identifies the optimal mechanism for the uniform distribution.

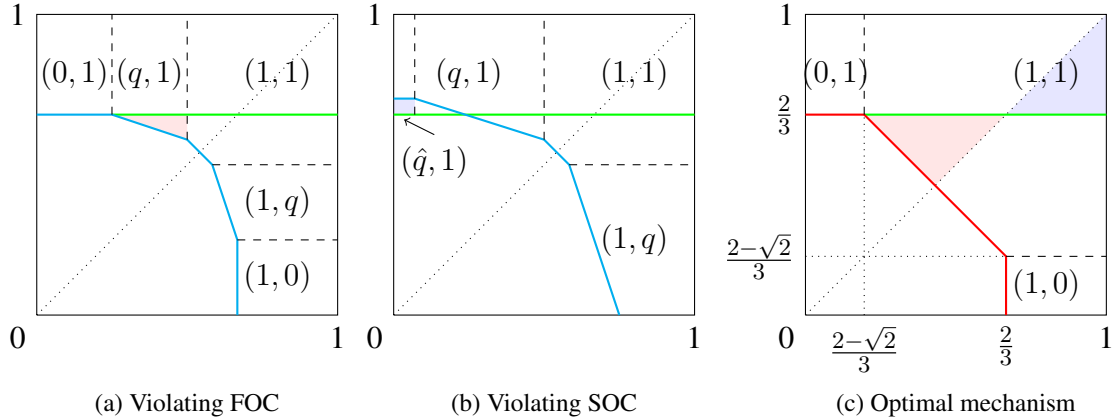
\begin{figure}[ht]
\centering
\begin{tikzpicture}[baseline={(0,0)},x=4.5cm,y=4.5cm]
\fill [red!10] (1/4,2/3) -- (1/2,2/3) -- (1/2,7/12);
\draw (0,0) node [below left] {0} -- (1,0) node [below] {1} -- (1,1) -- (0,1) node [left] {1} -- (0,0);
\draw [dotted] (0,0) -- (1,1);
\draw [thick, green] (1/4,2/3) -- (1,2/3);
\draw [thick, cyan] (0,2/3) -- (1/4,2/3) -- (1/2,7/12) -- (7/12,1/2) -- (2/3,1/4) -- (2/3,0);
\draw (1/8,0.8) node {$(0,1)$};
\draw (0.8,1/8) node {$(1,0)$};
\draw (3/8,0.8) node {$(q,1)$};
\draw (0.8,3/8) node {$(1,q)$};
\draw (0.8,0.8) node {$(1,1)$};
\draw [dashed] (1/4,2/3) -- (1/4,1);
\draw [dashed] (2/3,1/4) -- (1,1/4);
\draw [dashed] (1/2,7/12) -- (1/2,1);
\draw [dashed] (7/12,1/2) -- (1,1/2);
\draw (0.5,-0.3) node {\footnotesize (a) Violating FOC};
\end{tikzpicture}%
\begin{tikzpicture}[baseline={(0,0)},x=4.5cm,y=4.5cm]
\fill [blue!10] (0,2/3) -- (0.07,2/3) -- (0.07,0.72) -- (0,0.72);
\draw (0,0) node [below left] {0} -- (1,0) node [below] {1} -- (1,1) -- (0,1) node [left] {1} -- (0,0);
\draw [dotted] (0,0) -- (1,1);
\draw [thick, green] (0,2/3) -- (1,2/3);
\draw [thick, cyan] (0,0.72) -- (0.07,0.72) -- (1/2,7/12) -- (7/12,1/2) -- (3/4,0);
\draw (2/8,0.8) node {$(q,1)$};
\draw (0.8,2/8) node {$(1,q)$};
\draw (0.8,0.8) node {$(1,1)$};
\draw [dashed] (1/2,7/12) -- (1/2,1);
\draw [dashed] (7/12,1/2) -- (1,1/2);
\draw [dashed] (0.07,2/3) -- (0.07,1);
\draw [->] (0.135,0.6) node [below] {$(\hat{q},1)$} -- (0.035,0.65);
\draw (0.5,-0.3) node {\footnotesize (b) Violating SOC};
\end{tikzpicture}%
\begin{tikzpicture}[baseline={(0,0)},x=4.5cm,y=4.5cm]
\fill [red!10] ({2/3-sqrt(2)/3},2/3) -- (2/3,2/3) -- ({2/3-sqrt(2)/6},{2/3-sqrt(2)/6});
\fill [blue!10] (2/3,2/3) -- (1,1) -- (1,2/3);
\draw (0,0) node [below left] {0} -- (1,0) node [below] {1} -- (1,1) -- (0,1) node [left] {1} -- (0,0);
\draw [dotted] (0,0) -- (1,1);
\draw [thick, green] ({2/3-sqrt(2)/3},2/3) -- (1,2/3);
\draw [thick, red] (0,2/3) node [black, left] {$\frac{2}{3}$} -- ({2/3-sqrt(2)/3},2/3) -- (2/3,{2/3-sqrt(2)/3}) -- (2/3,0) node [black, below] {$\frac{2}{3}$};
\draw (0.1,0.8) node {$(0,1)$};
\draw (0.8,0.1) node {$(1,0)$};
\draw (0.8,0.8) node {$(1,1)$};
\draw [dotted] (0,{2/3-sqrt(2)/3}) node [left] {$\frac{2-\sqrt{2}}{3}$} -- (2/3,{2/3-sqrt(2)/3});
\draw [dashed] (2/3,{2/3-sqrt(2)/3}) -- (1,{2/3-sqrt(2)/3});
\draw [dotted] ({2/3-sqrt(2)/3},0) node [below] {$\frac{2-\sqrt{2}}{3}$} -- ({2/3-sqrt(2)/3},2/3);
\draw [dashed] ({2/3-sqrt(2)/3},2/3) -- ({2/3-sqrt(2)/3},1);
\draw (0.5,-0.3) node {\footnotesize (c) Optimal mechanism};
\end{tikzpicture}
\caption{Two products with uniformly distributed types.}
\label{Fig_Uniform}
\end{figure}

The cross-subsidization of MR is reminiscent of “bunching” or “ironing” in unidimensional screening problems. In the two-dimensional setting, the function $\overline{u}$ simultaneously determines both the menu of bundles and the endogenous, type-dependent participation constraints. The one-dimensional trade-off between $1-\overline{u}$ and $\zeta$ sidesteps the verification of convex dominance between multidimensional measures in DDT, as the threshold function $\zeta$ can be directly computed from the type distribution.\footnote{We may also think of $\zeta$ as a ''benchmark transportation plan'' shipping the positive MR in the interior to the negative MR at the top boundary/right boundary. However, note that ``ironing'' here pertains to the primal variables $\overline{u}$ of the revenue maximization problem, whereas the transportation problem in DDT pertains to the dual variables.} Our analysis for the uniform distribution immediately generalizes to two different classes of distributions. First, when $\zeta$ is nondecreasing, as implied by the assumptions in \citet{manelli2006bundling} and \citet{hart2010revenue}, the same analysis implies that the optimal mechanism is deterministic. Alternatively, if $\zeta$ is nonincreasing and concave, as in Example 3 of DDT or as assumed by \citet{giannakopoulos2018selling}, the optimal one-dimensional indirect utility function $\overline{u}$ is characterized by one cutoff $a^1$ such that $1-\overline{u}$ coincides with $\zeta$ for all $x_1 < a^1$, and $\overline{u}'(x_1)=1$ for all $x_1 \geq a^1$.\footnote{\citet{giannakopoulos2018selling} restrict attention to independent distributions and do not prove that the grand bundle must be active to pin down $a^1$.} When $a^1$ is positive and $\zeta$ is strictly concave, the optimal $\overline{u}$ features probabilistic bundling with a menu of infinitely many bundles. 

The same analysis also identifies optimal mechanisms for other distributions that have not been solved previously. First, when $\zeta$ is concave but not monotone, which is the case for a truncated normal distribution, we show that $\overline{u}$ is characterized by at most two cutoffs $b^0 \leq a^1$ such that the bundle $(0,1)$ is sold to types to the left of $b^0$ and the grand bundle is sold to types to the right of $a_1$; moreover, $1-\overline{u}$ coincides with $\zeta$ between the two cutoffs. For a truncated normal distribution, $\zeta$ is strictly concave and peaks at one half of the peak of the density. If $\zeta$ peaks at zero, the smaller cutoff equals zero ($b^0=0$), and the optimal mechanism features probabilistic bundling with an infinite menu size. In contrast, when the peak of $\zeta$ is positive, our simulation reveals that the two cutoffs coincide, and the optimal mechanism features mixed bundling with no randomization.\footnote{\citet{schmalensee1984gaussian} pioneers the investigation of bundling strategies for bivariate normal distributions (without truncation) but restricted attention to deterministic mechanisms.} Second, when $\zeta$ is convex, which is the case for a truncated Pareto distribution, the mechanism sells at most three bundles: above the 45-degree line, the grand bundle for types to the right of a cutoff $a^1$ and another bundle $(q,1)$ with $q<1$ for types to the left of $a^1$. Again, when $q>0$, the optimal mechanism features what we call ``probabilistic mixed bundling''. In general, as long as $\zeta''$ does not switch sign too frequently, FOC and SOC reduce solving the optimal indirect utility function to solving an optimization problem with few variables, which is reminiscent of solving partitional signals for information design.\footnote{See, for instance, \citet{dworczak2019simple,kleiner2021extreme,chen2023information,bergemann2024bidder}, and \citet{lyu2023coarse}.} Moreover, based on Corollary 1 of DDT, we offer a sufficient condition for the optimality of $\overline{u}$ to pin down these remaining variables.

For general allocation space that reflects complementarity or substitution between the two goods, FOC and SOC also deliver novel insights. Specifically, we parametrize the buyer's utility from consuming the grand bundle as $k(x_{1}+x_{2})$ and consider pricing bundles in the convex hull $\mathcal{Q}$ of $\{(0,0), (1,0), (0,1),(k,k)\}$. 
As a result, the two goods are perfect substitutes when $k=1/2$, partial substitutes when $1/2<k<1$, and complements when $k>1$. For SRS distributions, the designer still invokes the allocations at the top boundary of $\mathcal{Q}$ to screen the buyer's types; hence, the threshold function $\zeta_k$ is computed via integrating the MR density along the outward normal vector $\mathbf{n}_{k}$ of the top allocation boundary. 

For $k=1/2$, we have an MR deficit at the top-left corner as shown in panel (a) of Figure \ref{Fig_Uniform_k}; hence, SOC implies that $(0,1)$ must be active, similar to the grand bundle for $k=1$. Moreover, FOC and SOC imply that the optimal mechanism must only sell $(0,1)$ and $(1,0)$, as documented in \citet{pavlov2011optimal}. When $k>1$, the grand bundle is again active, but since the allocation boundary is tilted counterclockwise relative to the case of $k=1$, FOC is satisfied before reaching the cutoff $(2-\sqrt{2})/3$; hence, the bundle $(0,1)$ can no longer satisfy FOC and become inactive. For $k$ sufficiently close to $1$, the optimal mechanism features probabilistic mixed bundling (panel (c) of Figure \ref{Fig_Uniform_k}), which complements the optimality of pure bundling for $k=2$ established by \citet{kash2016optimal}. In Section \ref{Sec_SubsComp}, we characterize optimal mechanisms for different values of $k$ and shapes of $\zeta_k$. Tables \ref{Tab_1} and \ref{Tab_2} contrast our findings with prior work, showing how FOC and SOC break new ground.

\begin{figure}[ht]
\centering
\begin{tikzpicture}[baseline={(0,0)},x=4.5cm,y=4.5cm]
\fill [blue!10] (0,2/3) -- (1/3,1) -- (0,1);
\fill [red!10]  (0,0.5) -- (0.5,0.5) -- (2/3,2/3) -- (0,2/3);
\draw (0,0) node [below left] {0} -- (1,0) node [below] {1} -- (1,1) -- (0,1) node [left] {1} -- (0,0);
\draw [dotted] (0,0) -- (0.5,0.5);
\draw [dashed] (0.5,0.5) -- (1,1);
\draw [dotted] (0,2/3) -- (1/3,1);
\draw [thick, green] (0,2/3) node [left, black] {$\frac{2}{3}$} -- (2/3,2/3);
\draw [thick, red] (0,0.5) -- (0.5,0.5) -- (0.5,0);
\draw (0.4,0.8) node {$(0,1)$};
\draw (0.8,0.4) node {$(1,0)$};
\draw [->] (0.85,0.85) -- (0.9,0.9) node [midway, left] {$\mathbf{n}_{k}$};
\draw (0.5,-0.2) node {\footnotesize (a) $k=1/2$};
\end{tikzpicture}
\begin{tikzpicture}[baseline={(0,0)},x=4.5cm,y=4.5cm]
\fill [blue!10] (0,2/3) -- (1/6,1) -- (0,1);
\fill [blue!10] (2/3,2/3) -- (5/6,2/3) -- (1,1);
\fill [red!10]  (0,2/3) -- (2/3,2/3) -- (3/8,3/8) -- (0.25,0.5) -- (0,0.5);
\draw (0,0) node [below left] {0} -- (1,0) node [below] {1} -- (1,1) -- (0,1) node [left] {1} -- (0,0);
\draw [dotted] (0,0) -- (1,1);
\draw [thick, green] (0,2/3) node [left, black] {$\frac{2}{3}$} -- (5/6,2/3);
\draw [dotted] (0,2/3) -- (1/6,1);
\draw [dotted] (5/6,2/3) -- (1,1);
\draw [thick, red] (0,0.5) -- (0.25,0.5) -- (0.5,0.25) -- (0.5,0);
\draw [dashed] (0.25,0.5) -- (0.5,1);
\draw [dashed] (0.5,0.25) -- (1,0.5);
\draw (0.2,0.8) node {$(0,1)$};
\draw (0.8,0.2) node {$(1,0)$};
\draw (0.8,0.8) node {$(1,1)$};
\draw [->] (0.425,0.85) -- (0.45,0.9) node [midway, left] {$\mathbf{n}_{k}$};
\draw (0.5,-0.2) node {\footnotesize (b) $1/2<k<1$};
\end{tikzpicture}%
\begin{tikzpicture}[baseline={(0,0)},x=4.5cm,y=4.5cm]
\fill [red!10] (0,0.8) -- (2/15,11/15) -- (0,1);
\fill [blue!10] (2/15,11/15) -- (4/15,2/3) -- (1/6,2/3);
\fill [red!10] (4/15,2/3) -- (2/3,2/3) -- (0.5,0.5) -- (0.4,0.6);
\fill [blue!10] (2/3,2/3) -- (1,2/3) -- (1,1);
\draw (0,0) node [below left] {0} -- (1,0) node [below] {1} -- (1,1) -- (0,1) node [left] {1} -- (0,0);
\draw [dotted] (0,0) -- (1,1);
\draw [dotted] (0,2/3) -- (1/6,2/3);
\draw [thick, green] (1/6,2/3) -- (1,2/3);
\draw (0,2/3) node [left] {$\frac{2}{3}$};
\draw [dotted] (1/6,2/3) -- (0,1);
\draw [thick, red] (0,0.8) -- (0.4,0.6) -- (0.6,0.4) -- (0.8,0);
\draw [dashed] (0.4,0.6) -- (0.2,1);
\draw [dashed] (0.6,0.4) -- (1,0.2);
\draw [->] (0.3,0.8) -- (0.275,0.85) node [midway, right] {$\mathbf{n}_{k}$};
\draw (0.8,0.8) node {$(k,k)$};
\draw (0.15,0.85) node {$(q_{1},q_{2})$};
\draw (0.85,0.15) node {$(q_{2},q_{1})$};
\draw (0.5,-0.2) node {\footnotesize (c) $k>1$};
\end{tikzpicture}
\caption{Substitutes or complements with uniformly distributed types.}
\label{Fig_Uniform_k}
\end{figure}
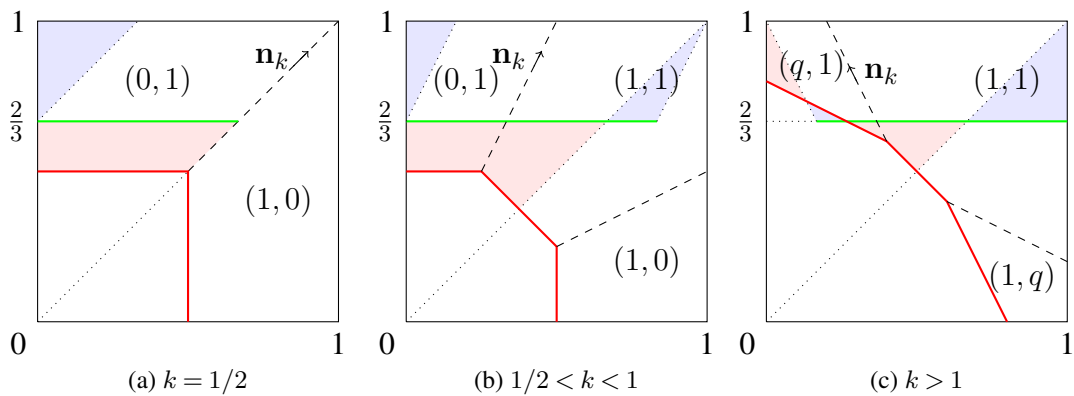

\begin{table}[ht]
\centering
\caption{An overview of results in the literature.}
\label{Tab_1}
\resizebox{\textwidth}{!}{
\begin{threeparttable}
\begin{tabular}{ccccc}
\hline
 & Additive & Perfect substitutes & Partial substitutes & Complements \\
\hline
Uniform & Mixed bundling\tnote{1} & Separate selling\tnote{2} & & Pure bundling\tnote{3} \\
Nondecreasing $\zeta$ & No randomization\tnote{1} & & & \\
Convex $\zeta$ & \\
Concave $\zeta$ & & & & \\
Nonincreasing and concave $\zeta$ & Infinite menu size\tnote{4} & & & \\
\hline
\end{tabular}
\begin{tablenotes}
\item [1] \citet{manelli2006bundling}.
\item [2] \citet{pavlov2011optimal}.
\item [3] \citet{kash2016optimal} study a special case where $k=2$.
\item [4] \citet{giannakopoulos2018selling}.
\end{tablenotes}
\end{threeparttable}
}
\end{table}

\begin{table}[ht]
\centering
\caption{An overview of results in this paper.}
\label{Tab_2}
\resizebox{\textwidth}{!}{
\begin{tabular}{ccccc}
\hline
 & Additive & Perfect substitutes & Partial substitutes & Complements \\
\hline
Uniform & Mixed bundling & Separate selling & Mixed bundling & Menu size $\leq 3$ \\
Nondecreasing $\zeta$ & No randomization & Separate selling & No randomization & Menu size $\leq 3$\\
Convex $\zeta$ & Menu size $\leq 3$ & Menu size $\leq 4$ & Menu size $\leq 5$ & Menu size $\leq 3$ \\
Concave $\zeta$ & \multicolumn{4}{c}{\multirow{2}{*}{Finite or infinite menu size characterized by $\zeta$}} \\
Nonincreasing and concave $\zeta$ & \\
\hline
\end{tabular}
}
\end{table}

Beyond two items, FOC and SOC reveal novel features of optimal selling mechanisms. First, if allocations and types are both drawn from convex polytopes, we show that any vertex bundle that is strictly preferred by a top boundary type to any other bundle must be active. In particular, any optimal menu must include the grand bundle when the buyer's values are superadditive; likewise, it must include each separate item when the values are strictly subadditive. Second, for any convex type space, we offer a sufficient condition for the existence of an exclusion set, i.e., a positive-measure set of buyer types must purchase the zero bundle. This result provides a counterpart of \citet[Proposition 1]{armstrong1996multiproduct} for a multiproduct monopoly with zero cost and forms the basis for the construction of exclusion mechanisms in Section 7 of DDT. 
Third, we use FOC and SOC to verify the optimality of specific mechanisms. For additive buyers, we establish the suboptimality of pure bundling for i.i.d. strictly regular distributions on $[\underline{x},\underline{x}+1]^{N}$ with sufficiently large $N$, generalizing the result which DDT obtain for uniform distributions. For unit-demand buyers, we establish the suboptimality of separate selling for nonincreasing i.i.d. densities supported on $[\underline{x},\underline{x}+1]^{N}$, complementing the optimality of separate selling with sufficiently low $\underline{x}$ established by \citet{hashimoto2025selling}.


The sequel of the paper is organized as follows. 
\Cref{Sec_Model} introduces the model. \Cref{Sec_MainResult} presents our main theorem. \Cref{Sec_Additive} applies the main theorem to our leading case where the seller has two additive-value products. \Cref{Sec_SubsComp} extends the analysis to substitutes or complements. \Cref{Sec_General} states general results for an arbitrary number of products. \Cref{Sec_Conclusion} concludes. Omitted proofs and discussions are relegated to the Supplemental Appendix.

\section{Model}
\label{Sec_Model}

A seller (she) produces and sells $N\geq 1$ products to a unit mass of buyers (he). The seller's technology allows her to produce any vector of quantity $\mathbf{q}$ at zero cost from a feasible set $\mathcal{Q}$.\footnote{As a convention, we use boldface letters such as $\mathbf{x}$ to denote vectors, and use $x_{n}$ to denote the $n$-th component of $\mathbf{x}$. We use $\mathbf{e}_{n}$ to denote the $n$-th standard basis vector, and $\mathbf{c}$ to denote a vector with all components equal to $c$. We also use $\geq$ to represent the usual component-wise order between vectors.} We will refer to $\mathbf{q}$ as a bundle or an allocation and $\mathcal{Q}$ as the allocation space. Each buyer draws a private type $\mathbf{x}$ from the type space $\mathcal{X}$ and receives utility $\mathbf{x}\cdot \mathbf{q}$ from consuming bundle $\mathbf{q}$. Assume that both $\mathcal{X}$ and $\mathcal{Q}$ are convex and compact subsets of $\mathbb{R}_{+}^{N}=\{\mathbf{z}\in\mathbb{R}^{N}:\mathbf{z}\geq\mathbf{0}\}$, each with a nonempty interior and a piecewise smooth boundary. Also assume that $\mathbf{0}\in \mathcal{Q}$, so buyers can always choose to purchase nothing. The distribution of $\mathbf{x}$ is characterized by a probability density function (p.d.f.) $f$ that is continuously differentiable on $\mathcal{X}$.

The seller designs a price schedule (or a taxation mechanism) $p:\mathcal{Q}\mapsto \mathbb{R}$ that assigns a take-it-or-leave-it price to each bundle. Buyers can then freely choose any $\mathbf{q}\in \mathcal{Q}$ and pay the corresponding price without reporting their types. A price schedule is \emph{feasible} if
\begin{equation}
\label{IR_p}
p(\mathbf{0}) =0,
\tag{IR$_p$}
\end{equation}
which is essentially a participation constraint ensuring that buyers always get nonnegative utility.\footnote{In fact, we only need $p(\mathbf{0})\leq 0$ to guarantee participation. If $p(\mathbf{0})<0$, the seller can increase all prices by a constant to obtain a (weakly) higher revenue. Hence, imposing $p(\mathbf{0})=0$ is without loss of generality.}
The indirect utility induced by a price schedule $p$ is the convex conjugate of $p$:
\begin{equation}
\label{IC_p}
u_{p}(\mathbf{x})=\sup_{\mathbf{q}\in \mathcal{Q}}\{\mathbf{x}\cdot\mathbf{q}-p(\mathbf{q})\}.
\tag{IC$_{p}$}
\end{equation}
This transformation serves as the analogue of the incentive-compatibility constraint for direct mechanisms.

Since $u_{p}$ is convex, it is differentiable almost everywhere. By standard arguments, the seller's total revenue is
\begin{equation*}
\TR (p)=\int_{\mathcal{X}}[\mathbf{x}\cdot \nabla u_{p}(\mathbf{x})-u_{p}(\mathbf{x})]f(\mathbf{x})\dd\mathbf{x}.
\end{equation*}
Applying the divergence theorem, we can reformulate the total revenue as a functional of $u_{p}$:
\begin{equation*}
\TR (p)=\int_{\mathcal{X}}u_{p}\dd \mu,
\end{equation*}
where $\mu$ is a signed Radon measure defined as follows: For any measurable $\mathcal{A}\subseteq \mathcal{X}$,
\begin{equation*}
\mu(\mathcal{A})=\int_{\mathcal{A}\cap \bd (\mathcal{X})}(\mathbf{x}\cdot\mathbf{n}_{\mathbf{x}})f(\mathbf{x})\dd \sigma (\mathbf{x})-\int_{\mathcal{A}}\phi(\mathbf{x})\dd \mathbf{x},
\end{equation*}
where $\mathbf{n}_{\mathbf{x}}$ is the outward unit normal to the boundary $\bd (\mathcal{X})$, $\sigma (\mathbf{x})$ is the surface integral on $\bd (\mathcal{X})$, $\phi (\mathbf{x})=\mathbf{x}\cdot \nabla f(\mathbf{x})+(N+1)f(\mathbf{x})$.\footnote{Since $\bd (\mathcal{X})$ is piecewise smooth, the set of points where $\mathbf{n}_{\mathbf{x}}$ is not uniquely defined is $\mu$-negligible.} When $\mathcal{X}$ is a hyperrectangle, $\mu$ coincides with the transformed measure defined by DDT, except that there is no mass point at the bottom-left corner of $\mathcal{X}$. As a result, $\mu (\mathcal{X})$ equals $-1$ in our model but zero in DDT.

A mechanism is \emph{optimal} if it solves the seller's revenue-maximization problem:
\begin{equation}
\label{Eqn_Problem_p}
\begin{gathered}
\max_{p}\int_{\mathcal{X}}u_{p}\dd \mu,\\
\text{subject to \eqref{IR_p} and \eqref{IC_p}.}
\end{gathered}
\tag{Problem $p$}
\end{equation}

In \Cref{Sec_DirectMech}, we provide an equivalent description of the model that recasts the problem as finding optimal direct mechanisms.


\section{Main Theorem}
\label{Sec_MainResult}

\paragraph{Demand Set and Marginal Revenue.}

The \emph{demand set} of a bundle $\mathbf{q}$, denoted by $\mathcal{D}(\mathbf{q})$, is the set of buyers who are willing to purchase $\mathbf{q}$. Formally,
\begin{equation*}
\mathcal{D}(\mathbf{q})=\{\mathbf{x}\in \mathcal{X}: \mathbf{x}\cdot\mathbf{q}-p(\mathbf{q})\geq \mathbf{x}\cdot\hat{\mathbf{q}}-p(\hat{\mathbf{q}}),\ \forall\, \hat{\mathbf{q}}\in \mathcal{Q}\}.
\end{equation*}
Since $u_{p}$ is the convex conjugate of $p$, $\mathbf{x}\in \mathcal{D}(\mathbf{q})$ implies $\mathbf{q}\in \partial u_{p}(\mathbf{x})$. Moreover, the convex conjugate of $u_{p}$ is the lower convex envelope of $p$, denoted by $\overline{p}:\mathcal{Q}\mapsto \mathbb{R}$. $p$ and $\overline{p}$ induce the same indirect utility function, and $\mathcal{D}(\mathbf{q})$ is simply the subdifferential of $\overline{p}$ at $\mathbf{q}$:
\begin{equation*}
\mathcal{D}(\mathbf{q})=\partial \overline{p}(\mathbf{q}).
\end{equation*}
Therefore, $\mathcal{D}(\mathbf{q})$ is a nonempty, compact, and convex subset of $\mathcal{X}$ \citep[Section 23]{rockafellar1970}. We extend the notation $\mathcal{D}$ to a set of bundles $Q\subseteq \mathcal{Q}$ by $\mathcal{D}(Q)=\bigcup_{\mathbf{q}\in Q} \mathcal{D}(\mathbf{q})$.

The \emph{marginal revenue} of a set of bundles $Q$ is defined as the marginal change in total revenue with respect to a uniform price change on all bundles in $Q$. Suppose that $\mathbf{0}\notin Q$. Changing the prices of all bundles in $Q$ by some $\varepsilon$ yields a new price schedule $\overline{p}+\varepsilon \mathbb{I}_{Q}$, which induces the following indirect utility function
\begin{equation*}
u_{\overline{p}+\varepsilon \mathbb{I}_{Q}}(\mathbf{x})=\sup_{\mathbf{q}\in \mathcal{Q}}\{\mathbf{x}\cdot\mathbf{q}-(\overline{p}(\mathbf{q})+ \varepsilon \mathbb{I}_{Q}(\mathbf{q}))\},
\end{equation*}
and total revenue
\begin{equation*}
\TR (\overline{p}+\varepsilon \mathbb{I}_{Q})=\int_{\mathcal{X}}u_{\overline{p}+\varepsilon \mathbb{I}_{Q}}\dd \mu.
\end{equation*}
Whenever the limits exist, we define the one-sided marginal revenue, $\MR_{-}$ and $\MR_{+}$, as
\begin{equation*}
\begin{aligned}
\MR_{-}(Q) & =\lim_{\varepsilon\to 0-}\frac{\TR (\overline{p}+\varepsilon \mathbb{I}_{Q})-\TR (p)}{\varepsilon}=\lim_{\varepsilon\to 0-}\int_{\mathcal{X}}\left(\frac{u_{\overline{p}+\varepsilon \mathbb{I}_{Q}}-u_{p}}{\varepsilon}\right)\dd \mu, \\
\MR_{+}(Q) & =\lim_{\varepsilon\to 0+}\frac{\TR (\overline{p}+\varepsilon \mathbb{I}_{Q})-\TR (p)}{\varepsilon}=\lim_{\varepsilon\to 0+}\int_{\mathcal{X}}\left(\frac{u_{\overline{p}+\varepsilon \mathbb{I}_{Q}}-u_{p}}{\varepsilon}\right)\dd \mu.
\end{aligned}
\end{equation*}
Intuitively, $\MR_{-}$ (or $\MR_{+}$) is the marginal revenue of a small price reduction (or increase) on $\overline{p}$. If $\MR_{-}(Q)=\MR_{+}(Q)$, we say that $\MR (Q)$ exists and define its value as the one-sided marginal revenue. We will also use $\MR_{\varepsilon-}$ and $\MR_{\varepsilon+}$ to represent the one-sided marginal revenue under the perturbed price schedule $\overline{p}+\varepsilon \mathbb{I}_{Q}$. They capture the marginal revenue change in the neighborhood of $\overline{p}$.

\Cref{Lem_MR} characterizes $\MR_{-}$ and $\MR_{+}$ and gives a condition for the existence of $\MR$.

\begin{lemma}
\label{Lem_MR}
For any feasible mechanism, if $Q \subseteq \mathcal{Q}$ is Borel measurable with $\mathbf{0} \notin Q$, then
\begin{equation*}
\begin{aligned}
\MR_{-}(Q) & =-\mu (\mathcal{D}(Q)), \\
\MR_{+}(Q) & =1+\mu (\mathcal{D}(Q^{c})).
\end{aligned}
\end{equation*}
Moreover, if $\mu (\mathcal{D}(Q)\cap \mathcal{D}(Q^{c}))=0$, then $\MR (Q)$ exists and satisfies
\begin{equation*}
\MR (Q)=-\mu (\mathcal{D}(Q))=1+\mu (\mathcal{D}(Q^{c})).
\end{equation*}
\end{lemma}

\paragraph{Statement of the Main Theorem.}

If $p$ is optimal, $\TR (\overline{p}+\varepsilon \mathbb{I}_{Q})$ as a function of $\varepsilon$ reaches its global maximum at $\varepsilon=0$.
This implies the following two sets of necessary conditions.

First-order conditions (FOCs), which consists of three statements:
\begin{gather}
\label{Eqn_MR-}
\MR_{-}(Q) \geq 0, \tag{FOC$_{-}$} \\
\label{Eqn_MR+}
\MR_{+}(Q) \leq 0, \tag{FOC$_{+}$} \\
\label{Eqn_MR}
\MR (Q) =0, \text{ provided that }\mu (\mathcal{D}(Q)\cap \mathcal{D}(Q^{c}))=0. \tag{FOC}
\end{gather}

Second-order condition (SOC):
\begin{equation}
\label{Eqn_SOC}
\forall\,\overline{\varepsilon}>0,\ \exists\, \varepsilon\in (-\overline{\varepsilon},0), \text{ such that } \MR_{\varepsilon-}(Q)\geq 0. \tag{SOC}
\end{equation}
To see the necessity of \eqref{Eqn_SOC}, suppose there exists $\overline{\varepsilon}>0$ such that $\MR_{\varepsilon-}(Q)<0$ for any $\varepsilon\in (-\overline{\varepsilon},0)$. Then the following lemma, which is an extension of \citet[Theorem 1]{miller1986some}, suggests that $\TR (\overline{p}+\varepsilon\mathbb{I}_{Q})$ is strictly decreasing on $(-\overline{\varepsilon},0)$, a contradiction to the optimality of $p$.\footnote{A complete list of SOCs should consist of four inequalities: For any $\overline{\varepsilon}>0$, there exist $\varepsilon_{1},\varepsilon_{2}\in (-\overline{\varepsilon},0)$ and $\varepsilon_{3},\varepsilon_{4}\in (0,+\overline{\varepsilon})$, such that $\MR_{\varepsilon_{1}-}(Q),\MR_{\varepsilon_{2}+}(Q)$ are nonnegative and $\MR_{\varepsilon_{3}-}(Q),\MR_{\varepsilon_{4}+}(Q)$ are nonpositive. All of them can be proved analogously.}

\begin{lemma}
\label{Lem_MV}
Let $\psi$ be a continuous function on $[a,b]$. If, for each $x\in [a,b]$, one of the one-sided derivatives $\psi_{-}'(x)$ or $\psi_{+}'(x)$ exists and is negative, then $\psi$ is strictly decreasing.
\end{lemma}

Our main result, \Cref{Thm_MR}, is simply a collection of these necessary conditions.

\begin{theorem}
\label{Thm_MR}
For any optimal mechanism, if $Q \subseteq \mathcal{Q}$ is Borel measurable with $\mathbf{0} \notin Q$, then both \emph{(\hyperref[Eqn_MR]{FOCs})} and \eqref{Eqn_SOC} hold.
\end{theorem}

\begin{proof}
\eqref{Eqn_MR-} and \eqref{Eqn_MR+} follow from the definitions of $\MR_{-}$ and $\MR_{+}$, respectively. \eqref{Eqn_MR} follows from \Cref{Lem_MR} and $\mu (\mathcal{X})=-1$. \eqref{Eqn_SOC} follows from \Cref{Lem_MV} and the fact that $\TR (\overline{p}+\varepsilon\mathbb{I}_{Q})$ is continuous in $\varepsilon$.
\end{proof}

\paragraph{When is MR well-defined?}

To provide intuition for \Cref{Lem_MR}, suppose that the set of feasible allocations $\mathcal{Q}$ is finite, and that each allocation is sold to a positive mass of buyers in a mechanism. Then
\begin{equation*}
\TR (p)=\int_{\mathcal{X}}\max_{\mathbf{q}\in \mathcal{Q}}\{\mathbf{x}\cdot\mathbf{q}-p(\mathbf{q})\} \dd \mu,
\end{equation*}
and by the envelope theorem,
\begin{equation*}
\MR (\mathbf{q})=\frac{\partial \TR (p)}{\partial p(\mathbf{q})}=\int_{\mathcal{D}(\mathbf{q})}-1\dd \mu =-\mu (\mathcal{D}(\mathbf{q})).
\end{equation*}
However, when a demand set has zero measure with respect to $f$, this shortcut may fail. Therefore, the qualification $\mu (\mathcal{D}(Q)\cap \mathcal{D}(Q^{c}))=0$ for \Cref{Lem_MR} and \eqref{Eqn_MR} cannot be omitted. Below is an example.

\begin{example}[Uniform Distribution Revisited]
\label{Exp_Uniform_2}
Let $N=2$, $\mathcal{X}=[\underline{x},\underline{x}+1]^2$ with $\underline{x}>0$, $\mathcal{Q}=[0,1]^2$, and $f(\mathbf{x})=1$. \citet{pavlov2011optimal} proves that bundles like $(q,1)$ and $(1,q)$ with $q\in (0,1)$ are active in the optimal mechanism when $\underline{x}$ is sufficiently small, and the optimal mechanism becomes pure bundling when $\underline{x}$ is large.
\end{example}

\begin{figure}[ht]
\centering
\begin{tikzpicture}[baseline={(0,0)},x=4.5cm,y=4.5cm]
\draw (0,2/3) -- (0,0) node [below left] {$\underline{x}$} -- (1,0) node [below] {$\underline{x}+1$} -- (1,1) -- (0,1) node [left] {$\underline{x}+1$};
\draw [ultra thick, blue] (0,1) -- (0,2/3) node [midway, left, black] {$\mathcal{D}(\mathbf{e}_{2})$};
\draw [dotted] (0,0) -- (1,1);
\draw [thick, red] (0,2/3) -- (1/3,1/2) -- (1/2,1/3) -- (2/3,0);
\draw (1/6,0.8) node {$(q,1)$};
\draw (0.8,1/6) node {$(1,q)$};
\draw (0.8,0.8) node {$\mathbf{1}$};
\draw [dashed] (1/3,1/2) -- (1/3,1);
\draw [dashed] (1/2,1/3) -- (1,1/3);
\draw (0.5,-0.2) node {\footnotesize (a) $\underline{x}$ is small};
\end{tikzpicture}%
\begin{tikzpicture}[baseline={(0,0)},x=4.5cm,y=4.5cm]
\draw (0,2/3) -- (0,0) node [below left] {$\underline{x}$} -- (1,0) node [below] {$\underline{x}+1$} -- (1,1) -- (0,1) node [left] {$\underline{x}+1$};
\draw [ultra thick, blue] (0,1) -- (0,2/3) node [midway, left, black] {$\mathcal{D}(\mathbf{e}_{2})$};
\draw [dotted] (0,0) -- (1,1);
\draw [thick, red] (0,2/3) -- (2/3,0);
\draw (0.8,0.8) node {$\mathbf{1}$};
\draw (0.5,-0.2) node {\footnotesize (b) $\underline{x}$ is large};
\end{tikzpicture}
\caption{Uniform distribution on $[\underline{x},\underline{x}+1]^2$ with $\underline{x}>0$.}
\label{Fig_Uniform_Positive}
\end{figure}

As depicted by the blue lines in \Cref{Fig_Uniform_Positive}, $\mathcal{D}(\mathbf{e}_{2})$ is a subset of the boundary $\{0\}\times [0,1]$. It can be verified that $\mu (\mathcal{D}(\mathbf{e}_{2}))<0$. If we let $Q=\{(q,1)\}$ in panel (a), then $\mu (\mathcal{D}(Q)\cap \mathcal{D}(Q^{c}))=\mu (\mathcal{D}(\mathbf{e}_{2}))<0$. In this case, $\MR_{-}(Q)\neq \MR_{+}(Q)$, meaning that $\MR (Q)$ is not well-defined. Intuitively, increasing and decreasing the price of $(q,1)$ have distinct marginal effects on total revenue. As a result, \eqref{Eqn_MR} is not applicable; instead, we must rely on the one-sided conditions \eqref{Eqn_MR-} and \eqref{Eqn_MR+}.

In a related paper, \citet{kash2016optimal} interpret $-\mu (\mathcal{D}(\mathbf{q}))$ as the marginal revenue of $\mathbf{q}$. Their Section 4.1 claims that $\mu (\mathcal{D}(\mathbf{q}))=0$ for \emph{each} $\mathbf{q}\in \mathcal{Q}$ is necessary for optimality, and their Section 5.1 establishes this result for bundles that are sold to a positive mass of buyers. However, their claim fails when $\mathcal{D}(\mathbf{q})$ has zero measure, as illustrated by $\mu (\mathcal{D}(\mathbf{e}_{2}))<0$ in \Cref{Exp_Uniform_2}.
Moreover, their claim alone does not provide the same power for finding optimal mechanisms as \Cref{Thm_MR} due to two reasons. First, our first-order conditions \hyperref[Eqn_MR]{(FOCs)} are formulated with respect to \emph{sets of bundles} rather than individual bundles. This added generality is essential for identifying zero-measure demand sets (see part (ii) of Propositions \ref{Prop_Additive} and \ref{Prop_SubsComp}). Second, we impose an additional second-order condition \eqref{Eqn_SOC}, which is crucial for characterizing the endpoints of demand sets (see Propositions \ref{Prop_Additive} and \ref{Prop_SubsComp}).

\paragraph{A Sufficient Condition for MR=0.}

We present a simple sufficient condition for the qualification in \eqref{Eqn_MR}. For any nonzero vector $\mathbf{n}\in \mathbb{R}^{N}$, denote by $\mathcal{S}(\mathbf{n})$ the \emph{exposed face} of $\mathcal{X}$ with $\mathbf{n}$ as its outward normal:
\begin{equation*}
\mathcal{S}(\mathbf{n})=\{\mathbf{x}\in \mathcal{X}:(\mathbf{x}-\hat{\mathbf{x}})\cdot \mathbf{n}\geq 0 \quad \forall\hat{\mathbf{x}}\in \mathcal{X}\}.
\end{equation*}
The (signed) distance between $\mathcal{S}(\mathbf{n})$ and $\mathbf{0}$ equals $\mathbf{x}\cdot \mathbf{n}$ for any $\mathbf{x}\in \mathcal{S}(\mathbf{n})$ and is denoted by $s(\mathbf{n})$. Denote by $\overline{\bd}(\mathcal{X})$ and $\underline{\bd}(\mathcal{X})$ the \emph{top boundary} and the \emph{bottom boundary} of $\mathcal{X}$, respectively:
\begin{equation*}
\begin{aligned}
\overline{\bd}(\mathcal{X}) & =\bigcup_{s(\mathbf{n})>0}\mathcal{S}(\mathbf{n}), \\
\underline{\bd}(\mathcal{X}) & =\bigcup_{s(\mathbf{n})< 0}\mathcal{S}(\mathbf{n}).
\end{aligned}
\end{equation*}
Note that $\overline{\bd}(\mathcal{X}) \cap \underline{\bd}(\mathcal{X})$ may be nonempty.


\begin{lemma}
\label{Cor_MR}
Suppose that $\mu (\underline{\bd}(\mathcal{X}))=0$, which holds in particular when $\mathbf{0}\in \mathcal{X}$. Then for any optimal mechanism, $\MR (Q)=0$ holds for any measurable $Q$ satisfying $\mathbf{0}\notin Q$.
\end{lemma}

Recall that $\mathcal{X}$ is convex. If $\mathbf{0}\in \mathcal{X}$, then there must be $\mathbf{x}\cdot \mathbf{n}_{\mathbf{x}}\geq 0$ for any boundary point $\mathbf{x}$, which implies $\underline{\bd}(\mathcal{X})=\varnothing$. Thus, we can ignore the qualification in \eqref{Eqn_MR}. Sections \ref{Sec_Additive} and \ref{Sec_SubsComp} will focus on this special case.

\subsection{Discussion of the Literature}

Before proceeding to applications, we briefly discuss how the MR approach is connected to the literature.

\subsubsection*{Relationship to \citet{daskalakis2017strong}}


DDT establish the strong duality between multiproduct monopoly and an optimal transport problem. On one hand, an optimal transportation in DDT serves as ``a certificate of optimality'' for a candidate optimal mechanism that we have in mind. 
On the other hand, if we want to solve the optimal mechanism, DDT's approach requires first conjecturing a measure (say $\mu'$) that convexly dominates $\mu$, and next solving for the optimal transport plan from the positive part to the negative part of this new measure. Both steps take place in high-dimensional spaces with no general recipe available. In contrast, we will demonstrate in \Cref{Sec_Additive} that applying \Cref{Thm_MR} is tractable and yields novel insights.
Moreover, as will become clear in Sections \ref{Sec_SubsComp} and \ref{Sec_General}, \Cref{Thm_MR} allows us to handle nonadditive products and general convex type and allocation spaces, whereas DDT's duality requires further generalizations, such as \citet{kash2016optimal}, to handle such settings.


\subsubsection*{Relationship to \citet{manelli2006bundling}}

\citet{manelli2006bundling} study the canonical setting where $\mathcal{X}=\mathcal{Q}=[0,1]^{N}$. Their Theorem 1 states that, \emph{if a mechanism is optimal among all deterministic mechanisms}, then $\mu(\mathcal{D}(\mathbf{q}))=0$ for any deterministic and nonzero $\mathbf{q}$. \Cref{Cor_MR} immediately extends their result to all nonzero $\mathbf{q}$ in any optimal mechanism.
The main result of \citet{manelli2006bundling} is their Theorem 4, which provides sufficient conditions for the optimality of deterministic mechanisms when $N=2$. Our \Cref{Sec_Additive} provides a unified treatment of the same setting and yields a comparable result in \Cref{Cor_Inc}. However, this is only one application of our framework; the remaining corollaries are not covered by \citet{manelli2006bundling}.


\subsubsection*{Relationship to \citet{bulow1989simple} and \citet{wilson1993nonlinear}}

Our paper extends the insights of \citet{bulow1989simple} to multiproduct monopoly, but differs in two key respects. First, we define MR via price perturbations, while \citet{bulow1989simple} define it with respect to a quantity change.\footnote{This is because the total revenue in a multiproduct monopoly generally cannot be expressed as the integral of the allocation rule times the virtual value. \citet{rochet2003economics} use an auxiliary function to express total revenue as an integral of the allocation rule. However, this auxiliary function is endogenous to the optimal mechanism and is usually difficult to characterize. There are also multiple ways to define the notion of virtual value for a multiproduct monopoly; see, for example, \citet{carroll2017robustness}.} Second, we consider price perturbation of any feasible allocation, including stochastic ones, whereas \citet{bulow1989simple} consider marginal revenue only with respect to total quantity and thus abstract from randomization. \citet{wilson1993nonlinear} also studies nonlinear monopoly pricing using price schedules, but requires the seller to specify a price for every \emph{incremental} quantity. However, in multiproduct monopoly, pricing incremental quantities leaves the buyer with the choice of a purchasing path that minimizes total payment.\footnote{In this sense, Wilson's approach faces a difficulty similar to that of the network flow formulation in \citet{vohra2011mechanism}: Identifying the shortest path is generally intractable in high-dimensional spaces.} In contrast, our approach requires a price to be specified for every \emph{feasible} allocation and thereby avoids the shortest-path problem.




\section{Two Additive Products}
\label{Sec_Additive}

This section applies \Cref{Thm_MR} to the leading case where $\mathcal{X}=\mathcal{Q}=[0,1]^{2}$. That is, the seller has two products and a buyer needs at most one unit of each product. Products have \emph{additive} values, meaning that a type-$\mathbf{x}$ buyer's utility from consuming the grand bundle $\mathbf{1}$ is $\mathbf{x}\cdot \mathbf{1}=x_{1}+x_{2}$. In this setting, the seller's total revenue simplifies to
\begin{equation*}
\int_{\mathcal{X}}u_{p}\dd \mu =\int^{1}_{0}u_{p}(x,1)f(x,1)\dd x+\int^{1}_{0}u_{p}(1,x)f(1,x)\dd x-\int_{\mathcal{X}}u_{p}(\mathbf{x})\phi (\mathbf{x})\dd \mathbf{x}.
\end{equation*}
Throughout this section, we assume $f$ is \emph{strictly regular and symmetric (SRS)}, which means: (1) $f$ is strictly regular, that is, $\phi(\mathbf{x})>0$ for almost all $\mathbf{x}$; and (2) $f$ is symmetric, that is, $f(x_{1},x_{2})=f(x_{2},x_{1})$ for almost all $\mathbf{x}$.\footnote{
The requirement $\phi (\mathbf{x})\geq 0$ for almost all $\mathbf{x}\in \mathcal{X}$ is also called the hazard rate condition in the literature; see, for instance, \citet{mcafee1988multidimensional,manelli2006bundling,hart2010revenue,pavlov2011optimal,pavlov2011property,menicucci2015optimality,tang2017optimal,giannakopoulos2018selling}, and \citet{bikhchandani2022selling}. Here we require a strict inequality to ensure that our results hold for all optimal mechanisms and that the $\zeta$ function introduced later is well-defined.\label{Ftn_Reg}
}
We also classify bundles based on their demand sets.

\begin{definition}
A bundle $\mathbf{q}$ is called \emph{active} if, for every $\delta > 0$, $\mathcal{D}(\mathbf{q}+\delta B^{N})$ has positive measure with respect to $f$, where $\mathbf{q}+\delta B^{N}$ denotes the $N$-dimensional open ball of radius $\delta$ centered at $\mathbf{q}$; otherwise, $\mathbf{q}$ is \emph{inactive}. 
An active bundle $\mathbf{q}$ is \emph{pooling} if $\mathcal{D}(\mathbf{q})$ has positive measure with respect to $f$; otherwise, it is \emph{separating}.
\end{definition}
Intuitively, a bundle $\mathbf{q}$ is active if, no matter how small a neighborhood we take around it, a nonnegligible set of buyers selects bundles within that neighborhood. Moreover, if $\mathbf{q}$ itself is purchased by a positive mass of buyers, it is pooling; if buyers are continuously distributed in the neighborhood of $\mathbf{q}$, it is separating.
A mechanism involves separating bundles can be implemented only through an \emph{infinite} menu of prices. The resulting indirect utility function $u_{p}$ must be \emph{strictly convex} rather than piecewise linear around the demand sets of separating bundles.

In what follows, we present two sets of applications. \Cref{Sec_Additive_Menu} takes the type distribution as given and derives upper bounds on the number of active bundles (menu size) in optimal mechanisms, with some sufficiency results in \Cref{Sec_Sufficiency}. In contrast, \Cref{Sec_Additive_Bundle} takes a bundling strategy as given and provides necessary and/or sufficient conditions for its optimality.

\subsection{Optimal Mechanisms}
\label{Sec_Additive_Menu}




Under SRS, \Cref{Thm_MR} yields the following characterization of the indirect utility function, originally established in \citet[Lemma 2]{mcafee1988multidimensional}.

\begin{lemma}
\label{Lem_MM}
Suppose that $f$ is SRS. For any optimal symmetric mechanism, if $(q_{1},q_{2})$ is active, then $q_{1}=1$ or $q_{2}=1$, and $(q_{2},q_{1})$ is also active. Moreover,
\begin{equation}
\label{Eqn_MM}
u_{p}(\mathbf{x})=\max\{\overline{u}(x_{1})-(1-x_{2}),\overline{u}(x_{2})-(1-x_{1}),0\},
\end{equation}
where $\overline{u}(\cdot)=u_{p}(\cdot,1)=u_{p}(1,\cdot)$.
\end{lemma}

Equation \eqref{Eqn_MM} reveals substantial information on the structure of an optimal mechanism. Denote by $[a^{q},b^{q}]$ the interval of types such that all buyers in $[a^{q},b^{q}]\times\{1\}$ choose bundle $(q,1)$. Formally,
\begin{equation}
\label{Eqn_Endpoint}
[a^{q},b^{q}]=\{x\in [0,1]:\overline{u}_{-}'(x)\leq q\leq \overline{u}_{+}'(x)\}.
\end{equation}
When $q<1$, the demand set of $(q,1)$ is fully determined by $a^{q}$, $b^{q}$, and $\overline{u}$:
\begin{equation*}
\mathcal{D}(q,1)=\{\mathbf{x}\in \mathcal{X}: a^{q}\leq x_{1}\leq b^{q},\ x_{2}\geq \max (1-\overline{u}(x_{1}),x_{1})\}.
\end{equation*}
One can interpret $[a^{q},b^{q}]$ as the ``projection'' of $\mathcal{D}(q,1)$ to the boundary $[0,1]\times \{1\}$. Motivated by this interpretation, we can also project $\MR (q,1)$ to the same boundary:
\begin{equation*}
\MR (q,1)=-\mu (\mathcal{D}(q,1)) =\int_{\mathcal{D}(q,1)}\phi(\mathbf{x})\dd \mathbf{x}-\int_{\mathcal{D}(q,1)\cap \bd(\mathcal{X})}f(\mathbf{x})(\mathbf{x}\cdot\mathbf{n}_{\mathbf{x}})\dd \sigma (\mathbf{x})=\int^{b^{q}}_{a^{q}}\Phi (x_{1})\dd x_{1},
\end{equation*}
where $\Phi (x_{1})$ is the \emph{cumulative MR} on $\{x_{1}\}\times [x_{1},1]$:
\begin{equation}
\label{Eqn_Phi}
\Phi (x_{1})=\int^{1}_{\max(1-\overline{u}(x_{1}),x_{1})}\phi(x_{1},x_{2})\dd x_{2}-f(x_{1},1).
\end{equation}

\Cref{Prop_Additive} summarizes the implications of \Cref{Thm_MR} on the properties of $\Phi$.

\begin{proposition}
\label{Prop_Additive}
Suppose that $f$ is SRS. Then in any symmetric optimal mechanism:
\begin{enumerate}
\item \label{Prop_Additive_2} If $(q,1)$ is pooling, then \eqref{Eqn_MR} implies
\begin{equation}
\label{Eqn_IntPhi}
\MR (q,1)=\int^{b^{q}}_{a^{q}}\Phi (x_{1})\dd x_{1}=0.
\end{equation}
Moreover, \eqref{Eqn_SOC} implies $\Phi (a^{q})\geq 0$ unless $q=0$, and $\Phi (b^{q})\geq 0$ unless $q=1$.
\item \label{Prop_Additive_3} If $(q,1)$ is separating, then $a^{q}=b^{q}$, and \eqref{Eqn_MR} implies $\Phi (a^{q})=0$.
\end{enumerate}
\end{proposition}

An immediate consequence of \Cref{Prop_Additive} is that the grand bundle $\mathbf{1}$ is pooling. Otherwise, there exists an active bundle $(q,1)$ with $q<1$ and $b^{q}=1$. However, $\Phi(1)=-f(1,1)<0$, a contradiction to part \ref{Prop_Additive_2} of \Cref{Prop_Additive}.

\begin{corollary}
\label{Cor_GrandBundle}
In any symmetric optimal mechanism, the grand bundle $\mathbf{1}$ is pooling.
\end{corollary}

\Cref{Prop_Additive}, together with the functional form of $\Phi$, motivates the definition of an auxiliary function $\zeta :[0,1]\mapsto \mathbb{R}$, such that
\begin{equation}
\label{Eqn_Zeta}
\zeta (x_{1})=\sup\left\{\zeta\in [0,1]: \int^{1}_{\zeta }\phi(x_{1},x_{2})\dd x_{2}-f(x_{1},1)\geq 0\right\},
\end{equation}
where we let $\sup \varnothing =0$. By construction, the sign of $\Phi (x)$ is the same as the sign of $\zeta (x)-\max(1-\overline{u}(x),x)$. Since $f(x_{1},1)>0$, we have $\zeta<1$.\footnote{To the best of our knowledge, the function $\zeta$ has thus far appeared only in papers with independent distributions and additive valuations, without an explicit link to marginal revenue. See \citet[p. 13]{hart2010revenue}, DDT's Example 3, and \citet[Theorem 2]{giannakopoulos2018selling}. Apart from allowing for correlation, we will generalize the notion of $\zeta$ to study substitutes and complements in \Cref{Sec_SubsComp}.}

When $\zeta>0$, \eqref{Eqn_IntPhi} can be reformulated as
\begin{equation*}
\int^{b^{q}}_{a^{q}}\int^{1}_{\max(1-\overline{u}(x_{1}),x_{1})}\phi (\mathbf{x})\dd x_{2}\dd x_{1}=\int^{b^{q}}_{a^{q}}\int^{1}_{\zeta (x_{1})}\phi (\mathbf{x})\dd x_{2}\dd x_{1}.
\end{equation*}
As we treat $\phi$ as the ``density'' of MR, \eqref{Eqn_MR} boils down to the cancellation of ``mass'' between $\zeta (x)$ and $\max(1-\overline{u}(x),x)$ for $x\in [a^{q},b^{q}]$. In this sense, $\zeta$ can be regarded as a \emph{pointwise solution to cumulative $\MR=0$}, or a \emph{pointwise maximizer of total revenue}. 

In \Cref{Fig_Prop_2D}, the red curve represents $1-\overline{u}(x_{1})$ and its symmetric counterpart $1-\overline{u}(x_{2})$, and the green curve represents $\zeta$. Demand sets are separated by dashed lines. The areas colored in blue or red capture the extent to which $\max (1-\overline{u}(x_{1}),x_{1})$ deviates from $\zeta (x_{1})$. The integral of $\phi$ on the blue area is the ``surplus'' of MR, while the integral of $\phi$ on the red area is the ``deficit'' of MR. Panel (a) depicts a case in which a pooling bundle $(q,1)$ violates \eqref{Eqn_IntPhi} in part \ref{Prop_Additive_2} of \Cref{Prop_Additive}. Panel (b) depicts a case in which the left endpoint of a pooling bundle $a^{q}$ violates $\Phi (a^{q})\geq 0$ in part \ref{Prop_Additive_2} of \Cref{Prop_Additive}. Panel (c) illustrates part \ref{Prop_Additive_3} of \Cref{Prop_Additive}, namely, if all bundles in $[q,\hat{q}]\times \{1\}$ are separating, then $1-\overline{u}$ should coincide with $\zeta$ on $[a^{q},a^{\hat{q}}]$.

\begin{figure}[ht]
\centering
\begin{tikzpicture}[baseline={(0,0)},x=4.5cm,y=4.5cm]
\fill [red!10,domain=0.35:0.5]  plot ({\x}, {-1*(\x-0.2)^2+0.91}) -- (0.65,0.73) -- (0.65,0.65) -- (0.6,0.6) -- (0.35,0.8);
\draw (0,0) node [below left] {0} -- (1,0) node [below] {1} -- (1,1) -- (0,1) node [left] {1} -- (0,0);
\draw [dotted] (0,0) -- (1,1);
\draw (0,0.87) node [left] {$\zeta$};
\draw [domain=0:0.2, thick, green] plot ({\x}, {-1*(\x-0.2)^2+0.91});
\draw [domain=0.2:0.5, thick, green] plot ({\x}, {-1*(\x-0.2)^2+0.91});
\draw [thick, green] (0.5,0.82) -- (1,0.52);
\draw [thick, red] (0,0.8) -- (0.35,0.8) -- (0.6,0.6) -- (0.8,0.35) -- (0.8,0);
\draw [dashed] (0.35,0.8) -- (0.35,1);
\draw [dotted] (0.35,0.8) -- (0.35,0) node [below] {$a^{q}$};
\draw [dashed] (0.65,0.65) -- (0.65,1);
\draw [dotted] (0.65,0.65) -- (0.65,0) node [below] {$b^{q}$};
\draw (0.5,0.85) node {$(q,1)$};
\draw (0.5,-0.2) node {\footnotesize (a) $(q,1)$ violates \eqref{Eqn_MR}};
\end{tikzpicture}%
\begin{tikzpicture}[baseline={(0,0)},x=4.5cm,y=4.5cm]
\fill [blue!10,domain=0.3:0.4] plot ({\x}, {0.5*(\x-0.3)^2+0.6}) -- (0.4,0.76) -- (0.3,0.8);
\draw (0,0) node [below left] {0} -- (1,0) node [below] {1} -- (1,1) -- (0,1) node [left] {1} -- (0,0);
\draw [dotted] (0,0) -- (1,1);
\draw (0,0.645) node [left] {$\zeta$};
\draw [domain=0:1, thick, green] plot ({\x}, {0.5*(\x-0.3)^2+0.6});
\draw [thick, red] (0,0.8) -- (0.35,0.8) -- (0.6,0.6) -- (0.8,0.35) -- (0.8,0);
\draw [thick, red] (0.3,0.8) -- (0.4,0.76);
\draw [dashed] (0.35,0.8) -- (0.35,1);
\draw [dotted] (0.35,0.8) -- (0.35,0) node [below] {$a^{q}$};
\draw [dashed] (0.65,0.65) -- (0.65,1);
\draw [dotted] (0.65,0.65) -- (0.65,0) node [below] {$b^{q}$};
\draw [dotted] (0.3,0.8) -- (0.3,1);
\draw [dotted] (0.4,0.76) -- (0.4,1);
\draw (0.5,0.85) node {$(q,1)$};
\draw (0.5,-0.2) node {\footnotesize (b) $a^{q}$ violates \eqref{Eqn_SOC}};
\end{tikzpicture}%
\begin{tikzpicture}[baseline={(0,0)},x=4.5cm,y=4.5cm]
\draw (0,0) node [below left] {0} -- (1,0) node [below] {1} -- (1,1) -- (0,1) node [left] {1} -- (0,0);
\draw [dotted] (0,0) -- (1,1);
\draw (0,0.87) node [left] {$\zeta$};
\draw [domain=0:0.35, thick, green] plot ({\x}, {-1*(\x-0.2)^2+0.91});
\draw [domain=0.35:0.5, thick, red] plot ({\x}, {-1*(\x-0.2)^2+0.91});
\draw [domain=0.35:0.5, thick, red] plot ({-1*(\x-0.2)^2+0.91}, {\x});
\draw [thick, red] (0.5,0.82) -- (0.65, 0.73) -- (0.73,0.65) -- (0.82,0.5);
\draw [thick, green] (0.65, 0.73) -- (1,0.52);
\draw [thick, red] (0,0.8875) -- (0.35,0.8875);
\draw [thick, red] (0.8875,0) -- (0.8875,0.35);
\draw [dashed] (0.35,0.8875) -- (0.35,1);
\draw [dotted] (0.35,0.8875) -- (0.35,0);
\draw (0.35,-0.12) node [anchor=south] {$a^{q}$};
\draw [dashed] (0.65,0.73) -- (0.65,1);
\draw [dotted] (0.65,0.73) -- (0.65,0);
\draw (0.65,-0.12) node [anchor=south] {$a^{\hat{q}}$};
\draw (0.5,0.85) node {$(q,1)$};
\draw (0.5,-0.2) node {\footnotesize (c) \eqref{Eqn_MR} implies $\overline{u}=1-\zeta $};
\end{tikzpicture}%
\caption{Using $\zeta$ to verify \eqref{Eqn_MR} and \eqref{Eqn_SOC}.}
\label{Fig_Prop_2D}
\end{figure}

In what follows, we present several corollaries and examples illustrating how $\zeta$ can help us determine the set of active bundles (or menu size). In Examples \ref{Exp_PowerPositive}, \ref{Exp_Pareto}, and \ref{Exp_Gamma}, we assume that $x_{1}$ and $x_{2}$ are i.i.d. with marginal p.d.f. $g$ and cumulative distribution function (c.d.f.) $G$. In this case, \eqref{Eqn_Zeta} reduces to
\begin{equation}
\label{Eqn_Zeta_iid}
\frac{\zeta (x)g(\zeta (x))}{1-G(\zeta (x))}=\frac{xg'(x)}{g(x)}+2.
\end{equation}
Let $h(x)=xg'(x)/g(x)$. Then $f$ (or equivalently $g$) is strictly regular if $h>-3/2$. Moreover, when both $h$ and $\zeta$ are differentiable at $x$, the sign of $\zeta'(x)$ is the same as the sign of $h'(x)$. See \Cref{Sec_Examples} for more details.

\subsubsection{Nondecreasing or Convex $\zeta$}

We start with two environments where any optimal mechanism has only finitely many active bundles, and all of them are pooling.

When $\zeta$ is nondecreasing and crosses the 45-degree line exactly once, any optimal mechanism is deterministic, that is, the set of active bundles is a subset of $\{\mathbf{e}_{1},\mathbf{e}_{2},\mathbf{1}\}$.

\begin{corollary}
\label{Cor_Inc}
Suppose that $\zeta$ is nondecreasing and crosses the 45-degree line exactly once. Then in any symmetric optimal mechanism, there exists $a^{1}<1$ such that:
\begin{enumerate}
\item $\overline{u}'(x)=0$ for $x\in [0,a^{1})$,
\item $\overline{u}'(x)=1$ for $x\in [a^{1},1]$.
\end{enumerate}
\end{corollary}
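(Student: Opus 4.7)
The plan is to use Proposition~\ref{Prop_2D} to show that, in any symmetric optimal mechanism, $\overline{u}'$ takes only the values $0$ and $1$ almost everywhere. Since $\overline{u}$ is convex, $\overline{u}'$ is nondecreasing; and since $(1,1)$ is pooling by Proposition~\ref{Prop_2D}, $\overline{u}'=1$ on a nondegenerate interval ending at $1$. Combining these yields a unique switching point $a^1<1$, as claimed. The task thus reduces to ruling out any active bundle $(q,1)$ with $q\in(0,1)$.

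Write $m(x)=\max(1-\overline{u}(x),x)$ and $\psi(x)=\zeta(x)-m(x)$; by \eqref{Eqn_Zeta}, the sign of $\Phi(x)$ equals the sign of $\psi(x)$. Let $c$ denote the unique crossing point of $\zeta$ with the 45-degree line. First, I rule out a pooling bundle $(q,1)$ with $q\in(0,1)$. Such a bundle would force $\overline{u}$ to be linear with slope $q$ on a nondegenerate interval $[a^q,b^q]$; Proposition~\ref{Prop_2D} then gives $\int_{a^q}^{b^q}\Phi(x)\dd x=0$ together with $\Phi(a^q),\Phi(b^q)\geq 0$. On any subinterval of $[a^q,b^q]$ on which $m=1-\overline{u}$, $\psi$ is strictly increasing since $\zeta$ is nondecreasing while $1-\overline{u}$ has slope $-q<0$; hence $\psi(a^q)\geq 0$ forces $\Phi>0$ strictly to the right. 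On any subinterval on which $m=x$, $\psi=\zeta-x$ is nonnegative precisely on $[0,c]$, and $\Phi(b^q)\geq 0$ forces $b^q\leq c$, so $\psi\geq 0$ with strict inequality in the interior. Either way $\int_{a^q}^{b^q}\Phi>0$, contradicting the FOC.

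Second, I rule out strict convexity of $\overline{u}$ on any interval of slopes in $(0,1)$: if $\overline{u}$ is strictly convex on some $(\alpha,\beta)$ with $\overline{u}'(x)\in(0,1)$, every interior $x$ supports a separating bundle, and Proposition~\ref{Prop_2D} yields $\Phi\equiv 0$, equivalently $\zeta(x)=m(x)$, on $(\alpha,\beta)$. On any subinterval where $m=1-\overline{u}$, this forces $\zeta$ to strictly decrease, contradicting nondecreasing monotonicity; on any subinterval where $m=x$, it gives $\zeta(x)=x$ on a nondegenerate interval, contradicting uniqueness of the crossing. Combined with the pooling argument, $\overline{u}'$ takes only the values $0$ and $1$ a.e., giving the claimed piecewise-linear structure. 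The main obstacle is the pooling case when $m$ transitions from $1-\overline{u}$ to $x$ at some $x^*\in(a^q,b^q)$: the two subinterval analyses must be stitched together, using that $\psi$ is already strictly positive at $x^*$ (so $x^*<c$), and then the constraint $b^q\leq c$ from $\Phi(b^q)\geq 0$ keeps $\psi\geq 0$ on $[x^*,b^q]$.
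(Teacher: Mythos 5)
Your proof is correct, and it reaches the conclusion by a somewhat different decomposition than the paper. The paper proves Corollaries \ref{Cor_Inc} and \ref{Cor_Cvx} jointly by a case analysis on the coincidence set $\mathcal{A}=\{x:\max(1-\overline{u}(x),x)=\zeta(x)\}$ (a nondegenerate interval plus $x_{\zeta}$, a point plus $x_{\zeta}$, or $\{x_{\zeta}\}$ alone), arguing in each case that no active bundle $(\hat q,1)$ can sit in the complementary regions; your version instead classifies the potentially problematic bundles directly into pooling versus separating and kills each with the FOC/SOC of Proposition~\ref{Prop_2D}: for a pooling $(q,1)$ with $q\in(0,1)$ you exploit that $\psi=\zeta-\max(1-\overline{u},x)$ is strictly increasing where the max is $1-\overline{u}$ (here nondecreasing $\zeta$ and slope $-q<0$ enter) and that $\Phi(b^{q})\geq 0$ pins $b^{q}$ below the unique crossing, so $\int_{a^{q}}^{b^{q}}\Phi>0$; for separating bundles you note that strict convexity of $\overline{u}$ with intermediate slopes forces $\Phi\equiv 0$, i.e.\ $\zeta$ would have to coincide with a strictly decreasing curve or with the 45-degree line on a nondegenerate interval, contradicting monotonicity or the single-crossing assumption. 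The ingredients are the same as the paper's (grand bundle pooling giving $a^{1}<1$, the sign equivalence between $\Phi$ and $\zeta-\max(1-\overline{u},x)$, uniqueness of the crossing), but your organization avoids having to justify that $\mathcal{A}$ can only take the three listed shapes, at the cost of treating only the nondecreasing case rather than convex $\zeta$ as well. Two small points you may wish to make explicit: a pooling bundle with $q\in(0,1)$ indeed has $a^{q}<b^{q}\leq a^{1}<1$ (positive $f$-measure of its demand set and pooling of $\mathbf{1}$), so $\overline{u}$ is genuinely affine with slope $q$ on a nondegenerate interval and the demand slices are nondegenerate; and in the strictly convex case $\Phi$ vanishes a priori only at the (co-countably many) differentiability points, so you should invoke continuity of $\Phi$ to get $\Phi\equiv 0$ on the whole interval. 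Neither affects the validity of the argument.
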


When $\zeta$ is convex, there are at most three active bundles in any optimal mechanism.

\begin{corollary}
\label{Cor_Cvx}
Suppose that $\zeta$ is convex on $[0,1]$. Then in any symmetric optimal mechanism, there exist $a^{1}<1$ and $q<1$ such that:
\begin{enumerate}
\item $\overline{u}'(x)=q$ for $x\in [0,a^{1})$,
\item $\overline{u}'(x)=1$ for $x\in [a^{1},1]$.
\end{enumerate}
\end{corollary}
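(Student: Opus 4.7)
The plan is to show, under the convexity of $\zeta$, that any optimal $\overline{u}$ must be piecewise affine on $[0,1]$ with at most two distinct slopes: some $q\in[0,1)$ on $[0,a^{1})$ and $1$ on $[a^{1},1]$. By \Cref{Prop_2D}, the grand bundle $(1,1)$ is always pooling, which supplies the slope-$1$ segment; the work lies in ruling out all other active slopes in $[0,1)$.

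The technical heart of the argument is the following lemma: if $(q,1)$ is pooling with $q\in(0,1)$, then $\zeta'(a^{q}+)\leq -q\leq \zeta'(b^{q}-)$. To prove it, observe that on $[a^{q},b^{q}]$ the function $v(x):=1-\overline{u}(x)$ is affine with slope $-q$, so $H:=\zeta-v$ is convex. Using the identity $\Phi(x)=\int_{\max(v(x),x)}^{\zeta(x)}\phi(x,s)\dd s$ (together with strict regularity $\phi>0$), one checks that $\Phi$ inherits the sign of $H$ wherever $v\geq x$, while $\Phi\geq 0$ wherever $v<x$; moreover, $v\leq x$ everywhere on $[a^{q},b^{q}]$ is ruled out by the bound $\zeta(1)<1$. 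SOC gives $H\geq 0$ at both endpoints of $[a^{q},b^{q}]$ (noting that $v(b^{q})<b^{q}$ automatically implies $H(b^{q})>0$), and FOC $\int_{a^{q}}^{b^{q}}\Phi(x)\dd x=0$ forces $H$ to dip below zero somewhere in the interior unless $H\equiv 0$. Convexity of $H$ then confines $\{H<0\}$ to an open subinterval $(c_{1},c_{2})$ of $(a^{q},b^{q})$, where reading off $H'=\zeta'+q$ at the zeros yields $\zeta'(c_{1})\leq -q\leq \zeta'(c_{2})$; convexity of $\zeta$ propagates these inequalities to the endpoints.

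Armed with the lemma, suppose for contradiction that two pooling slopes $q_{1}<q_{2}<1$ coexist with $b^{q_{1}}\leq a^{q_{2}}$. For $q_{1}>0$, applying the lemma to both intervals yields $\zeta'(b^{q_{1}}-)\geq -q_{1}$ and $\zeta'(a^{q_{2}}+)\leq -q_{2}$; combined with the monotonicity of $\zeta'$ (from convexity of $\zeta$), one obtains $-q_{1}\leq \zeta'(b^{q_{1}}-)\leq \zeta'(a^{q_{2}}+)\leq -q_{2}$, contradicting $q_{1}<q_{2}$. The case $q_{1}=0$ uses only the right-endpoint half of the lemma, which a parallel (simpler) argument---relying on SOC at $b^{0}$ alone---still yields as $\zeta'(b^{0}-)\geq 0=-q_{1}$, and the same contradiction follows. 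An analogous argument rules out strictly convex portions of $\overline{u}$ on $[0,a^{1}]$: such a region would be filled with separating bundles, so $\Phi\equiv 0$ there, forcing the convex $\zeta$ to coincide with the strictly concave $v$ on an interval, which is impossible.

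Hence $\overline{u}$ is affine on $[0,a^{1}]$ with a single slope $q\in[0,1)$, proving the corollary. The main obstacle is the lemma, where one must carefully combine the convex-analytic structure of $H$ with the subtlety that $\Phi$ and $H$ agree in sign only on the subregion where $v\geq x$; the case where $v$ and the diagonal cross inside $[a^{q},b^{q}]$ is handled by noting that then $H(b^{q})>0$ holds automatically, so the convex function $H$ still has $\{H<0\}$ confined to the interior, and the two-zero structure needed to extract $\zeta'(c_{1})\leq -q\leq \zeta'(c_{2})$ survives intact.
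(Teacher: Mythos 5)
Your proposal is correct in substance, but it follows a genuinely different route from the paper's. The paper proves Corollaries \ref{Cor_Inc} and \ref{Cor_Cvx} jointly: using that $\zeta$ crosses the 45-degree line exactly once, it performs a case analysis on the set $\mathcal{A}=\{x:\max(1-\overline{u}(x),x)=\zeta(x)\}$ and derives contradictions by counting intersections of the single-dipped curve $\max(1-\overline{u}(x),x)$ with $\zeta$ inside a pooling interval (SOC pins the curve below $\zeta$ at both endpoints, FOC forces it above somewhere in between), concluding that the only pooling interval below $a^{1}$ must stretch from $0$ to $a^{1}$. You instead exploit convexity quantitatively through $\zeta'$: your endpoint-slope lemma $\zeta'(a^{q}+)\leq -q\leq \zeta'(b^{q}-)$ for any pooling $(q,1)$, combined with monotonicity of $\zeta'$, immediately forbids two distinct pooling slopes below $1$, and the clash between convex $\zeta$ and strictly concave $1-\overline{u}$ (via $\Phi\equiv 0$ from part \ref{Prop_2D_3} of \Cref{Prop_2D}) forbids regions of separating bundles; this is arguably cleaner for the convex case, though it does not unify with the nondecreasing case as the paper's argument does. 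A few steps you leave implicit should be recorded: the claim ``$\Phi\geq 0$ wherever $v<x$'' needs $\zeta(x)\geq x$ on $[a^{q},b^{q}]$, which follows because $b^{q}\leq a^{1}$ and $\Phi(a^{1})\geq 0$ for the pooling grand bundle gives $a^{1}\leq x_{\zeta}$ (with $x_{\zeta}$ the unique crossing, itself a consequence of convexity and $\zeta<1$); the $q_{1}=0$ variant of the lemma and the $\overline{u}(0)=1$ edge case need the same unique-crossing fact; and the exhaustiveness of your two cases (two pooling slopes versus an interval where $\overline{u}'$ strictly increases) requires the short observation that a nondecreasing $\overline{u}'$ that is non-constant on $[0,a^{1})$ must exhibit one of these configurations. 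All of these are routine to fill in, so the proof goes through.
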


\begin{proof}[Proof of Corollaries \ref{Cor_Inc} and \ref{Cor_Cvx}]
In both corollaries, $\zeta (x)=x$ has a unique solution, which we denote by $x_{\zeta}$. Since $\zeta<1$, we have $\zeta (x)>x$ for $x\in [0,x_{\zeta})$ and $\zeta (x)<x$ for $x\in (x_{\zeta},1]$. By \Cref{Cor_GrandBundle}, $\mathbf{1}$ is pooling, so $\max (1-\overline{u}(a^{1}),a^{1})\leq \zeta (a^{1})$, which implies $a^{1}<x_{\zeta}$. Let $\mathcal{A}=\{x\in [0,1]:\max (1-\overline{u}(x),x)=\zeta (x)\}$. Three cases will be discussed in order.

First, if $\mathcal{A}=[\alpha,\beta]\cup \{x_{\zeta}\}$ with $0\leq \alpha<\beta<x_{\zeta}$, then $\overline{u}'(x)$ is a constant for all $x\in (\alpha, \beta)$, which we denote by $q<1$. (When $\zeta$ is nondecreasing, we further have $q=0$.) By \Cref{Prop_Additive}, there cannot be any active bundle $(\hat{q},1)$ satisfying $[a^{\hat{q}},b^{\hat{q}}]\subseteq [0,\alpha]$ or $[a^{\hat{q}},b^{\hat{q}}]\subseteq [\beta,a^{1}]$, meaning that the bundle $(q,1)$ should satisfy $a^{q}=0$ and $b^{q}=a^{1}$.

Second, if $\mathcal{A}=\{\alpha,x_{\zeta}\}$ with $\alpha<x_{\zeta}$, then, by a similar argument, there cannot be any active bundle $(q,1)$ satisfying $[a^{q},b^{q}]\subseteq [0,\alpha]$ or $[a^{q},b^{q}]\subseteq [\alpha,a^{1}]$. That is, there must be a pooling bundle $(q,1)$ with $q<1$ satisfying $[0,\alpha)\subset [a^{q},b^{q}]$. However, by \Cref{Prop_Additive}, $\max (1-\overline{u}(x),x)\leq \zeta (x)$ for $x=a^{q},b^{q}$, meaning that $\max (1-\overline{u}(x),x)$ intersects $\zeta (x)$ at least twice on $[a^{q},b^{q}]$, a contradiction.

Finally, if $\mathcal{A}=\{x_{\zeta}\}$, then $\mathbf{1}$ is the only active bundle.

Summarizing all three cases gives us the corollaries.
\end{proof}

\Cref{Fig_CvxInc} illustrates Corollaries \ref{Cor_Inc} and \ref{Cor_Cvx}. In each panel of the figure, demand sets are separated by dashed lines. The top-right region of the red curve consists of demand sets of all nonzero bundles, while the bottom-left region of the red curve is the demand set of the zero bundle. Panels (a) and (b) correspond to Corollaries \ref{Cor_Inc} and \ref{Cor_Cvx} with $a^{1}>0$. Panel (c) depicts the special case of pure bundling, which may occur regardless of the shape of $\zeta$.

\begin{figure}[ht]
\centering
\begin{tikzpicture}[baseline={(0,0)},x=4.5cm,y=4.5cm]
\fill [blue!10,domain=0:0.32] plot ({\x}, {0.5*(\x)^2+0.45}) -- (0,0.5);
\fill [red!10,domain=0.32:0.4]  plot ({\x}, {0.5*(\x)^2+0.45}) -- (0.4,0.5) -- (0.32,0.5);
\fill [red!10,domain=0.4:0.68] plot ({\x}, {0.5*(\x)^2+0.45}) -- (0.45,0.45) -- (0.4,0.5) -- (0.4,0.53);
\fill [blue!10,domain=0.68:1]  plot ({\x}, {0.5*(\x)^2+0.45}) -- (1,1) -- (0.68,0.68);
\draw (0,0) node [below left] {0} -- (1,0) node [below] {1} -- (1,1) -- (0,1) node [left] {1} -- (0,0);
\draw [dotted] (0,0) -- (1,1);
\draw [domain=0:1, thick, green, smooth] plot ({\x}, {0.5*(\x)^2+0.45});
\draw (0,0.45) node [left] {$\zeta$};
\draw [thick, red] (0,0.5) -- (0.4,0.5) -- (0.5,0.4) -- (0.5,0);
\draw (0.8,0.8) node {$\mathbf{1}$};
\draw (0.2,0.8) node {$\mathbf{e}_{2}$};
\draw (0.8,0.2) node {$\mathbf{e}_{1}$};
\draw [dotted] (0.4,0) node [below] {$a^{1}$} -- (0.4,0.5);
\draw [dashed] (0.4,0.5) -- (0.4,1);
\draw [dashed] (0.5,0.4) -- (1,0.4);
\draw (0.5,-0.2) node {\footnotesize (a) Nondecreasing $\zeta$};
\end{tikzpicture}
\begin{tikzpicture}[baseline={(0,0)},x=4.5cm,y=4.5cm]
\fill [red!10,domain=0:0.077] plot ({\x}, {0.25*(\x-1)^8+0.63}) -- (0,0.8);
\fill [blue!10,domain=0.077:0.316]  plot ({\x}, {0.25*(\x-1)^8+0.63}) -- (0.077,0.7615);
\fill [red!10,domain=0.316:0.63] plot ({\x}, {0.25*(\x-1)^8+0.63}) -- (0.5,0.5) -- (0.4,0.6) -- (0.316,0.642);
\fill [blue!10,domain=0.63:1]  plot ({\x}, {0.25*(\x-1)^8+0.63}) -- (1,1);
\draw (0,0) node [below left] {0} -- (1,0) node [below] {1} -- (1,1) -- (0,1) node [left] {1} -- (0,0);
\draw [dotted] (0,0) -- (1,1);
\draw [domain=0:1, thick, green, smooth] plot ({\x}, {0.25*(\x-1)^8+0.63});
\draw (0,0.88) node [left] {$\zeta$};
\draw [thick, red] (0,0.8) -- (0.4,0.6) -- (0.6,0.4) -- (0.8,0);
\draw (0.8,0.8) node {$\mathbf{1}$};
\draw (0.2,0.85) node {$(q,1)$};
\draw (0.85,0.2) node {$(1,q)$};
\draw [dotted] (0.4,0) node [below] {$a^{1}$} -- (0.4,0.6);
\draw [dashed] (0.4,0.6) -- (0.4,1);
\draw [dashed] (0.6,0.4) -- (1,0.4);
\draw (0.5,-0.2) node {\footnotesize (b) Convex $\zeta$};
\end{tikzpicture}
\begin{tikzpicture}[baseline={(0,0)},x=4.5cm,y=4.5cm]
\fill [red!10,domain=0:0.5] plot ({\x}, {2*(\x-0.3)^2+0.54}) -- (0.5,0.5) -- (0.3,0.3) -- (0,0.6);
\fill [red!10,domain=0.5:0.7] plot ({\x}, {-2*(\x-0.7)^2+0.7}) -- (0.5,0.5);
\fill [blue!10,domain=0.7:1]  plot ({\x}, {-2*(\x-0.7)^2+0.7}) -- (1,1);
\draw (0,0) node [below left] {0} -- (1,0) node [below] {1} -- (1,1) -- (0,1) node [left] {1} -- (0,0);
\draw [domain=0:0.5, thick, green] plot ({\x}, {2*(\x-0.3)^2+0.54});
\draw [domain=0.5:1, thick, green] plot ({\x}, {-2*(\x-0.7)^2+0.7});
\draw (0,0.72) node [left] {$\zeta$};
\draw [thick, red] (0,0.6) -- (0.6,0);
\draw [dotted] (0,0) -- (1,1);
\draw (0.8,0.8) node {$\mathbf{1}$};
\draw (0.5,-0.2) node {\footnotesize (c) Pure bundling};
\end{tikzpicture}
\caption{Nondecreasing or convex $\zeta$.}
\label{Fig_CvxInc}
\end{figure}
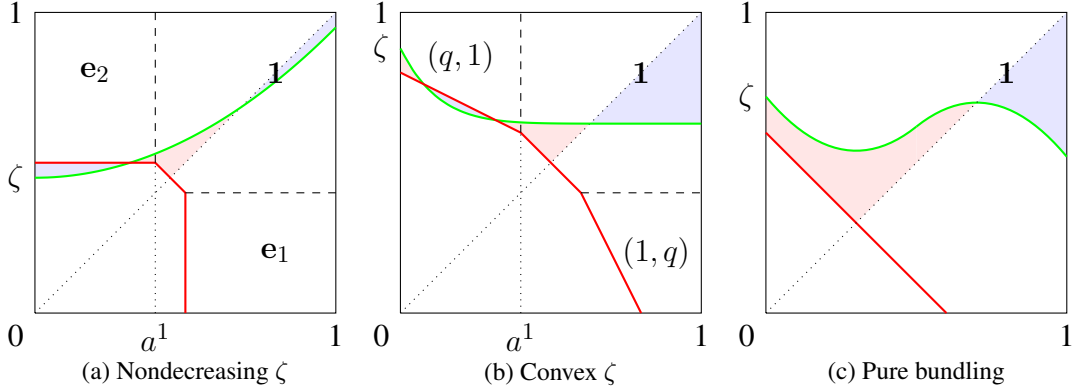

\begin{example}[Power-law Distribution]
\label{Exp_PowerPositive}
Let $g(x)=C(x+\theta)^{\eta-1}$, where $\theta\geq 0$, $\eta\geq 1$, and $C$ is a normalization constant. Then $h(x)=(\eta-1)x/(x+\theta)\geq 0$ for all $x\in [0,1]$, so $g$ is strictly regular. It can be proved that $\zeta$ is nondecreasing and crosses the 45-degree line exactly once. In particular, when $\theta=0$, $\zeta$ equals a constant: $\zeta (x)=[(\eta +1)/(2\eta +1)]^{1/\eta}$.
\end{example}

\begin{example}[Truncated Pareto Distribution]
\label{Exp_Pareto}
Let $g(x)=C(1+x)^{-\eta}$, where $\eta\in (0,3]$, and $C$ is a normalization constant. Then $h(x)=-\eta x/(x+1)>-3/2$ for almost all $x\in [0,1]$, so $g$ is strictly regular. It can be proved that $\zeta$ is strictly decreasing and strictly convex.

We perform numerical simulations for this example. The results show that the optimal mechanism requires randomization when $\eta$ is small and becomes pure bundling when $\eta$ is large. The cutoff is approximately 2.8.
\end{example}

\begin{example}[Uniform Marginals with Correlation]
\label{Exp_UniformCorrelated}
Consider a joint p.d.f. generated from the Farlie-Gumbel-Morgenstern copula \citep{nelsen2006introduction} with uniform marginals: $f(\mathbf{x})=1+3\rho (1-2x_{1})(1-2x_{2})$, where $\rho \in [-1/3,1/3]$ is the correlation coefficient between two dimensions.\footnote{Let $y_{1}$ and $y_{2}$ be the quantiles of two marginal distributions. Then the Farlie-Gumbel-Morgenstern copula can be expressed as $C(y_{1},y_{2})=y_{1}y_{2}[1+3\rho (1-y_{1})(1-y_{2})]$, where $\rho$ is Spearman's rank correlation coefficient. In \Cref{Exp_UniformCorrelated}, $\rho$ is identical to the linear correlation coefficient since we consider uniform marginals.} It can be verified that $f$ is strictly regular if $\rho \in (-1/4,1/3)$. Conditional on strict regularity, when $\rho\in (-1/4,0]$, $\zeta$ is strictly increasing, hence optimal mechanisms are characterized by \Cref{Cor_Inc}; when $\rho \in (0,1/3)$, $\zeta$ is strictly decreasing and convex, hence optimal mechanisms are characterized by \Cref{Cor_Cvx}.
\end{example}

Under the i.i.d. assumption, \citet{manelli2006bundling} and \citet{hart2010revenue} both prove that deterministic mechanisms are optimal when $h$ (or equivalently $\zeta$) is nondecreasing. Their result does not depend on the number of crossings between $\zeta$ and the 45-degree line. \citet{tang2017optimal} extend the analysis to general rectangular type spaces while maintaining the assumptions of independent distribution and nondecreasing $\zeta$. They find that randomization may arise, but the optimal menu size (excluding the zero bundle) is bounded by five. Note that \Cref{Exp_PowerPositive} with $\theta=0$ is one of the leading examples in \citet{manelli2006bundling} and \citet{tang2017optimal}.

\subsubsection{Concave $\zeta$}

When $\zeta$ is concave, the optimal $\overline{u}$ is characterized by two cutoffs. When $x$ is below the smaller cutoff, $\overline{u}'(x)=0$. When $x$ is above the larger cutoff, $\overline{u}'(x)=1$. When $x$ lies between the two cutoffs, $\overline{u}$ coincides with $1-\zeta$.

\begin{corollary}
\label{Cor_Ccv}
Suppose that $\zeta$ is concave on $[0,1]$. Then in any symmetric optimal mechanism, there exist $0\leq b^{0}\leq a^{1}<1$ such that:
\begin{enumerate}
\item $\overline{u}'(x)=0$ for $x\in [0,b^{0})$,
\item $\overline{u}(x)=1-\zeta (x)$ for $x\in [b^{0},a^{1})$,
\item $\overline{u}'(x)=1$ for $x\in [a^{1},1]$.
\end{enumerate}
\end{corollary}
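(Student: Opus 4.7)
The plan is to combine Proposition \ref{Prop_2D} with one geometric observation: the maximum of two affine functions is convex, whereas $\zeta$ is assumed concave. By Proposition \ref{Prop_2D}, $\overline{u}$ is nondecreasing, convex, and $1$-Lipschitz with derivative in $[0,1]$, and the grand bundle is pooling, so $a^{1}:=\inf\{x:\overline{u}'(x+)=1\}<1$. Let $b^{0}:=\sup\{x\in[0,1]:\overline{u}'(x-)=0\}$, with $\sup\emptyset=0$. It then suffices to establish the identity $\overline{u}(x)=1-\zeta(x)$ on every pooling or separating sub-interval of $(b^{0},a^{1})$.

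The core step is to show that any interior pooling is degenerate. Suppose some $(q,1)$ with $q\in(0,1)$ is pooling on $[a^{q},b^{q}]$ with $a^{q}<b^{q}$. Since $\overline{u}$ is affine with slope $q$ on this interval, $\max\{1-\overline{u}(\cdot),\cdot\}$ is the pointwise maximum of two affine functions, hence convex in $x$. By part \ref{Prop_2D_2} of Proposition \ref{Prop_2D}, $\Phi(a^{q})\geq 0$ and $\Phi(b^{q})\geq 0$, i.e., the convex function $\max\{1-\overline{u}(\cdot),\cdot\}$ is dominated by the concave function $\zeta$ at both endpoints. Because a convex function bounded above by a concave function at both endpoints of an interval is bounded above by that concave function throughout, we get $\Phi\geq 0$ on $[a^{q},b^{q}]$. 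Combined with $\int_{a^{q}}^{b^{q}}\Phi\dd x=0$ and continuity of $\Phi$, this forces $\Phi\equiv 0$, hence $\max\{1-\overline{u}(x),x\}=\zeta(x)$ pointwise on $[a^{q},b^{q}]$. A convex function can agree with a concave one only on an affine piece, so $\zeta$ itself must be affine on $[a^{q},b^{q}]$; on the sub-piece where the max is $1-\overline{u}$, the identity $\overline{u}(x)=1-\zeta(x)$ already holds, consistent with the conclusion.

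At every separating $x\in(b^{0},a^{1})$, part \ref{Prop_2D_3} of Proposition \ref{Prop_2D} yields $\Phi(x)=0$, i.e., $\max\{1-\overline{u}(x),x\}=\zeta(x)$. If $\zeta(x)<x$ somewhere in this set, the max would strictly exceed $\zeta$, contradicting $\Phi(x)=0$; hence $\zeta(x)\geq x$ throughout the separating set, the active branch of the max is $1-\overline{u}(x)$, and $\overline{u}(x)=1-\zeta(x)$. Piecing together the pooling and separating sub-intervals gives the identity on all of $(b^{0},a^{1})$, and the constraint $\overline{u}'\in[0,1]$ forces $\zeta'\in[-1,0]$ on this range, which is automatic from concavity of $\zeta$ and the chosen cutoffs.

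The main obstacle is the convex-versus-concave step: one has to notice that affine $\overline{u}$ on a pooling interval makes $\max\{1-\overline{u}(\cdot),\cdot\}$ convex, so the endpoint inequalities from part \ref{Prop_2D_2} of Proposition \ref{Prop_2D} propagate across the interval by the convex-below-concave lemma, after which the zero-integral condition collapses interior pooling to a degenerate affine case. Once this is in place, part \ref{Prop_2D_3} of Proposition \ref{Prop_2D} and a one-line comparison with the 45-degree line handle the separating case.
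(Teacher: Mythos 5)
Your route is the paper's: the endpoint inequalities $\Phi(a^{q})\geq 0$, $\Phi(b^{q})\geq 0$ from part \ref{Prop_2D_2} of \Cref{Prop_2D}, the convex-below-concave comparison between $\max\{1-\overline{u}(\cdot),\cdot\}$ and $\zeta$ on a pooling interval, the zero-integral condition forcing $\Phi\equiv 0$, and part \ref{Prop_2D_3} for separating bundles. Up to the pointwise identity $\max\{1-\overline{u}(x),x\}=\zeta(x)$ on $[a^{q},b^{q}]$ your steps are correct. The gap is in what you do with it. Since the two branches have slopes $-q$ and $+1$, the affinity of $\zeta$ on the interval means a single branch is active on the \emph{whole} interval, so there is no harmless "sub-piece": either $1-\overline{u}\geq x$ throughout (and you are done), or $\zeta(x)=x\geq 1-\overline{u}(x)$ throughout. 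In the second case the corollary's conclusion $\overline{u}=1-\zeta$ is exactly what fails (you only get $\overline{u}\geq 1-\zeta$, typically strictly), so declaring it "consistent with the conclusion" does not prove the statement. The same slip occurs in your separating step: from $\max\{1-\overline{u}(x),x\}=\zeta(x)$ and $\zeta(x)\geq x$ you infer that the active branch is $1-\overline{u}$, which does not follow when $\zeta(x)=x$. This is precisely the point the paper compresses into "the only way to satisfy \eqref{Eqn_2D_MR} is that $1-\overline{u}(x)>x$ and $1-\overline{u}(x)=\zeta(x)$", and it requires an argument.

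The case can be excluded as follows. If the 45-degree branch were active on a nondegenerate interval, then $\zeta$ coincides with the identity there, and concavity (the slope of $\zeta$ is nonincreasing and equals $1$ on that interval) forces $\zeta(x)\leq x$ for all larger $x$; hence on $[a^{1},1]$ we have $\max\{1-\overline{u}(x),x\}\geq x\geq \zeta(x)$ and so $\Phi\leq 0$, while $\zeta<1$ gives $\zeta(x)<x$, hence $\Phi(x)<0$, near $x=1$. Therefore $\int_{a^{1}}^{1}\Phi(x)\dd x<0$, contradicting \eqref{Eqn_2D_MR} for the grand bundle, which is pooling with $a^{1}<1$ by part \ref{Prop_2D_2} of \Cref{Prop_2D}. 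With this exclusion (and continuity to carry the identity $\overline{u}=1-\zeta$ over separating points), your piecing-together argument delivers the corollary along essentially the same lines as the paper.
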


As a special case of \Cref{Cor_Ccv}, when $\zeta$ is nondecreasing and concave, the smaller cutoff $b^{0}$ equals zero, meaning that $\overline{u}(x)=1-\zeta (x)$ for all $x\in [0,a^{1}]$. The optimal $\overline{u}$ ``follows'' $1-\zeta$ until it reaches the smallest point where $\overline{u}_{+}'(x)=1$. This result illustrates the intuition that $\zeta$ is an unconstrained pointwise solution to the condition $\MR=0$. If $1-\zeta$ happens to share the same properties as a feasible $\overline{u}$, then it becomes a natural candidate for the optimal $\overline{u}$. Due to its importance, we say that $f$ (or $\zeta$) is \emph{double-regular} if $\zeta$ is nonincreasing and concave on $[0,1]$.

\begin{corollary}
\label{Cor_CcvDec}
Suppose that $\zeta$ is double-regular. Then in any symmetric optimal mechanism, there exists $a^{1}\in [0,1)$ such that:
\begin{enumerate}
\item $\overline{u}(x)=1-\zeta (x)$ for $x\in [0,a^{1})$,
\item $\overline{u}'(x)=1$ for $x\in [a^{1},1]$.
\end{enumerate}
\end{corollary}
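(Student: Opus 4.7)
The plan is to leverage Corollary \ref{Cor_Ccv}, which already supplies the three-piece structure $\overline{u}'(x) = 0$ on $[0, b^0)$, $\overline{u}(x) = 1-\zeta(x)$ on $[b^0, a^1)$, and $\overline{u}'(x) = 1$ on $[a^1, 1]$ for some cutoffs $b^0 \leq a^1$ under concavity of $\zeta$. My task reduces to showing that once $\zeta$ is additionally nonincreasing, the initial constant piece collapses into the middle one, so that $\overline{u}(x) = 1 - \zeta(x)$ holds on the whole interval $[0, a^1)$ and the first bullet of Corollary \ref{Cor_Ccv} becomes redundant.

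I would proceed by contradiction: suppose $b^0 > 0$. By continuity of $\overline{u}$ together with the middle piece of Corollary \ref{Cor_Ccv}, $\overline{u}$ is identically $\overline{u}(b^0) = 1 - \zeta(b^0)$ on $[0, b^0]$. The rectangle $[0, b^0) \times [\zeta(b^0), 1]$ (together with its symmetric twin) lies inside the demand set of $\mathbf{e}_2 = (0,1)$ and has positive $f$-measure because $\zeta < 1$; so $\mathbf{e}_2$ is pooling. Applying part \ref{Prop_2D_2} of Proposition \ref{Prop_2D} to this bundle delivers both $\int_0^{b^0} \Phi(x_1)\,dx_1 = 0$ and $\Phi(b^0) \geq 0$; the latter forces $b^0 \leq \zeta(b^0)$.

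These facts combine cleanly. On $[0, b^0]$ we have $\max(1 - \overline{u}(x_1), x_1) = \max(\zeta(b^0), x_1) = \zeta(b^0)$, and strict regularity makes the map $\zeta \mapsto \int_\zeta^1 \phi(x_1,x_2)\,dx_2$ strictly decreasing, so by the defining property of $\zeta$ the sign of $\Phi(x_1)$ agrees with the sign of $\zeta(x_1) - \zeta(b^0)$. The nonincreasing assumption on $\zeta$ yields $\zeta(x_1) \geq \zeta(b^0)$, hence $\Phi(x_1) \geq 0$ pointwise on $[0, b^0]$; vanishing of the integral then forces $\zeta(x_1) = \zeta(b^0)$ almost everywhere and, by continuity of $\zeta$, everywhere on $[0, b^0]$. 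Consequently $1 - \overline{u}(x) = \zeta(b^0) = \zeta(x)$ throughout $[0, b^0]$, which merges with the middle piece of Corollary \ref{Cor_Ccv} to give $\overline{u}(x) = 1 - \zeta(x)$ on all of $[0, a^1)$, while $\overline{u}'(x) = 1$ on $[a^1, 1]$ is inherited verbatim. The main point requiring care is verifying that $\mathbf{e}_2$ genuinely has positive demand when $b^0 > 0$, so that the pooling MR condition is legitimately available; once that is secured, the remainder is a routine sign analysis enabled by the monotonicity of $\zeta$.
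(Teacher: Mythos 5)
Your overall route is the paper's: merge the flat piece into the $1-\zeta$ piece by showing $\mathbf{e}_{2}$ pools, then combine the FOC $\int_{0}^{b^{0}}\Phi=0$ with a pointwise inequality $\Phi\geq 0$ obtained from the monotonicity of $\zeta$. However, your opening step has a genuine gap: the identity $\overline{u}(b^{0})=1-\zeta(b^{0})$ is supplied by \Cref{Cor_Ccv} only when $b^{0}<a^{1}$. When $b^{0}=a^{1}$ the middle piece is empty, and at $x=b^{0}$ you only know $\overline{u}'=1$, not the level of $\overline{u}$. This degenerate case is not exotic—it is exactly the deterministic (mixed-bundling) regime; for instance, the i.i.d.\ uniform distribution is double-regular with $\zeta\equiv 2/3$ and its optimal mechanism has $b^{0}=a^{1}=(2-\sqrt{2})/3>0$. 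In that case everything downstream of your first sentence is unsupported: you use the assumed identity both to place the rectangle $[0,b^{0})\times[\zeta(b^{0}),1]$ inside $\mathcal{D}(\mathbf{e}_{2})$ (to get pooling) and to reduce the sign of $\Phi$ to the sign of $\zeta(x)-\zeta(b^{0})$. Worse, \Cref{Cor_Ccv} by itself does not exclude $\overline{u}(b^{0})=0$ with $b^{0}=a^{1}>0$ (pure bundling at a price above $1$); there $\mathcal{D}(\mathbf{e}_{2})$ is a null segment on the top boundary, $\mathbf{e}_{2}$ is not pooling, and the MR conditions you invoke are simply unavailable, so the "main point requiring care" that you flag is precisely where the argument can break.

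The repair—and the paper's actual anchor—is to obtain the boundary inequality $\max(1-\overline{u}(b^{0}),b^{0})\leq\zeta(b^{0})$ not from the middle piece but from part \ref{Prop_2D_2} of \Cref{Prop_2D} applied to a bundle that is guaranteed to pool at $b^{0}$: when $b^{0}<a^{1}$ it is automatic from $\overline{u}=1-\zeta$ there, and when $b^{0}=a^{1}$ it follows from $\Phi(a^{1})\geq 0$ for the grand bundle $\mathbf{1}$, which is always pooling. Since $\zeta<1$, this inequality simultaneously rules out $\overline{u}(b^{0})=0$ (so $\overline{u}\equiv\overline{u}(b^{0})>0$ on $[0,b^{0}]$ and $\mathbf{e}_{2}$ is indeed pooling) and delivers, for all $x\in[0,b^{0}]$, $1-\overline{u}(x)=1-\overline{u}(b^{0})\leq\zeta(b^{0})\leq\zeta(x)$ and $x\leq b^{0}\leq\zeta(b^{0})\leq\zeta(x)$, i.e.\ $\Phi\geq 0$ pointwise. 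From there your FOC/sign analysis goes through verbatim and yields $1-\overline{u}=\zeta$ (hence $\zeta$ constant) on $[0,b^{0}]$, after which the pieces merge as you describe. With that substitution your proof is correct and coincides with the paper's.
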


\begin{proof}[Proof of Corollaries \ref{Cor_Ccv} and \ref{Cor_CcvDec}]
Suppose that $\zeta$ is concave and that $(q,1)$ is pooling with $q\in (0,1)$. By part \ref{Prop_Additive_2} of \Cref{Prop_Additive}, $\max(1-\overline{u}(x),x)\leq \zeta (x)$ for $x=a^{q},b^{q}$. As depicted in \Cref{Fig_Prop_2D}, the single-dipped function $\max(1-\overline{u}(x),x)$ lies (weakly) below $\zeta (x)$ on $[a^{q},b^{q}]$. The only way to satisfy \eqref{Eqn_IntPhi} is that $1-\overline{u}(x)>x$ and $1-\overline{u}(x)=\zeta (x)$ for all $x\in [a^{q},b^{q}]$. This result and part \ref{Prop_Additive_3} of \Cref{Prop_Additive} jointly imply that $\overline{u}(x)=1-\zeta (x)$ for all $x\in [b^{0},a^{1}]$.

Suppose, in addition, that $\zeta$ is nondecreasing. Then $\max(1-\overline{u}(b^{0}),b^{0})\leq \zeta (b^{0})$ implies $\max (1-\overline{u}(0),0)\leq \zeta (0)$. Thus, $\mathbf{e}_{2}$ is pooling only if $1-\overline{u}(x)=\zeta (x)$ for all $x\in [0,b^{0}]$. This proves that $\overline{u}(x)=1-\zeta (x)$ for all $x\in [0,a^{1}]$.
\end{proof}

\Cref{Fig_Ccv} graphically illustrates Corollaries \ref{Cor_Ccv} and \ref{Cor_CcvDec}. In panel (a), since $\zeta$ may be strictly increasing when $x$ is small, it is possible that $1-\overline{u}$ cannot follow $\zeta$ when $x<b^{0}$. In this case, we may need to ``iron'' MR, so pooling arises naturally in optimal mechanisms. In panel (b), $b^{0}=a^{1}$, $1-\overline{u}$ does not follow $\zeta$ on the entire $[0,1]$, meaning that all active bundles are pooling. In panel (c), $\zeta$ is double-regular, so $1-\overline{u}=\zeta$ on $[0,a^{1}]$.

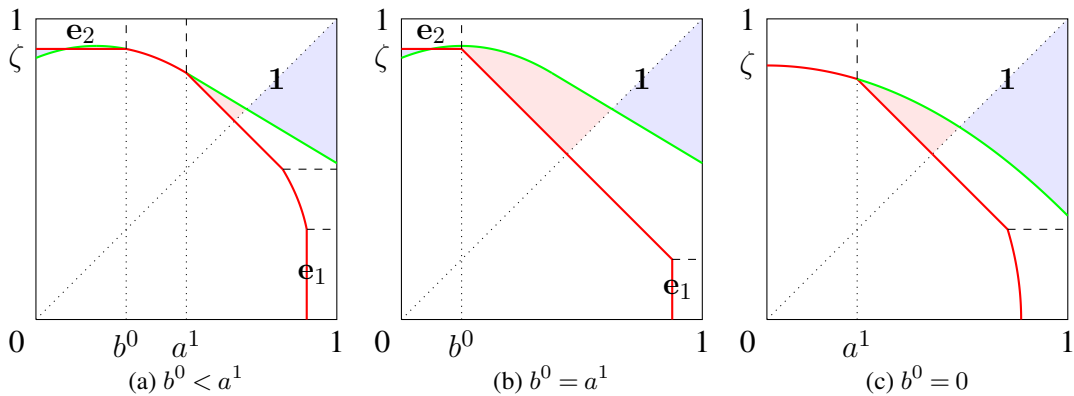
\begin{figure}[ht]
\centering
\begin{tikzpicture}[baseline={(0,0)},x=4.5cm,y=4.5cm]
\fill [blue!10,domain=0:0.1] plot ({\x}, {-1*(\x-0.2)^2+0.91}) -- (0,0.9);
\fill [red!10,domain=0.1:0.3]  plot ({\x}, {-1*(\x-0.2)^2+0.91}) -- (0.1,0.9);
\fill [red!10] plot (0.5,0.82) -- (0.7,0.7) -- (0.66,0.66);
\fill [blue!10] (0.7,0.7) -- (1,0.52) -- (1,1);
\draw (0,0) node [below left] {0} -- (1,0) node [below] {1} -- (1,1) -- (0,1) node [left] {1} -- (0,0);
\draw [dotted] (0,0) -- (1,1);
\draw (0,0.87) node [left] {$\zeta$};
\draw [domain=0:0.2, thick, green] plot ({\x}, {-1*(\x-0.2)^2+0.91});
\draw [domain=0.2:0.3, thick, green] plot ({\x}, {-1*(\x-0.2)^2+0.91});
\draw [thick, green] (0.5,0.82) -- (1,0.52);
\draw [thick, red] (0,0.9) -- (0.3,0.9);
\draw [domain=0.3:0.5, thick, red] plot ({\x}, {-1*(\x-0.2)^2+0.91}) -- (0.82,0.5);
\draw [domain=0.3:0.5, thick, red] plot ({-1*(\x-0.2)^2+0.91}, {\x});
\draw [thick, red] (0.9,0) -- (0.9,0.3);
\draw (0.8,0.8) node {$\mathbf{1}$};
\draw (0.15,0.95) node {$\mathbf{e}_{2}$};
\draw (0.92,0.15) node {$\mathbf{e}_{1}$};
\draw (0.5,-0.2) node {\footnotesize (a) $b^{0}<a^{1}$};
\draw [dotted] (0.3,0) node [below] {$b^{0}$} -- (0.3,0.9);
\draw [dashed] (0.3,0.9) -- (0.3,1);
\draw [dashed] (0.9,0.3) -- (1,0.3);
\draw [dotted] (0.5,0) node [below] {$a^{1}$} -- (0.5,0.82);
\draw [dashed] (0.5,0.82) -- (0.5,1);
\draw [dashed] (0.82,0.5) -- (1,0.5);
\end{tikzpicture}
\begin{tikzpicture}[baseline={(0,0)},x=4.5cm,y=4.5cm]
\fill [blue!10,domain=0:0.1] plot ({\x}, {-1*(\x-0.2)^2+0.91}) -- (0,0.9);
\fill [red!10,domain=0.1:0.5]  plot ({\x}, {-1*(\x-0.2)^2+0.91}) -- (0.7,0.7) -- (0.55,0.55) -- (0.2,0.9) -- (0.1,0.9);
\fill [blue!10] (0.7,0.7) -- (1,0.52) -- (1,1);
\draw (0,0) node [below left] {0} -- (1,0) node [below] {1} -- (1,1) -- (0,1) node [left] {1} -- (0,0);
\draw [dotted] (0,0) -- (1,1);
\draw (0,0.87) node [left] {$\zeta$};
\draw [domain=0:0.2, thick, green] plot ({\x}, {-1*(\x-0.2)^2+0.91});
\draw [domain=0.2:0.5, thick, green] plot ({\x}, {-1*(\x-0.2)^2+0.91});
\draw [thick, green] (0.5,0.82) -- (1,0.52);
\draw [thick, red] (0,0.9) -- (0.2,0.9) -- (0.9,0.2) -- (0.9,0);
\draw (0.8,0.8) node {$\mathbf{1}$};
\draw (0.1,0.95) node {$\mathbf{e}_{2}$};
\draw (0.92,0.1) node {$\mathbf{e}_{1}$};
\draw (0.5,-0.2) node {\footnotesize (b) $b^{0}=a^{1}$};
\draw [dotted] (0.2,0) node [below] {$b^{0}$} -- (0.2,0.9);
\draw [dashed] (0.2,0.9) -- (0.2,1);
\draw [dashed] (0.9,0.2) -- (1,0.2);
\end{tikzpicture}
\begin{tikzpicture}[baseline={(0,0)},x=4.5cm,y=4.5cm]
\fill [red!10,domain=0.3:0.64] plot ({\x}, {-0.5*(\x)^2+0.845}) -- (0.55,0.55);
\fill [blue!10,domain=0.64:1]  plot ({\x}, {-0.5*(\x)^2+0.845}) -- (1,1) -- (0.64,0.64);
\draw (0,0) node [below left] {0} -- (1,0) node [below] {1} -- (1,1) -- (0,1) node [left] {1} -- (0,0);
\draw [dotted] (0,0) -- (1,1);
\draw [domain=0:0.3, thick, red, smooth] plot ({\x}, {-0.5*(\x)^2+0.845}) -- (0.55,0.55);
\draw [domain=0:0.3, thick, red, smooth] plot ({-0.5*(\x)^2+0.845},{\x}) -- (0.55,0.55);
\draw [domain=0.3:1, thick, green, smooth] plot ({\x}, {-0.5*(\x)^2+0.845});
\draw (0,0.845) node [left] {$\zeta$};
\draw (0.8,0.8) node {$\mathbf{1}$};
\draw [dotted] (0.3,0) node [below] {$a^{1}$} -- (0.3,0.8);
\draw [dashed] (0.3,0.8) -- (0.3,1);
\draw [dashed] (0.8,0.3) -- (1,0.3);
\draw (0.5,-0.2) node {\footnotesize (c) $b^{0}=0$};
\end{tikzpicture}
\caption{Concave $\zeta$.}
\label{Fig_Ccv}
\end{figure}

\begin{example}[Truncated Normal Distribution]
\label{Exp_NormalPositive}
Let $g(x)=Ce^{-(x-\theta)^{2}/2}$, where $\theta\in [0,2]$, and $C$ is a normalization constant. Then $h(x)=x(\theta-x)\geq -1$ for all $x\in [0,1]$, so $g$ is strictly regular. It can be proved that $\zeta$ is strictly concave with a peak at $\theta/2$. When $\theta=0$, $\zeta$ is double-regular.

We perform numerical simulations for this example. The results show that the optimal mechanism has infinite menu size (panel (c) of \Cref{Fig_Ccv}) when $\theta=0$ and becomes deterministic (panel (b) of \Cref{Fig_Ccv}) when $\theta>0$. In other words, the optimal mechanism exhibits a discontinuous change in menu complexity, collapsing from an infinite menu to a deterministic one at $\theta = 0$.

Moreover, we study truncated normal distribution with $\theta=0$ and a correlation coefficient $\rho$. It can be proved that $f$ is strictly regular if $-1/3<\rho<\sqrt{2/3}$. According to numerical simulations, the optimal mechanism exhibits infinite menu size when $\rho\leq 0$ and becomes deterministic when $\rho>0$. There is a switch from infinite to finite menu size at $\rho =0$.
\end{example}

\begin{example}[Truncated Gamma Distribution]
\label{Exp_Gamma}
Let $g(x)=Cx^{\eta-1}e^{-\lambda x}$, where $\lambda \in (0,1]$, $\eta \geq 1$, and $C$ is a normalization constant. Then $h(x)=\eta-1-\lambda x\geq -1$ for all $x\in [0,1]$, so $g$ is strictly regular. It can be proved that $\zeta$ is double-regular.
\end{example}

\citet{giannakopoulos2018selling} study nonincreasing and concave $\zeta$ under a similar regularity assumption and use it to solve an alternative dual problem. However, they only establish weak duality, and it is unclear how to extend their method to cover $\zeta$'s of other shapes. 

\bigskip

In a nutshell, this section develops a tractable recipe to identify optimal mechanisms for a two-product monopoly from \emph{primitives}. Although some of the examples in this section have been studied in the literature, we are not aware of any existing work that derives all these results within a unified framework. In fact, finding optimal mechanisms when types follow (i.i.d. or correlated) normal or Gamma distributions has attracted significant attention, but to our knowledge, remains unsolved in the literature.\footnote{See, e.g., \citet{long1984comments,schmalensee1984gaussian}, and \citet{manelli2006bundling}.} Examples \ref{Exp_NormalPositive} and \ref{Exp_Gamma} can be regarded as a first step toward a complete solution for these cases.\footnote{In \Cref{Sec_Upperbound}, we offer an upper bound on the number of pooling bundles in any optimal mechanism when $\zeta$ does not exhibit a desirable global property as those in Corollaries \ref{Cor_Inc}--\ref{Cor_CcvDec}. In \Cref{Sec_Quasiregular}, we show how to extend our approach to a specific type of irregular distribution.}


\subsection{Sufficiency}
\label{Sec_Sufficiency}

While Theorem \ref{Thm_MR} only provides necessary conditions that an optimal mechanism must satisfy, Corollaries \ref{Cor_Inc} and \ref{Cor_CcvDec} offer \emph{complete} characterizations of optimal mechanisms. In \Cref{Cor_Inc}, the two parameters $\overline{u}(0)$ and $a^{1}$ can be pinned down by $\MR (\mathbf{e}_{2})=0$ and $\MR (\mathbf{1})=0$. In \Cref{Cor_CcvDec}, the only parameter $a^{1}$ is determined by $\MR (\mathbf{1})=0$. In contrast, Corollaries \ref{Cor_Cvx} and \ref{Cor_Ccv} require optimizing additional parameters ($q$ or $b^{0}$, respectively) on top of $\overline{u}(0)$ and $a^{1}$.

In \Cref{Prop_DDT}, we provide sufficient and necessary conditions for the optimality of $\overline{u}$, which, together with \Cref{Prop_Additive}, help characterize these additional parameters. The proposition is obtained by applying \Cref{Lem_MM} to the complementary slackness conditions in DDT's Corollary 1.



\begin{proposition}
\label{Prop_DDT}
Suppose that $f$ is SRS.
Let $\overline{u}:[0,1]\mapsto [0,2]$ be a nondecreasing, convex, and 1-Lipschitz function, with associated cutoffs $\{a^{q},b^{q}\}_{q\in [0,1]}$ defined by \eqref{Eqn_Endpoint}. Let $F(x)=\int^{x}_{0}f(x_{1},1)\dd x_{1}$.
Then $u_{p}$ given by \Cref{Lem_MM} is optimal if and only if there exists a c.d.f. $F^{*}$ satisfying
\begin{enumerate}
\item \label{Prop_DDT_0} $F^{*}(x)\leq\int^{x}_{0}\int^{1}_{\max (1-\overline{u}(x_{1}),x_{1})}\phi (\mathbf{x})\dd x_{2}\dd x_{1}$, with equality unless $x\in (a^{1},1)$.
\item \label{Prop_DDT_1} $F^{*}\succeq_{\cvx} F$.\footnote{The notation $F^{*}\succeq_{\cvx}F$ reads as ``$F^{*}$ convexly dominates $F$'', namely, $\int^{1}_{0}v(x)\dd F^{*}(x)\geq \int^{1}_{0}v(x)\dd F(x)$ for every nondecreasing and convex function $v$.}
\item \label{Prop_DDT_2} $\int^{1}_{0}\overline{u}(x)\dd F^{*}(x)=\int^{1}_{0}\overline{u}(x)\dd F(x)$.
\end{enumerate}
\end{proposition}


Propositions \ref{Prop_Additive} and \ref{Prop_DDT} are closely connected. By part \ref{Prop_DDT_1} of \Cref{Prop_DDT},
\begin{equation}
\label{Eqn_F*_1}
\int^{1}_{x}F^{*}(z)\dd z\leq \int^{1}_{x}F(z)\dd z,\quad \forall\, x\in [0,1].
\end{equation}
Since $\overline{u}$ is nondecreasing, convex, and 1-Lipschitz, it admits a nondecreasing derivative $\overline{u}'$ almost everywhere on $[0,1]$. We extend the definition of $\overline{u}'$ by right-continuity, and normalize $\overline{u}'(1)=1$. Then $\overline{u}'$ can be regarded as a distribution function.
Applying integration by parts twice to part \ref{Prop_DDT_2} of \Cref{Prop_DDT} yields
\begin{equation}
\label{Eqn_F*_2}
\int^{1}_{0}\left[\int^{1}_{x}F^{*}(z)\dd z\right]\dd \overline{u}'(x)=\int^{1}_{0}\left[\int^{1}_{x}F(z)\dd z\right]\dd \overline{u}'(x).
\end{equation}
Denote by $\mathcal{F}(x)=\int^{1}_{x}F^{*}(z)\dd z-\int^{1}_{x}F(z)\dd z$. Then $\mathcal{F}(1)=0$, \eqref{Eqn_F*_1} implies $\mathcal{F}(x)\leq 0$, and \eqref{Eqn_F*_2} implies $\int^{1}_{0}\mathcal{F}(x)\dd \overline{u}'(x)=0$. Together, we have
\begin{equation}
\label{Eqn_F*_4}
\supp \overline{u}' \in \arg \max_{x\in [0,1]}\mathcal{F}(x),
\end{equation}
which can be used to derive \Cref{Prop_Additive}. We demonstrate this in the proof of \Cref{Prop_DDT}.

Interestingly, FOC and SOC of the total revenue correspond to FOC and SOC of the $\mathcal{F}$ function, respectively. This is partly because $\mathcal{F}$ is a measure of the ``duality gap'', or the ``distance'' between the revenue upper bound under $F^{*}$ and the actual total revenue under $F$. We also observe that \Cref{Prop_DDT} parallels results in the information design literature, such as Theorem 1 in \citet{dworczak2019simple}.


Comparing \Cref{Prop_DDT} with DDT's Corollary~1 illustrates the mileage of working with SRS distributions. By invoking \Cref{Lem_MM}, \Cref{Prop_DDT} reduces the verification of DDT's complementary slackness conditions to checking conditions \ref{Prop_DDT_0}--\ref{Prop_DDT_2} for the unidimensional function $\overline{u}$. Nevertheless, applying \Cref{Prop_DDT} without additional structure on $\overline{u}$ remains challenging. Instead, we propose to use the necessary conditions in \Cref{Prop_Additive} to identify key features of $\overline{u}$.

For example, when $\zeta$ is convex, \Cref{Cor_Cvx} shows that the optimal mechanism has only three active bundles: $(q,1)$, $(1,q)$, and $(1,1)$. We therefore have three parameters to be determined, namely $\overline{u}(0)$, $q$, and $a^{1}$. \Cref{Prop_Additive} already gives us $\MR (\mathbf{1})=0$ and $\MR (q,1)=0$, both of which are due to \eqref{Eqn_F*_4}. However, we still need to ensure that $\mathcal{F}(0)=\mathcal{F}(a^{1})=0$. This gives us
\begin{equation*}
\mathcal{F}(a^{1})-\mathcal{F}(0)=\int^{a^{1}}_{0}\int^{x}_{0}\Phi(x_{1})\dd x_{1}\dd x=\int^{a^{1}}_{0}\Phi(x_{1})\dd x_{1}-\int^{a^{1}}_{0}x\Phi(x)\dd x=0.
\end{equation*}
Note that $\int^{a^{1}}_{0}\Phi(x_{1})\dd x_{1}=\MR (q,1)=0$. Thus, on top of the two MR conditions, we have
\begin{equation*}
\int^{a^{1}}_{0}x\Phi(x)\dd x=0,
\end{equation*}
and they jointly pin down the three parameters in \Cref{Cor_Cvx}.

As a comparison, DDT's Theorem 7 (also for two items) assumes neither symmetry nor regularity of the type distribution. Instead, they invoke two one-dimensional functions to describe the exclusion boundary below which types do not participate. While DDT do not offer construction of the two functions from primitives, our approach relies entirely on the features of $\zeta$ which is computed from the density function.



\subsection{Optimality of Bundling Strategies}
\label{Sec_Additive_Bundle}

Since \citet{adams1976commodity}, a large body of literature has examined the optimality of commonly used bundling strategies. Much of the early works, such as \citet{mcafee1989multiproduct}, focuses on comparing specific bundling strategies, primarily due to the difficulty of solving optimal mechanisms. With \Cref{Thm_MR} and \Cref{Prop_Additive}, we can address this question from a broader perspective, that is, identifying the conditions under which a given mechanism is optimal among all feasible mechanisms.

We begin with \emph{separate selling}, that is, a mechanism in which only $\mathbf{e}_{1}$ and $\mathbf{e}_{2}$ are active.\footnote{In the industrial organization literature, separate selling typically requires that $\mathbf{1}$ is active and sold at a price $p(\mathbf{1})=p(\mathbf{e}_{1})+p(\mathbf{e}_{2})$. Our definition differs slightly, as it is tailored to our focus on active bundles.} By \Cref{Cor_GrandBundle}, the grand bundle $\mathbf{1}$ is always pooling, so separate selling is never optimal.

The next result provides a necessary and sufficient condition for the optimality of pure bundling. Its key ingredient is the \emph{sold-alone price} for the grand bundle, or equivalently, the revenue-maximizing price when the seller is restricted to pure bundling.

\begin{corollary}
\label{Cor_Pure}
Let $p$ be a revenue-maximizing pure bundling mechanism. Then
\begin{enumerate}
\item $p$ is optimal only if \label{Cor_Pure_1}
\begin{equation}
\label{Eqn_Pure_Nec}
p(\mathbf{1})\leq \zeta (0).
\end{equation}
\item Suppose that $\zeta$ crosses the 45-degree line exactly once. Then $p$ is optimal if \label{Cor_Pure_2}
\begin{equation}
\label{Eqn_Pure_Suf}
p(\mathbf{1})\leq x+\zeta (x),\quad \forall\, x\in [0,1].
\end{equation}
\item In addition to part \ref{Cor_Pure_2}, suppose that $\zeta$ is nondecreasing or concave. Then $p$ is optimal if and only if \eqref{Eqn_Pure_Nec} holds. \label{Cor_Pure_3}
\end{enumerate}
\end{corollary}

\begin{proof}
Part \ref{Cor_Pure_1} follows from \Cref{Prop_Additive}, and part \ref{Cor_Pure_3} follows from parts \ref{Cor_Pure_1} and \ref{Cor_Pure_2}, so we will only prove part \ref{Cor_Pure_2}.

Since $p$ is revenue-maximizing, it should satisfy $\MR (\mathbf{1};p)=0$. Suppose that $\hat{p}$ is optimal and is not a pure bundling mechanism. We will discuss two cases, each illustrated in a separate panel of \Cref{Fig_Pure}. In each panel, the green curve represents $\zeta$, the red (or cyan) curve represents $1-u_{p}(x,1)$ (or $1-u_{\hat{p}}(x,1)$) and its symmetric counterpart below the 45-degree line, and demand sets are separated by dashed lines.


Panel (a) depicts the first case where $\hat{p}(\mathbf{1})>p(\mathbf{1})$, which implies $\mathcal{D}(\mathbf{1};\hat{p})\subset \mathcal{D}(\mathbf{1};p)$ and $\MR (\mathbf{1};\hat{p})<\MR (\mathbf{1};p)=0$ by \eqref{Eqn_Pure_Suf}, a contradiction to \Cref{Prop_Additive}. Panel (b) shows the second case where $\hat{p}(\mathbf{1})\leq p(\mathbf{1})$, which implies $\mathcal{D}(\mathbf{1};p)\subset \mathcal{D}(Q;\hat{p})$, where $Q$ is the set of all active bundles under $\hat{p}$. Therefore, $\MR (Q;\hat{p})>\MR (\mathbf{1};p)=0$, a contradiction to \eqref{Eqn_MR}.

Hence, our supposition is not true, meaning that $p$ is optimal.
\end{proof}

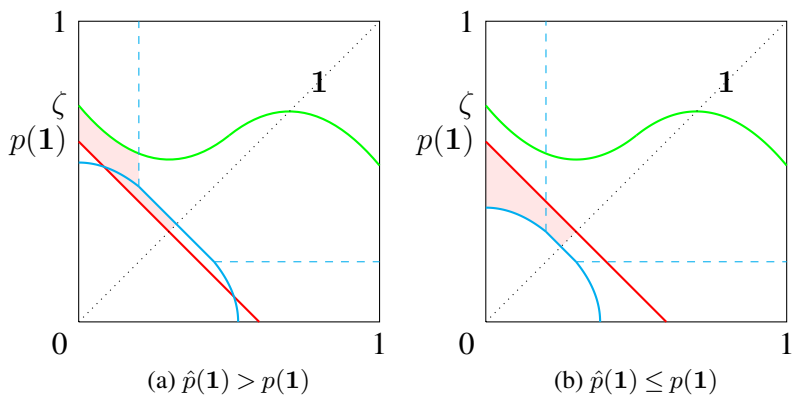
\begin{figure}[ht]
\centering
\begin{tikzpicture}[baseline={(0,0)},x=4.5cm,y=4.5cm]
\fill [red!10,domain=0:0.2] plot ({\x}, {2*(\x-0.3)^2+0.54}) -- (0.2,0.45) -- (0.325,0.325) -- (0.3,0.3) -- (0,0.6);
\draw (0,0) node [below left] {0} -- (1,0) node [below] {1} -- (1,1) -- (0,1) node [left] {1} -- (0,0);
\draw [domain=0:0.5, thick, green] plot ({\x}, {2*(\x-0.3)^2+0.54});
\draw [domain=0.5:1, thick, green] plot ({\x}, {-2*(\x-0.7)^2+0.7});
\draw (0,0.72) node [left] {$\zeta$};
\draw [thick, red] (0,0.6) -- (0.6,0);
\draw (0,0.6) node [left] {$p(\mathbf{1})$};
\draw [dotted] (0,0) -- (1,1);
\draw [domain=0:0.2, thick, cyan] plot ({\x}, {-2*(\x)^2+0.53});
\draw [domain=0:0.2, thick, cyan] plot ({-2*(\x)^2+0.53},{\x});
\draw [thick, cyan] (0.2,0.45) -- (0.45,0.2);
\draw [dashed, cyan] (0.2,0.45) -- (0.2,1);
\draw [dashed, cyan] (0.45,0.2) -- (1,0.2);
\draw (0.8,0.8) node {$\mathbf{1}$};
\draw (0.5,-0.2) node {\footnotesize (a) $\hat{p}(\mathbf{1})>p(\mathbf{1})$};
\end{tikzpicture}
\begin{tikzpicture}[baseline={(0,0)},x=4.5cm,y=4.5cm]
\fill [red!10,domain=0:0.2] plot ({\x}, {-2*(\x)^2+0.38}) -- (0.25,0.25) -- (0.3,0.3) -- (0,0.6);
\draw (0,0) node [below left] {0} -- (1,0) node [below] {1} -- (1,1) -- (0,1) node [left] {1} -- (0,0);
\draw [domain=0:0.5, thick, green, smooth] plot ({\x}, {2*(\x-0.3)^2+0.54});
\draw [domain=0.5:1, thick, green, smooth] plot ({\x}, {-2*(\x-0.7)^2+0.7});
\draw (0,0.72) node [left] {$\zeta$};
\draw [thick, red] (0,0.6) -- (0.6,0);
\draw (0,0.6) node [left] {$p(\mathbf{1})$};
\draw [dotted] (0,0) -- (1,1);
\draw [domain=0:0.2, thick, cyan] plot ({\x}, {-2*(\x)^2+0.38});
\draw [domain=0:0.2, thick, cyan] plot ({-2*(\x)^2+0.38},{\x});
\draw [thick, cyan] (0.2,0.3) -- (0.3,0.2);
\draw [dashed, cyan] (0.2,0.3) -- (0.2,1);
\draw [dashed, cyan] (0.3,0.2) -- (1,0.2);
\draw (0.8,0.8) node {$\mathbf{1}$};
\draw (0.5,-0.2) node {\footnotesize (b) $\hat{p}(\mathbf{1})\leq p(\mathbf{1})$};
\end{tikzpicture}
\caption{Pure bundling is optimal.}
\label{Fig_Pure}
\end{figure}

For two additive products, \citet[Theorem 6]{hart2014good} establish the optimality of pure bundling on an unbounded type space, while \citet{menicucci2015optimality} provide sufficient conditions when the type space is a rectangle bounded away from the origin. In this sense, \Cref{Cor_Pure} complements these results.

We can apply a similar approach to verify the optimality of deterministic mechanisms.

\begin{corollary}
\label{Cor_Mixed}
Let $p$ be a revenue-maximizing deterministic mechanism. Then
\begin{enumerate}
\item $p$ is optimal only if \label{Cor_Mixed_1}
\begin{equation}
\label{Eqn_Mixed}
p(\mathbf{e}_{2})\leq \zeta (p(\mathbf{1})-p(\mathbf{e}_{2})),
\end{equation}
where we let $p(\mathbf{e}_{2})=p(\mathbf{1})$ if $\mathbf{e}_{2}$ is inactive.
\item Suppose that $\zeta$ is concave. Then the converse is also true, that is, $p$ is optimal if and only if \eqref{Eqn_Mixed} holds. \label{Cor_Mixed_2}
\end{enumerate}
\end{corollary}

\begin{proof} 
Part \ref{Cor_Mixed_1} follows from \Cref{Prop_Additive}, so we will only prove part \ref{Cor_Mixed_2}.

Without loss of generality, we assume that $\mathbf{e}_{2}$ is active under $p$. Since $p$ is revenue-maximizing, it should satisfy $\MR (\mathbf{e}_{2};p)=\MR (\mathbf{1};p)=0$. Suppose that $\hat{p}$ is optimal and involves randomization. By \Cref{Cor_Ccv}, there exist $0\leq b^{0}<a^{1}\leq 1$ such that $u_{\hat{p}}(x,1)=1-\zeta (x)$ on $[b^{0},a^{1})$. Note that $\hat{p}(\mathbf{e}_{2})\leq p(\mathbf{e}_{2})$ and $\hat{p}(\mathbf{1})\leq p(\mathbf{1})$ cannot hold simultaneously due to \eqref{Eqn_Mixed}. Hence, we will discuss three cases, each illustrated in a separate panel of \Cref{Fig_Mixed}, where the notation follows the same convention as in \Cref{Fig_Pure}. The red shaded area in each panel illustrates the difference in MR surplus between $\hat{p}$ and $p$.

Panel (a) depicts the first case where $\hat{p}(\mathbf{e}_{2})>p(\mathbf{e}_{2})$ and $\hat{p}(\mathbf{1})>p(\mathbf{1})$. Since $p$ satisfies \eqref{Eqn_Mixed}, we have $u_{\hat{p}}(x,1)<u_{p}(x,1)$ for all $x\in [0,1]$, meaning that $\MR (\mathbf{e}_{2};\hat{p})+\MR (\mathbf{1};\hat{p})<\MR (\mathbf{e}_{2};p)+\MR (\mathbf{1};p)=0$. Panel (b) shows the second case where $\hat{p}(\mathbf{e}_{2})\leq p(\mathbf{e}_{2})$ and $\hat{p}(\mathbf{1})>p(\mathbf{1})$. In this case, $\mathcal{D}(\mathbf{1};\hat{p})\subset \mathcal{D}(\mathbf{1};p)$, which implies $\MR (\mathbf{1};\hat{p})<\MR (\mathbf{1};p)=0$ by \eqref{Eqn_Mixed}. Panel (c) illustrates the third case where $\hat{p}(\mathbf{e}_{2})>p(\mathbf{e}_{2})$ and $\hat{p}(\mathbf{1})\leq p(\mathbf{1})$. By a similar argument, there are $\mathcal{D}(\mathbf{1};p)\subset \mathcal{D}(\mathbf{1};\hat{p})$ and $\MR (\mathbf{1};\hat{p})>\MR (\mathbf{1};p)=0$. All three cases lead to a contradiction to \Cref{Prop_Additive}.

Hence, our supposition is not true, meaning that $p$ is optimal.
\end{proof}

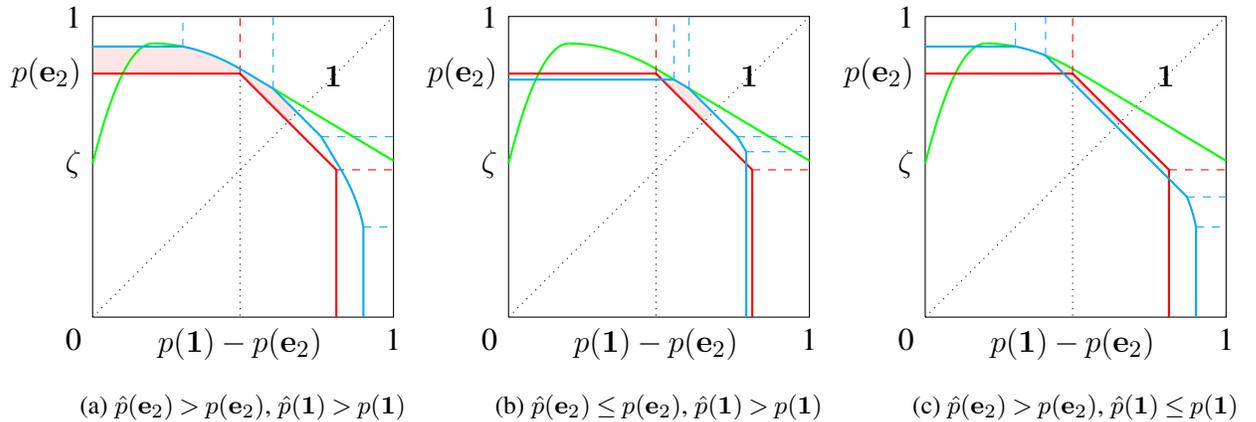
\begin{figure}[ht]
\centering
\scalebox{0.9}{
\begin{tikzpicture}[baseline={(0,0)},x=4.5cm,y=4.5cm]
\fill [red!10, domain=0.3:0.5] plot ({\x}, {-1*(\x-0.2)^2+0.91}) -- (0.6,0.76) -- (0.68,0.68) -- (0.65,0.65) -- (0.49,0.81) -- (0,0.81) -- (0,0.9);
\draw (0,0) node [below left] {0} -- (1,0) node [below] {1} -- (1,1) -- (0,1) node [left] {1} -- (0,0);
\draw [dotted] (0,0) -- (1,1);
\draw (0,0.51) node [left] {$\zeta$};
\draw [domain=0:0.2, thick, green] plot ({\x}, {-10*(\x-0.2)^2+0.91});
\draw [domain=0.2:0.3, thick, green] plot ({\x}, {-1*(\x-0.2)^2+0.91});
\draw [thick, green] (0.6,0.76) -- (1,0.52);
\draw [thick, red] (0,0.81) -- (0.49,0.81) -- (0.81,0.49) -- (0.81,0);
\draw [dashed, red] (0.49,0.81) -- (0.49,1);
\draw [dashed, red] (0.81,0.49) -- (1,0.49);
\draw (0,0.81) node [left] {$p(\mathbf{e}_{2})$};
\draw [dotted] (0.49,0.81) -- (0.49,0) node [below] {$p(\mathbf{1})-p(\mathbf{e}_{2})$};
\draw [thick, cyan] (0,0.9) -- (0.3,0.9);
\draw [domain=0.3:0.5, thick, cyan] plot ({\x}, {-1*(\x-0.2)^2+0.91});
\draw [domain=0.3:0.5, thick, cyan] plot ({-1*(\x-0.2)^2+0.91}, {\x});
\draw [thick, cyan] (0.5,0.82) -- (0.6,0.76) -- (0.76,0.6) -- (0.82,0.5);
\draw [thick, cyan] (0.9,0.3) -- (0.9,0);
\draw [dashed, cyan] (0.6,0.76) -- (0.6,1);
\draw [dashed, cyan] (0.76,0.6) -- (1,0.6);
\draw [dashed, cyan] (0.3,0.9) -- (0.3,1);
\draw [dashed, cyan] (0.9,0.3) -- (1,0.3);
\draw (0.8,0.8) node {$\mathbf{1}$};
\draw (0.5,-0.3) node {\footnotesize (a) $\hat{p}(\mathbf{e}_{2})> p(\mathbf{e}_{2})$, $\hat{p}(\mathbf{1})>p(\mathbf{1})$};
\end{tikzpicture}%
\begin{tikzpicture}[baseline={(0,0)},x=4.5cm,y=4.5cm]
\fill [red!10, domain=0.49:0.5] plot ({\x}, {-1*(\x-0.2)^2+0.91}) -- (0.6,0.76) -- (0.68,0.68) -- (0.65,0.65) -- (0.49,0.81);
\draw (0,0) node [below left] {0} -- (1,0) node [below] {1} -- (1,1) -- (0,1) node [left] {1} -- (0,0);
\draw [dotted] (0,0) -- (1,1);
\draw (0,0.51) node [left] {$\zeta$};
\draw [domain=0:0.2, thick, green] plot ({\x}, {-10*(\x-0.2)^2+0.91});
\draw [domain=0.2:0.5, thick, green] plot ({\x}, {-1*(\x-0.2)^2+0.91});
\draw [thick, green] (0.5,0.82) -- (0.55,0.79);
\draw [thick, green] (0.6,0.76) -- (1,0.52);
\draw [thick, red] (0,0.81) -- (0.49,0.81) -- (0.81,0.49) -- (0.81,0);
\draw [dashed, red] (0.49,0.81) -- (0.49,1);
\draw [dashed, red] (0.81,0.49) -- (1,0.49);
\draw (0,0.81) node [left] {$p(\mathbf{e}_{2})$};
\draw [dotted] (0.49,0.81) -- (0.49,0) node [below] {$p(\mathbf{1})-p(\mathbf{e}_{2})$};
\draw [thick, cyan] (0,0.79) -- (0.55,0.79) -- (0.6,0.76) -- (0.76,0.6) -- (0.79,0.55) -- (0.79,0);
\draw [dashed, cyan] (0.55,0.79) -- (0.55,1);
\draw [dashed, cyan] (0.79,0.55) -- (1,0.55);
\draw [dashed, cyan] (0.6,0.76) -- (0.6,1);
\draw [dashed, cyan] (0.76,0.6) -- (1,0.6);
\draw (0.8,0.8) node {$\mathbf{1}$};
\draw (0.5,-0.3) node {\footnotesize (b) $\hat{p}(\mathbf{e}_{2})\leq p(\mathbf{e}_{2})$, $\hat{p}(\mathbf{1})>p(\mathbf{1})$};
\end{tikzpicture}%
\begin{tikzpicture}[baseline={(0,0)},x=4.5cm,y=4.5cm]
\fill [red!10, domain=0.4:0.49] plot ({\x}, {-1*(\x-0.2)^2+0.91}) -- (0.49,0.81) -- (0.65,0.65) -- (0.635,0.635);
\draw (0,0) node [below left] {0} -- (1,0) node [below] {1} -- (1,1) -- (0,1) node [left] {1} -- (0,0);
\draw [dotted] (0,0) -- (1,1);
\draw (0,0.51) node [left] {$\zeta$};
\draw [domain=0:0.2, thick, green] plot ({\x}, {-10*(\x-0.2)^2+0.91});
\draw [domain=0.2:0.3, thick, green] plot ({\x}, {-1*(\x-0.2)^2+0.91});
\draw [domain=0.4:0.5, thick, green] plot ({\x}, {-1*(\x-0.2)^2+0.91});
\draw [thick, green] (0.5,0.82) -- (1,0.52);
\draw [thick, red] (0,0.81) -- (0.49,0.81) -- (0.81,0.49) -- (0.81,0);
\draw [dashed, red] (0.49,0.81) -- (0.49,1);
\draw [dashed, red] (0.81,0.49) -- (1,0.49);
\draw (0,0.81) node [left] {$p(\mathbf{e}_{2})$};
\draw [dotted] (0.49,0.81) -- (0.49,0) node [below] {$p(\mathbf{1})-p(\mathbf{e}_{2})$};
\draw [thick, cyan] (0,0.9) -- (0.3,0.9);
\draw [domain=0.3:0.4, thick, cyan] plot ({\x}, {-1*(\x-0.2)^2+0.91}) -- (0.635,0.635);
\draw [domain=0.3:0.4, thick, cyan] plot ({-1*(\x-0.2)^2+0.91}, {\x}) -- (0.635,0.635);
\draw [thick, cyan] (0.9,0.3) -- (0.9,0);
\draw [dashed, cyan] (0.3,0.9) -- (0.3,1);
\draw [dashed, cyan] (0.9,0.3) -- (1,0.3);
\draw [dashed, cyan] (0.4,0.87) -- (0.4,1);
\draw [dashed, cyan] (0.87,0.4) -- (1,0.4);
\draw (0.8,0.8) node {$\mathbf{1}$};
\draw (0.5,-0.3) node {\footnotesize (c) $\hat{p}(\mathbf{e}_{2})>p(\mathbf{e}_{2})$, $\hat{p}(\mathbf{1})\leq p(\mathbf{1})$};
\end{tikzpicture}
}
\caption{A deterministic mechanism is optimal.}
\label{Fig_Mixed}
\end{figure}

If the revenue-maximizing deterministic mechanism is a pure bundling mechanism, then $p(\mathbf{e}_{2})=p(\mathbf{1})$, and \eqref{Eqn_Mixed} degenerates to \eqref{Eqn_Pure_Nec} in \Cref{Cor_Pure}.

\subsection{Summary}
\label{Sec_Additive_Summary}


\Cref{Sec_Additive_Menu} provides intuitive explanations for many ``unusual'' features of optimal mechanisms. That optimal mechanisms may involve probabilistic bundling or even infinitely many bundles is widely recognized as a puzzling phenomenon in multiproduct monopoly \citep{carroll2017robustness}. By \Cref{Cor_Cvx}, we know that randomization arises from the need to ``iron'' the $\zeta$ function over certain intervals. Moreover, Corollaries \ref{Cor_Ccv} and \ref{Cor_CcvDec} show that an infinite menu size is simply a consequence of the strict convexity of $\zeta$, that is, the pointwise solution to $\MR=0$ is strictly convex and therefore cannot be implemented by a finite menu.
In this sense, our paper not only makes a technical contribution, but also points out that the underlying force in multiproduct monopoly is nothing but the Myerson-Bulow-Roberts insights for single-product monopoly: one starts from the pointwise solution for MR and ``irons'' it to satisfy the convexity constraint.

\Cref{Sec_Additive_Bundle} delivers clear messages on bundling strategies. If $\zeta$ satisfies the conditions in part \ref{Cor_Pure_3} of \Cref{Cor_Pure}, then pure bundling is optimal if and only if the sold-alone price of the grand bundle is sufficiently low. If $\zeta$ is concave, then a deterministic mechanism is optimal if and only if the price of each product ($p(\mathbf{e}_{2})$) and the incremental price relative to the grand bundle ($p(\mathbf{1})-p(\mathbf{e}_{2})$) satisfies a condition related to $\zeta$. Our use of sold-alone prices as a verification criterion parallels \citet{ghili2023characterization} and \citet{yang2025nested}, but we additionally incorporates the $\zeta$ function due to the multidimensional nature of types.

From a broader perspective, \Cref{Sec_Additive} addresses an open question raised by DDT: Are there broad conditions under which the optimal mechanism admits a simple closed-form description or takes the form of \eqref{Eqn_MM}? Our analysis shows that SRS is such an environment. Moreover, existing works such as \citet{hart2014good,hart2019selling} show that restricting attention to simple menus may secure only a small fraction of the optimal revenue. The results in this section, however, provide guarantees under which simple menus (or infinite menus with a simple cutoff structure) are without loss of generality.



\section{Substitutes and Complements}
\label{Sec_SubsComp}

This section relaxes the additive-value assumption in \Cref{Sec_Additive} to incorporate substitutability or complementarity between products. Motivated by the existing literature, such as \citet{armstrong2013more} and \citet{kash2016optimal}, we assume that a type-$\mathbf{x}$ buyer receives utility $k\mathbf{x}\cdot \mathbf{1}=k(x_{1}+x_{2})$ from consuming the grand bundle $\mathbf{1}$. Thus, the two products are perfect substitutes if $k= 1/2$, partial substitutes if $1/2<k<1$, and partial complements if $k>1$.
The seller is allowed to offer lotteries, so the allocation space $\mathcal{Q}_{k}$ is the convex hull of $\{\mathbf{0},\mathbf{e}_{1},\mathbf{e}_{2},\mathbf{k}\}$. Note that $\mathcal{Q}_{k}$ with $k<1/2$ is the same as $\mathcal{Q}_{1/2}$, so we omit this case and focus solely on $k\geq 1/2$ in the following discussion. The density function $f$ is still SRS. For consistency, we still refer to elements in $\mathcal{Q}_{k}$ as bundles. When there is no risk of confusion, we also use $(q,1)_{k}$ to represent a bundle that assigns probability $q$ to the grand bundle and probability $1-q$ to product 2. That is,
\begin{equation*}
(q,1)_{k}=q \mathbf{k}+(1-q)\mathbf{e}_{2}=(qk,qk+1-q).
\end{equation*}


\subsection{Optimal Mechanisms}
\label{Sec_SC_Menu}

We provide a geometric description of how \Cref{Lem_MM} extends to the present setting.
For different values of $k$, the three pictures at the top row of \Cref{Fig_SubsComp} illustrate the allocation space $\mathcal{Q}_{k}$ and highlight the (potentially) active bundles in red. Given these bundles, the three pictures in the bottom row of \Cref{Fig_SubsComp} use dashed red lines to depict the relationship between demand sets. The vectors $\mathbf{n}_{k}=(1/k-1,1)$ and $\mathbf{n}_{k}'=(1,1/k-1)$ are outward normals to $\mathcal{Q}_{k}$ and also determine the boundaries of the corresponding demand sets. Moreover, active bundles are orthogonal to outward normals: $(q,1)_{k}\cdot \mathbf{n}_{k}=(1,q)_{k}\cdot \mathbf{n}_{k}'=1$. Thus, for any $\tau>0$, the indirect utility of type $(x_{1},1)$ determines the indirect utility of type $(x_{1},1)-\tau \mathbf{n}_{k}$. We use $(x_{1},x_{2})_{k}$ to denote such a type with $\tau =1-x_{2}$, that is,
\begin{equation*}
(x_{1},x_{2})_{k}=(x_{1},1)-(1-x_{2})\mathbf{n}_{k}=\left(x_{1}-\frac{(1-k)(1-x_{2})}{k},x_{2}\right).
\end{equation*}
With these observations, \Cref{Lem_MM} extends directly to \Cref{Lem_MM_k}, which is largely due to \citet[Proposition 2]{pavlov2011property}.

\begin{lemma}
\label{Lem_MM_k}
Suppose that $f$ is SRS. For any optimal symmetric mechanism, a bundle is active only if it takes the form of $(q,1)_{k}$ or $(1,q)_{k}$. Moreover, whenever $(x_{1},x_{2})_{k}\in \mathcal{X}$,
\begin{equation}
\label{Eqn_MM_k}
u_{p}((x_{1},x_{2})_{k})=
\begin{cases}
\max\{\overline{u}(x_{2})-(1-x_{1}),0\}, & \text{if }x_{2}<x_{1}-\frac{(1-x_{2})(1-k)}{k}, \\
\max\{\overline{u}(x_{1})-(1-x_{2}),0\}, & \text{if }x_{2}\geq x_{1}-\frac{(1-x_{2})(1-k)}{k},
\end{cases}
\end{equation}
where $\overline{u}(\cdot)=u_{p}(\cdot,1)=u_{p}(1,\cdot)$.
\end{lemma}

\begin{figure}[ht]
\centering
\begin{minipage}{0.33\textwidth}
\begin{center}
\begin{tikzpicture}[baseline={(0,0)},x=4.5cm,y=4.5cm]
\draw [<->] (0,1) node [left] {$q_{2}$} -- (0,0) node [below left] {0} -- (1,0) node [below] {$q_{1}$};
\draw (0.15,0.15) node {$\mathcal{Q}_{k}$};
\draw [thick, red] (0,0.6) -- (0.6,0);
\draw [dotted] (0,0.3) node [left] {$k$} -- (0.3,0.3) -- (0.3,0) node [below] {$k$};
\draw (0,0.6) node [left] {1};
\draw (0.6,0) node [below] {1};
\draw [->] (0.3,0.3) -- (0.4,0.4) node [midway, below right] {$\mathbf{n}_{k}$};
\end{tikzpicture}
\begin{tikzpicture}[baseline={(0,0)},x=4.5cm,y=4.5cm]
\draw (0,0) node [below left] {0} -- (1,0) node [below] {1} -- (1,1) -- (0,1) node [left] {1} -- (0,0);
\draw [dotted] (0,0) -- (1,1);
\draw [thick, red] (0,0.5) -- (0.2,0.4) -- (0.4,0.2) -- (0.5,0);
\draw [dashed, red] (0.2,0.4) -- (0.8,1);
\draw [dashed, red] (0.4,0.2) -- (1,0.8);
\draw [dashed, red] (0,0.6) -- (0.4,1);
\draw [dashed, red] (0.6,0) -- (1,0.4);
\draw [->] (0.5,0.7) -- (0.6,0.8) node [midway, below right] {$\mathbf{n}_{k}$};
\draw (0.12,0.88) node {\footnotesize $(0,1)$};
\draw (0.3,0.7) node {\footnotesize $(q,1-q)$};
\draw (0.5,0.5) node {\footnotesize $\left(\frac{1}{2},\frac{1}{2}\right)$};
\draw (0.7,0.3) node {\footnotesize $(1-q,q)$};
\draw (0.88,0.12) node {\footnotesize $(1,0)$};
\draw (0.5,-0.2) node {\footnotesize (a) $k=1/2$};
\end{tikzpicture}
\end{center}
\end{minipage}%
\begin{minipage}{0.33\textwidth}
\begin{center}
\begin{tikzpicture}[baseline={(0,0)},x=4.5cm,y=4.5cm]
\draw [<->] (0,1) node [left] {$q_{2}$} -- (0,0) node [below left] {0} -- (1,0) node [below] {$q_{1}$};
\draw (0.2,0.2) node {$\mathcal{Q}_{k}$};
\draw [thick, red] (0,0.6) -- (0.36,0.36) -- (0.6,0);
\draw [dotted] (0,0.36) node [left] {$k$} -- (0.36,0.36) -- (0.36,0) node [below] {$k$};
\draw (0,0.6) node [left] {1};
\draw (0.6,0) node [below] {1};
\draw [->] (0.18,0.48) -- (0.26,0.6) node [midway, below right] {$\mathbf{n}_{k}$};
\draw [->] (0.48,0.18) -- (0.6,0.26) node [midway, above left] {$\mathbf{n}_{k}'$};
\end{tikzpicture}
\begin{tikzpicture}[baseline={(0,0)},x=4.5cm,y=4.5cm]
\draw (0,0) node [below left] {0} -- (1,0) node [below] {1} -- (1,1) -- (0,1) node [left] {1} -- (0,0);
\draw [dotted] (0,0) -- (1,1);
\draw [thick, red] (0,0.5) -- (0.2,0.4) -- (0.4,0.2) -- (0.5,0);
\draw [dashed, red] (0.2,0.4) -- (0.6,1);
\draw [dashed, red] (0.4,0.2) -- (1,0.6);
\draw [dashed, red] (0,0.55) -- (0.3,1);
\draw [dashed, red] (0.55,0) -- (1,0.3);
\draw [->] (0.4,0.7) -- (0.48,0.82) node [midway, below right] {$\mathbf{n}_{k}$};
\draw [->] (0.7,0.4) -- (0.82,0.48) node [midway, above left] {$\mathbf{n}_{k}'$};
\draw (0.11,0.89) node {\footnotesize $(0,1)$};
\draw (0.27,0.73) node {\footnotesize $(q_{1},q_{2})$};
\draw (0.5,0.5) node {\footnotesize $(k,k)$};
\draw (0.73,0.27) node {\footnotesize $(q_{2},q_{1})$};
\draw (0.89,0.11) node {\footnotesize $(1,0)$};
\draw (0.5,-0.2) node {\footnotesize (b) $1/2<k<1$};
\end{tikzpicture}
\end{center}
\end{minipage}%
\begin{minipage}{0.33\textwidth}
\begin{center}
\begin{tikzpicture}[baseline={(0,0)},x=4.5cm,y=4.5cm]
\draw [<->] (0,1) node [left] {$q_{2}$} -- (0,0) node [below left] {0} -- (1,0) node [below] {$q_{1}$};
\draw (0.4,0.4) node {$\mathcal{Q}_{k}$};
\draw [thick, red] (0,0.6) -- (0.72,0.72) -- (0.6,0);
\draw [dotted] (0,0.72) node [left] {$k$} -- (0.72,0.72) -- (0.72,0) node [below] {$k$};
\draw (0,0.6) node [left] {1};
\draw (0.6,0) node [below] {1};
\draw [->] (0.36,0.66) -- (0.34,0.78) node [midway, right] {$\mathbf{n}_{k}$};
\draw [->] (0.66,0.36) -- (0.78,0.34) node [midway, above] {$\mathbf{n}_{k}'$};
\end{tikzpicture}
\begin{tikzpicture}[baseline={(0,0)},x=4.5cm,y=4.5cm]
\draw (0,0) node [below left] {0} -- (1,0) node [below] {1} -- (1,1) -- (0,1) node [left] {1} -- (0,0);
\draw [dotted] (0,0) -- (1,1);
\draw [thick, red] (0,0.5) -- (0.2,0.4) -- (0.4,0.2) -- (0.5,0);
\draw [dashed, red] (0.2,0.4) -- (0.1,1);
\draw [dashed, red] (0.4,0.2) -- (1,0.1);
\draw [dashed, red] (0.52,0.52) -- (0.44,1);
\draw [dashed, red] (1,0.44) -- (0.52,0.52);
\draw [dashed, red] (0.3,0.3) -- (0.52,0.52);
\draw [->] (0.48,0.76) -- (0.46,0.88) node [midway, right] {$\mathbf{n}_{k}$};
\draw [->] (0.76,0.48) -- (0.88,0.46) node [midway, above] {$\mathbf{n}_{k}'$};
\draw (0.1,0.6) node {\footnotesize $(0,1)$};
\draw (0.35,0.65) node {\footnotesize $(q_{1},q_{2})$};
\draw (0.7,0.7) node {\footnotesize $(k,k)$};
\draw (0.65,0.35) node {\footnotesize $(q_{2},q_{1})$};
\draw (0.6,0.1) node {\footnotesize $(1,0)$};
\draw (0.5,-0.2) node {\footnotesize (c) $k>1$};
\end{tikzpicture}
\end{center}
\end{minipage}%
\caption{Active bundles under different $k$.}
\label{Fig_SubsComp}
\end{figure}


\Cref{Prop_Additive} also admits a natural generalization to the current setting. Again, let $\overline{u}(\cdot)=u_{p}(\cdot,1)=u_{p}(1,\cdot)$ and $[a^{q},b^{q}]=\{x\in [0,1]: \overline{u}_{-}'(x)\leq qk\leq \overline{u}_{+}'(x)\}$. When $q<1$, the demand set for $(q,1)_{k}$ is
\begin{equation}
\label{Eqn_D_k}
\mathcal{D}((q,1)_{k})=\{(x_{1},x_{2})_{k}\in \mathcal{X}:a^{q}\leq x_{1}\leq b^{q}, x_{2}\geq \max(1-\overline{u}(x_{1}),z_{k}(x_{1}))\},
\end{equation}
and the cumulative MR is
\begin{equation*}
\Phi_{k}(x_{1})=\int^{1}_{\max(1-\overline{u}(x_{1}),z_{k}(x_{1}))}\phi ((x_{1},x_{2})_{k})\dd x_{2}-f(x_{1},1),
\end{equation*}
where
\begin{equation}
\label{Eqn_z_k}
z_{k}(x_{1})=
\begin{cases}
1-x_{1} & k=1/2, \\
\max \left(\frac{kx_{1}+k-1}{k-1},\frac{kx_{1}+k-1}{2k-1}\right) & 1/2<k<1, \\
\frac{kx_{1}+k-1}{2k-1} & k>1.
\end{cases}
\end{equation}
Some details of the derivation and edge cases are relegated to \Cref{Prf_Prop1&2}.
When $k=1$, $z_{k}$ degenerates to $x_{1}$, and $\Phi_{k}$ degenerates to $\Phi$ defined in \eqref{Eqn_Phi}.

\Cref{Prop_SubsComp} below extends \Cref{Prop_Additive} to $k \geq 1/2$.


\begin{proposition}
\label{Prop_SubsComp}
Suppose that $f$ is SRS. Then in any symmetric optimal mechanism:
\begin{enumerate}
\item \label{Prop_SubsComp_2} If $(q,1)_{k}$ is pooling, then \eqref{Eqn_MR} implies
\begin{equation}
\label{Eqn_IntPhi_k}
\MR ((q,1)_{k})=\int^{b^{q}}_{a^{q}}\Phi_{k}(x_{1})\dd x_{1}=0,
\end{equation}
unless $k>1$ and $a^{q}=0$. Moreover, \eqref{Eqn_SOC} implies $\Phi_{k}(a^{q})\geq 0$ unless $k>1$ and $a^{q}=0$, and $\Phi_{k}(b^{q})\geq 0$ unless $q=1$. 
\item \label{Prop_SubsComp_3} If $(q,1)_{k}$ is separating, then $a^{q}=b^{q}$, and \eqref{Eqn_MR} implies $\Phi_{k}(a^{q})=0$ unless $k>1$ and $a^{q}=0$.
\end{enumerate}
\end{proposition}

\Cref{Cor_GrandBundle} can be extended to $k>1/2$. In this case, if the grand bundle $\mathbf{k}$ is not pooling, then there exists an active bundle $(q,1)_{k}$ with $q<1$ and $b^{q}=1$. However, it can be computed that $z_{k}(1)=1$ and $\Phi_{k}(1)=-f(1,1)<0$, a contradiction to part \ref{Prop_SubsComp_2} of \Cref{Prop_SubsComp}.

\begin{corollary}
\label{Cor_GrandBundle_k}
Let $k>1/2$. Then in any symmetric optimal mechanism, the grand bundle $\mathbf{k}$ is pooling.
\end{corollary}

Similar to \eqref{Eqn_Zeta}, we define the auxiliary function $\zeta_{k}$ for each $k$ as:
\begin{equation*}
\zeta_{k} (x_{1}) =\sup\bigg\{\zeta\in [0,1]:\int^{1}_{\zeta }\phi ((x_{1},x_{2})_{k})\dd x_{2}-f(x_{1},1)\geq 0\bigg\}.
\end{equation*}
By construction, the sign of $\Phi_{k}(x)$ is the same as the sign of $\zeta_{k}(x)-\max(1-\overline{u}(x),z_{k}(x))$. Just as in \Cref{Sec_Additive}, the shape of $\zeta_{k}$ provides useful information about the optimal menu size. To illustrate, we revisit the example of a uniform type distribution. Recall that this example is graphically illustrated in \Cref{Fig_Uniform_k}.

\begin{example}[Uniform Distribution Revisited]
\label{Exp_Uniform_k}
Suppose that $f(\mathbf{x})=1$. Then by direct computation, when $1/2\leq k<1$, $\zeta_{k}(x)$ equals $2/3$ if $x\geq (1-k)/3k$ and zero otherwise; when $k>1$, $\zeta_{k}(x)$ equals $2/3$ if $x\leq (2k+1)/3k$ and zero otherwise. The set of active bundles in optimal mechanisms varies with $k$:
\begin{enumerate}
\item If $k=1/2$, then only $\mathbf{e}_{1}$ and $\mathbf{e}_{2}$ are active.
\item If $1/2<k<1$, then only $\mathbf{e}_{1}$, $\mathbf{e}_{2}$, and $\mathbf{k}$ are active.
\item If $k>1$, then at most three bundles are active, namely, $(q,1)_{k}$, $(1,q)_{k}$, and $\mathbf{k}$, where $q\in (0,1)$.
\label{Exp_Uniform_k_3}
\end{enumerate}
We perform numerical simulations for part \ref{Exp_Uniform_k_3}. The results show that the optimal mechanism requires randomization when $k$ is close to 1 and becomes pure bundling when $k$ is large. The cutoff is approximately 1.17.
\end{example}

\Cref{Exp_Uniform_k} can be proved by an argument similar to that in Corollaries \ref{Cor_Inc} and \ref{Cor_Cvx}. Here we show an interesting property for the $k>1$ case that corresponds to part \ref{Exp_Uniform_k_3} of the example: When $\zeta$ is nonincreasing, $\mathbf{e}_{1}$ and $\mathbf{e}_{2}$ must be inactive. Suppose that $\mathbf{e}_{2}$ is active in some mechanism depicted by the cyan curves in \Cref{Fig_k_Large}. If $\mathbf{e}_{2}$ is separating, then bundles that are close to $\mathbf{e}_{2}$ can only have MR surplus, violating part \ref{Prop_SubsComp_3} of \Cref{Prop_SubsComp}. If $\mathbf{e}_{2}$ is pooling, then by part \ref{Prop_SubsComp_2} of \Cref{Prop_SubsComp}, $\Phi_{k}(b^{0})\geq 0$, which means $\max(1-\overline{u}(b^{0}),z_{k}(b^{0}))$ lies below $\zeta_{k}(b^{0})$. This leads to $\MR (\mathbf{e}_{2})>0$ and a contradiction to \Cref{Prop_SubsComp}.

\begin{figure}[ht]
\centering
\begin{tikzpicture}[baseline={(0,0)},x=4.5cm,y=4.5cm]
\fill [red!10,domain=0.1:0.7]  plot ({\x}, {-(5/18)*(\x-0.1)^2+0.8}) -- (0,0.7) -- (0,1);
\fill [red!10,domain=0:0.5] plot ({\x}, {-(1/2)*(\x)^2+0.625}) -- (0.7,0.7) -- (0,0.7);
\fill [blue!10,domain=0.7:1]  plot ({\x}, {-(5/18)*(\x-0.1)^2+0.8}) -- (1,1);
\draw (0,0) node [below left] {0} -- (1,0) node [below] {1} -- (1,1) -- (0,1) node [left] {1} -- (0,0);
\draw [dotted] (0,0) -- (1,1);
\draw [domain=0.1:1, thick, green] plot ({\x}, {-(5/18)*(\x-0.1)^2+0.8});
\draw [dotted] (0.1,0.8) -- (0,1);
\draw [dotted] (0,0.8) node [left] {$\zeta_{k}$} -- (0.1,0.8);
\draw [domain=0:0.5, thick, cyan] plot ({\x}, {-(1/2)*(\x)^2+0.625});
\draw [domain=0:0.5, thick, cyan] plot ({-(1/2)*(\x)^2+0.625}, {\x});
\draw [dashed] (0.09,0.62) -- (0,0.8);
\draw [dashed] (0.12,0.62) -- (0,0.86);
\draw [->] (0.3,0.8) -- (0.25,0.9) node [midway, right] {$\mathbf{n}_{k}$};
\draw [->] (0.1,0.5) -- (0.1,0.6);
\draw (0.2,0.45) node {$(q,1)_{k}$};
\draw (0.5,-0.2) node {\footnotesize (a) $\mathbf{e}_{2}$ is separating};
\end{tikzpicture}
\begin{tikzpicture}[baseline={(0,0)},x=4.5cm,y=4.5cm]
\fill [red!10,domain=0.1:0.7]  plot ({\x}, {-(5/18)*(\x-0.1)^2+0.8}) -- (0.5,0.5) -- (0.4,0.6) -- (0,0.6) -- (0,1);
\fill [blue!10,domain=0.7:1]  plot ({\x}, {-(5/18)*(\x-0.1)^2+0.8}) -- (1,1);
\draw (0,0) node [below left] {0} -- (1,0) node [below] {1} -- (1,1) -- (0,1) node [left] {1} -- (0,0);
\draw [dotted] (0,0) -- (1,1);
\draw [domain=0.1:1, thick, green, smooth] plot ({\x}, {-(5/18)*(\x-0.1)^2+0.8});
\draw [dotted] (0.1,0.8) -- (0,1);
\draw [dotted] (0,0.8) node [left] {$\zeta_{k}$} -- (0.1,0.8);
\draw [thick, cyan] (0,0.6) -- (0.4,0.6) -- (0.6,0.4) -- (0.6,0);
\draw [dashed] (0.4,0.6) -- (0.2,1);
\draw [dashed] (0.6,0.4) -- (1,0.2);
\draw [->] (0.3,0.8) -- (0.25,0.9) node [midway, right] {$\mathbf{n}_{k}$};
\draw (0.8,0.8) node {$\mathbf{k}$};
\draw (0.15,0.8) node {$\mathbf{e}_{2}$};
\draw (0.8,0.15) node {$\mathbf{e}_{1}$};
\draw (0.5,-0.2) node {\footnotesize (b) $\mathbf{e}_{2}$ is pooling};
\end{tikzpicture}
\caption{Bundle $\mathbf{e}_{2}$ is inactive when $k>1$ and $\zeta_{k}$ is nonincreasing.}
\label{Fig_k_Large}
\end{figure}

For other shapes of $\zeta_{k}$, we summarize the results in \Cref{Tab_3}, which is a detailed elaboration of \Cref{Tab_2} in the Introduction. Their formal statements are relegated to Corollaries \ref{Cor_k=1/2}--\ref{Cor_k>1} in \Cref{Sec_SubsComp_OptMech}.

\begin{table}[ht]
\caption{Potentially active bundles for different $k$.}
\label{Tab_3}
\centering
\resizebox{\textwidth}{!}{
\begin{tabular}{cccc}
\hline
 & \hspace*{2cm} $k=1/2$ \hspace*{2cm} & \hspace*{2cm} $1/2<k<1$ \hspace*{2cm} & $k>1$ \\
 & \Cref{Cor_k=1/2} & \Cref{Cor_1/2<k<1} & \Cref{Cor_k>1} \\
\hline
Nondecreasing $\zeta_{k}$ & $\mathbf{e}_{1}$, $\mathbf{e}_{2}$ & $\mathbf{e}_{1}$, $\mathbf{e}_{2}$, $\mathbf{k}$ & $(q,1)_{k}$, $(1,q)_{k}$, $\mathbf{k}$ \\
\cline{2-4}
Convex $\zeta_{k}$
& $\mathbf{e}_{1}$, $\mathbf{e}_{2}$, $(q,1)_{k}$, $(1,q)_{k}$ & $\mathbf{e}_{1}$, $\mathbf{e}_{2}$, $(q,1)_{k}$, $(1,q)_{k}$, $\mathbf{k}$ & $(q,1)_{k}$, $(1,q)_{k}$, $\mathbf{k}$ \\
\cline{2-4}
Concave $\zeta_{k}$ & \multicolumn{3}{c}{General structure: $\overline{u}'(x)=q$ for $x\in [0,b^{q})$; 
$\overline{u}(x)=1-\zeta_{k}(x)$ for $x\in [b^{q},a^{1})$;
$\overline{u}'(x)=k$ for $x\in [a^{1},1]$.} \\
& \multicolumn{3}{c}{When $k<1$, $\mathbf{e}_{1}$ and $\mathbf{e}_{2}$ are pooling, so $q=0$. When $k>1/2$, $\mathbf{k}$ is pooling, so $a^{1}<1$.} \\
\hline
\end{tabular}
}
\end{table}

\Cref{Fig_k_Small} depicts the MR surplus and deficit for different $k$ when $\zeta_{k}$ is nondecreasing. It is worth noting that the projection from $\zeta_{k}$ (the green curve) to the top boundary follows the direction of $\mathbf{n}_{k}$. For example, in panel (a), the intersection of the green curve and the 45-degree line indicates the value of $\zeta_{k}(1)$. Similarly, the left endpoint of the green curve indicates the value of $\zeta_{k}(x^{*})$ for some $x^{*}>0$. When $x<x^{*}$, there is $\zeta_{k}(x)=0$.

\begin{figure}[ht]
\centering
\begin{tikzpicture}[baseline={(0,0)},x=4.5cm,y=4.5cm]
\fill [blue!10] (0,0.48) -- (0,1) -- (0.52,1);
\fill [red!10,domain=0:0.8]  plot ({\x}, {0.5*(\x)^2+0.48}) -- (0.4,0.4) -- (0,0.4);
\draw (0,0) node [below left] {0} -- (1,0) node [below] {1} -- (1,1) -- (0,1) node [left] {1} -- (0,0);
\draw [dotted] (0,0) -- (0.4,0.4);
\draw [dashed] (0.4,0.4) -- (1,1);
\draw [domain=0:0.8, thick, green, smooth] plot ({\x}, {0.5*(\x)^2+0.48});
\draw [dotted] (0,0.48) -- (0.52,1);
\draw (0,0.48) node [left] {$\zeta_{k}$};
\draw [thick, red] (0,0.4) -- (0.4,0.4) -- (0.4,0);
\draw (0.4,0.7) node {$\mathbf{e}_{2}$};
\draw (0.7,0.4) node {$\mathbf{e}_{1}$};
\draw [->] (0.8,0.8) -- (0.9,0.9);
\draw (0.91,0.79) node {$\mathbf{n}_{k}$};
\draw (0.5,-0.2) node {\footnotesize (a) $k=1/2$};
\end{tikzpicture}
\begin{tikzpicture}[baseline={(0,0)},x=4.5cm,y=4.5cm]
\fill [blue!10] (0,0.44) -- (0.28,1) -- (0,1);
\fill [red!10,domain=0:0.6]  plot ({\x}, {(4/9)*(\x)^2+0.44}) -- (0.35,0.35) -- (0.3,0.4) -- (0,0.4);
\fill [blue!10,domain=0.6:0.9]  plot ({\x}, {(4/9)*(\x)^2+0.44}) -- (1,1);
\draw (0,0) node [below left] {0} -- (1,0) node [below] {1} -- (1,1) -- (0,1) node [left] {1} -- (0,0);
\draw [dotted] (0,0) -- (1,1);
\draw [domain=0:0.9, thick, green, smooth] plot ({\x}, {(4/9)*(\x)^2+0.44});
\draw [dotted] (0.9,0.8) -- (1,1);
\draw [dotted] (0,0.44) -- (0.28,1);
\draw (0,0.44) node [left] {$\zeta_{k}$};
\draw [thick, red] (0,0.4) -- (0.3,0.4) -- (0.4,0.3) -- (0.4,0);
\draw [dashed] (0.3,0.4) -- (0.6,1);
\draw [dashed] (0.4,0.3) -- (1,0.6);
\draw (0.8,0.8) node {$\mathbf{k}$};
\draw (0.2,0.8) node {$\mathbf{e}_{2}$};
\draw (0.8,0.2) node {$\mathbf{e}_{1}$};
\draw [->] (0.5,0.8) -- (0.55,0.9);
\draw (0.6,0.8) node {$\mathbf{n}_{k}$};
\draw (0.5,-0.2) node {\footnotesize (b) $1/2<k<1$};
\end{tikzpicture}
\begin{tikzpicture}[baseline={(0,0)},x=4.5cm,y=4.5cm]
\fill [red!10] (0,0.754) -- (0.164,0.672) -- (0,1);
\fill [blue!10,domain=0.2:0.3]  plot ({\x}, {0.4*(\x-0.2)^2+0.6}) -- (0.164,0.672);
\fill [red!10,domain=0.3:0.7]  plot ({\x}, {0.4*(\x-0.2)^2+0.6}) -- (0.477,0.477) -- (0.4,0.554);
\fill [blue!10,domain=0.7:1]  plot ({\x}, {0.4*(\x-0.2)^2+0.6}) -- (1,1);
\draw (0,0) node [below left] {0} -- (1,0) node [below] {1} -- (1,1) -- (0,1) node [left] {1} -- (0,0);
\draw [dotted] (0,0) -- (1,1);
\draw [domain=0.2:1, thick, green, smooth] plot ({\x}, {0.4*(\x-0.2)^2+0.6});
\draw [dotted] (0.2,0.6) -- (0,1);
\draw [dotted] (1,0.856) -- (0.928,1);
\draw [dotted] (0,0.6) node [left] {$\zeta_{k}$} -- (0.2,0.6);
\draw [thick, red] (0,0.754) -- (0.4,0.554) -- (0.554,0.4) -- (0.754,0);
\draw [dashed] (0.4,0.554) -- (0.177,1);
\draw [dashed] (0.554,0.4) -- (1,0.177);
\draw [->] (0.277,0.8) -- (0.227,0.9) node [midway, right] {$\mathbf{n}_{k}$};
\draw (0.8,0.8) node {$\mathbf{k}$};
\draw (0.15,0.8) node {$(q,1)_{k}$};
\draw (0.8,0.15) node {$(1,q)_{k}$};
\draw (0.5,-0.2) node {\footnotesize (c) $k>1$};
\end{tikzpicture}
\caption{Surplus and deficit of MR for different $k$ when $\zeta_{k}$ is nondecreasing.}
\label{Fig_k_Small}
\end{figure}

\subsection{Optimality of Bundling Strategies}
\label{Sec_SC_Bundle}

Many of the results in \Cref{Sec_Additive_Bundle} extend to settings with nonadditive products.

In contrast to \Cref{Sec_Additive_Bundle}, \Cref{Exp_Uniform_k} suggests that separate selling is optimal when products are perfect substitutes and types are uniformly distributed. \Cref{Cor_Separate_k=1/2} generalizes this intuition to other distributions.

\begin{corollary}
\label{Cor_Separate_k=1/2}
Let $k=1/2$ and $p$ be a revenue-maximizing separate selling mechanism. Then
\begin{enumerate}
\item $p$ is optimal only if \label{Cor_Separate_1}
\begin{equation}
\label{Eqn_Separate_1}
p(\mathbf{e}_{2})\leq \zeta_{k}(1).
\end{equation}
\item Suppose there exists $x^{*}\in [0,1]$ such that $\zeta_{k}(x)=0$ for $x\in [0,x^{*})$. Then $p$ is optimal if \label{Cor_Separate_2}
\begin{equation}
\label{Eqn_Separate_2}
p(\mathbf{e}_{2})\leq \zeta_{k} (x),\quad\, x\in [x^{*},1].
\end{equation}
\end{enumerate}
\end{corollary}

\begin{proof}
Part \ref{Cor_Separate_1} follows from \Cref{Prop_SubsComp}, so we will only prove part \ref{Cor_Separate_2}.

Since $p$ is revenue-maximizing, it should satisfy $\MR (\mathbf{e}_{2};p)=0$. Suppose that $\hat{p}$ is optimal and is not a separate selling mechanism. We will discuss two cases, each illustrated in a separate panel of \Cref{Fig_Separate}, where the notations follow the same convention as \Cref{Fig_Pure}.

Panel (a) depicts the first case where $\hat{p}(\mathbf{e}_{2})>p(\mathbf{e}_{2})$. In this case, $\mathcal{D}(\mathbf{e}_{2};\hat{p})\subset \mathcal{D}(\mathbf{e}_{2};p)$, which, by \eqref{Eqn_Separate_2}, implies that $\MR (\mathbf{e}_{2};\hat{p})<\MR (\mathbf{e}_{2};p)=0$, a contradiction to \Cref{Prop_SubsComp}. Panel (b) shows the second case where $\hat{p}(\mathbf{e}_{2})\leq p(\mathbf{e}_{2})$. In this case, $1-u_{\hat{p}}(x,1)$ is bounded away from $\zeta$ by $1-u_{p}(x,1)$, meaning that no active bundle of the form $(q,1-q)$ with $q>0$ can satisfy parts \ref{Prop_SubsComp_2} or \ref{Prop_SubsComp_3} of \Cref{Prop_SubsComp}.

Hence, our supposition is not true, meaning that $p$ is optimal.
\end{proof}

\begin{figure}[ht]
\centering
\begin{tikzpicture}[baseline={(0,0)},x=4.5cm,y=4.5cm]
\fill [red!10,domain=0.3:0.6]  plot ({\x}, {2*(\x-0.4)^2+0.52}) -- (0.4,0.4) -- (0,0.4) -- (0,0.44) -- (0.2,0.44);
\draw (0,0) node [below left] {0} -- (1,0) node [below] {1} -- (1,1) -- (0,1) node [left] {1} -- (0,0);
\draw [dotted] (0,0) -- (0.4,0.4);
\draw [domain=0:0.3, thick, green] plot ({\x}, {-2*(\x-0.2)^2+0.56});
\draw [domain=0.3:0.6, thick, green] plot ({\x}, {2*(\x-0.4)^2+0.52});
\draw [dotted] (0,0.48) -- (0.52,1);
\draw (0,0.48) node [left] {$\zeta_{k}$};
\draw [thick, red] (0,0.4) -- (0.4,0.4) -- (0.4,0) node [below, black] {$p(\mathbf{e}_{2})$};
\draw [dashed, red] (0.4,0.4) -- (1,1);
\draw [thick, cyan] (0,0.44) -- (0.2,0.44) -- (0.44,0.2) -- (0.44,0);
\draw [dashed, cyan] (0.2,0.44) -- (0.76,1);
\draw [dashed, cyan] (0.44,0.2) -- (1,0.76);
\draw (0.2,0.7) node {$\mathbf{e}_{2}$};
\draw (0.7,0.2) node {$\mathbf{e}_{1}$};
\draw [->] (0.8,0.8) -- (0.9,0.9) node [near end, left] {$\mathbf{n}_{k}$};
\draw (0.5,-0.3) node {\footnotesize (a) $\hat{p}(\mathbf{e}_{2})>p(\mathbf{e}_{2})$};
\end{tikzpicture}
\begin{tikzpicture}[baseline={(0,0)},x=4.5cm,y=4.5cm]
\fill [blue!10] (0,0.36) -- (0.12,0.36) -- (0.24,0.24) -- (0.4,0.4) -- (0,0.4);
\draw (0,0) node [below left] {0} -- (1,0) node [below] {1} -- (1,1) -- (0,1) node [left] {1} -- (0,0);
\draw [dotted] (0,0) -- (0.4,0.4);
\draw [domain=0:0.3, thick, green] plot ({\x}, {-2*(\x-0.2)^2+0.56});
\draw [domain=0.3:0.6, thick, green] plot ({\x}, {2*(\x-0.4)^2+0.52});
\draw [dotted] (0,0.48) -- (0.52,1);
\draw (0,0.48) node [left] {$\zeta_{k}$};
\draw [thick, red] (0,0.4) -- (0.4,0.4) -- (0.4,0) node [below, black] {$p(\mathbf{e}_{2})$};
\draw [dashed, red] (0.4,0.4) -- (1,1);
\draw [thick, cyan] (0,0.36) -- (0.12,0.36) -- (0.36,0.12) -- (0.36,0);
\draw [dashed, cyan] (0.12,0.36) -- (0.76,1);
\draw [dashed, cyan] (0.36,0.12) -- (1,0.76);
\draw (0.2,0.7) node {$\mathbf{e}_{2}$};
\draw (0.7,0.2) node {$\mathbf{e}_{1}$};
\draw [->] (0.8,0.8) -- (0.9,0.9) node [near end, left] {$\mathbf{n}_{k}$};
\draw (0.5,-0.3) node {\footnotesize (b) $\hat{p}(\mathbf{e}_{2})\leq p(\mathbf{e}_{2})$};
\end{tikzpicture}
\caption{Separate selling is optimal for $k=1/2$.}
\label{Fig_Separate}
\end{figure}

Using a technique similar to that in \Cref{Cor_Pure}, \Cref{Cor_Pure_k>1} provides a sufficient condition for the optimality of pure bundling when products are complements.

\begin{corollary}
\label{Cor_Pure_k>1}
Let $k>1$ and $p$ be a revenue-maximizing pure bundling mechanism. Suppose there exists $x^{*}\in [0,1]$ such that $\zeta_{k}(x)=0$ for all $x\in (x^{*},1]$. Then $p$ is optimal if
\begin{equation}
\label{Eqn_Pure_k}
\frac{p(\mathbf{k})}{k}\leq x+\zeta_{k}(x),\quad\, x\in [0,x^{*}].
\end{equation}
\end{corollary}

\begin{proof}
Since $p$ is revenue-maximizing, it should satisfy $\MR (\mathbf{k};p)=0$. Suppose that $\hat{p}$ is optimal and is not a pure bundling mechanism. We will discuss two cases, each illustrated in a separate panel of \Cref{Fig_Pure_k}, where the notation follows the same convention as in \Cref{Fig_Pure}. Note that the red line segment is the boundary between $\mathcal{D}_{k}(\mathbf{k})$ and $\mathcal{D}_{k}(\mathbf{0})$ and is given by $x_{2}=p(\mathbf{k})/k-x_{1}$.

Panel (a) depicts the first case where $\hat{p}(\mathbf{k})>p(\mathbf{k})$, which implies $\mathcal{D}_{k}(\mathbf{k};\hat{p})\subset \mathcal{D}_{k}(\mathbf{k};p)$ and $\MR (\mathbf{k};\hat{p})<\MR (\mathbf{k};p)=0$ by \eqref{Eqn_Pure_k}, a contradiction to \Cref{Prop_SubsComp}. Panel (b) shows the second case where $\hat{p}(\mathbf{k})\leq p(\mathbf{k})$, which implies $\mathcal{D}_{k}(\mathbf{k};p)\subset \mathcal{D}_{k}(Q;\hat{p})$, where $Q$ is the set of all active bundles under $\hat{p}$. Therefore, $\MR (Q;\hat{p})>\MR (\mathbf{k};p)=0$, a contradiction to \eqref{Eqn_MR}.

Hence, our supposition is not true, meaning that $p$ is optimal.
\end{proof}

\begin{figure}[ht]
\centering
\begin{tikzpicture}[baseline={(0,0)},x=4.5cm,y=4.5cm]
\fill [red!10,domain=0:0.2] (0,0.85) -- (0.2,0.45) -- (0.325,0.325) -- (0.3,0.3) -- (0,0.6);
\draw (0,0) node [below left] {0} -- (1,0) node [below] {1} -- (1,1) -- (0,1) node [left] {1} -- (0,0);
\draw [domain=0.2:0.6, thick, green, smooth] plot ({\x}, {2*(\x-0.4)^2+0.52});
\draw [domain=0.6:1, thick, green, smooth] plot ({\x}, {-2*(\x-0.8)^2+0.68});
\draw [dotted] (0.2,0.6) -- (0,1);
\draw [dotted] (1,0.6) -- (0.8,1);
\draw (1,0.6) node [right] {$\zeta_{k}$};
\draw [thick, red] (0,0.6) -- (0.6,0);
\draw (0.6,0) node [below] {$p(\mathbf{k})/k$};
\draw [dotted] (0,0) -- (1,1);
\draw [domain=0:0.2, thick, cyan] plot ({\x}, {-2*(\x)^2+0.53});
\draw [domain=0:0.2, thick, cyan] plot ({-2*(\x)^2+0.53},{\x});
\draw [thick, cyan] (0.2,0.45) -- (0.45,0.2);
\draw [dashed, cyan] (0.2,0.45) -- (0,0.85);
\draw [dashed, cyan] (0.45,0.2) -- (0.85,0);
\draw (0.8,0.8) node {$\mathbf{k}$};
\draw [->] (0.125,0.75) -- (0.075,0.85) node [midway, right] {$\mathbf{n}_{k}$};
\draw (0.5,-0.3) node {\footnotesize (a) $\hat{p}(\mathbf{k})>p(\mathbf{k})$};
\end{tikzpicture}
\begin{tikzpicture}[baseline={(0,0)},x=4.5cm,y=4.5cm]
\fill [red!10,domain=0:0.2] plot ({\x}, {-2*(\x)^2+0.38}) -- (0.25,0.25) -- (0.3,0.3) -- (0,0.6);
\draw (0,0) node [below left] {0} -- (1,0) node [below] {1} -- (1,1) -- (0,1) node [left] {1} -- (0,0);
\draw [domain=0.2:0.6, thick, green, smooth] plot ({\x}, {2*(\x-0.4)^2+0.52});
\draw [domain=0.6:1, thick, green, smooth] plot ({\x}, {-2*(\x-0.8)^2+0.68});
\draw [dotted] (0.2,0.6) -- (0,1);
\draw [dotted] (1,0.6) -- (0.8,1);
\draw (1,0.6) node [right] {$\zeta_{k}$};
\draw [thick, red] (0,0.6) -- (0.6,0);
\draw (0.6,0) node [below] {$p(\mathbf{k})/k$};
\draw [dotted] (0,0) -- (1,1);
\draw [domain=0:0.2, thick, cyan] plot ({\x}, {-2*(\x)^2+0.38});
\draw [domain=0:0.2, thick, cyan] plot ({-2*(\x)^2+0.38},{\x});
\draw [thick, cyan] (0.2,0.3) -- (0.3,0.2);
\draw [dashed, cyan] (0.2,0.3) -- (0,0.7);
\draw [dashed, cyan] (0.3,0.2) -- (0.7,0);
\draw (0.8,0.8) node {$\mathbf{k}$};
\draw [->] (0.125,0.75) -- (0.075,0.85) node [midway, right] {$\mathbf{n}_{k}$};
\draw (0.5,-0.3) node {\footnotesize (b) $\hat{p}(\mathbf{k})\leq p(\mathbf{k})$};
\end{tikzpicture}
\caption{Pure bundling is optimal for $k>1$.}
\label{Fig_Pure_k}
\end{figure}

\subsection{Summary}

\Cref{Sec_SubsComp} generalizes the MR approach in \Cref{Sec_Additive} to study substitutes or complements. The key technical insight is that substitutability or complementarity between the two products can be summarized by a single parameter $k$. This parameter determines the normal vector $\mathbf{n}_{k}$, which in turn shapes the geometry of demand sets and modifies the marginal revenue conditions. 

While several papers have examined the sale of multiple products to nonadditive buyers, they either assume one-dimensional types \citep{bergemann2021optimality,ghili2023characterization,yang2025nested} or focus on a specific type of mechanism \citep{haghpanah2021pure}.\footnote{\citet{armstrong1996multiproduct} and \citet{rochet1998ironing} allow for general utility functions and multidimensional types, but do not specifically study substitutes and complements.} Focusing on unidimensional types gives a tractable representation of total revenue from any feasible allocation. One can therefore handle buyers' heterogeneous ordinal rankings over items. However, randomization typically does not play a role in this line of literature.

The multidimensional setting in this section is not without loss of generality. In particular, we implicitly assume that substitutability or complementarity is the same for all types and is entirely captured by $k$. There are at least two reasons that motivate our modeling choice. First, our setting gives rise to a rich collection of optimal mechanisms that remain unexplored in the literature. For instance, mixed or probabilistic bundling is typically suboptimal in unidimensional settings, whereas most results summarized in \Cref{Tab_3} require mixed or probabilistic bundling, and in some cases even infinitely many bundles. Second, our framework allows for comparative statics with respect to $k$, thereby clarifying the role of substitutability or complementarity in shaping optimal mechanisms. \Cref{Exp_Uniform_k} is such an instance: As complementarity increases, the optimal mechanism transitions from separate selling to mixed bundling, then to probabilistic bundling, and ultimately ends up with pure bundling. A small degree of substitutability ($1/2<k<1$) calls for mixed bundling, whereas a small degree of complementarity ($1<k<1.17$) calls for randomization.\footnote{When products are perfect substitutes ($k=1/2$), the model in this section is commonly referred to as having \emph{unit-demand} buyers, and has been studied by \citet{thanassoulis2004haggling,pavlov2011optimal,pavlov2011property}, and \citet{hashimoto2025selling}. 
In particular, \citet{pavlov2011optimal} observes that the buyer's indirect utility in any feasible mechanism only depends on $|x_{2}-x_{1}|$, and he solves an example with uniform type distribution by invoking a change of variable. However, it is unclear how to extend his approach to the $k>1/2$ case.}


\section{General Features of Optimal Mechanisms}
\label{Sec_General}

When the type and the allocation spaces are general convex sets, obtaining sharp menu-size bounds is more difficult than in the two-product case. However, \Cref{Thm_MR} still yields novel insights into the structure of optimal mechanisms.

\subsection{Pooling Bundles}
\label{Sec_Pooling}

This section applies \eqref{Eqn_SOC} to identify bundles that must be pooling in any optimal mechanism. To simplify analysis, we assume that both $\mathcal{X}$ and $\mathcal{Q}$ are convex polytopes with $\mu (\underline{\bd}(\mathcal{X}))=0$. We also assume that $f$ has a positive lower bound on $\overline{\bd}(\mathcal{X})$. Indeed, these assumptions are general enough to encompass the settings in Sections \ref{Sec_Additive} and \ref{Sec_SubsComp}.

\begin{proposition}
\label{Prop_Pooling}
Let $\mathcal{X}$ and $\mathcal{Q}$ be convex polytopes. Suppose that $\mu (\underline{\bd}(\mathcal{X}))=0$ and that $f$ is bounded away from zero on $\overline{\bd}(\mathcal{X})$.  
If there exist a vertex $\mathbf{q} \in \mathcal{Q}$ and a point $\mathbf{x} \in \overline{\bd}(\mathcal{X})$ such that
\begin{equation}
\label{Eqn_Pooling}
\mathbf{x}\cdot (\mathbf{q}-\hat{\mathbf{q}})>0,\quad \forall\, \hat{\mathbf{q}}\in \mathcal{Q}\setminus \{\mathbf{q}\},
\end{equation}
then $\mathbf{q}$ is pooling in any optimal mechanism.
\end{proposition}

Intuitively, if there exists a bundle $\mathbf{q}$ that is a vertex of $\mathcal{Q}$, as well as a type $\mathbf{x}$ on the top boundary of $\mathcal{X}$ that strictly prefers $\mathbf{q}$ to all other bundles, then $\mathbf{q}$ must be pooling in any optimal mechanism, as long as the type distribution satisfies two mild conditions in the proposition. One can interpret \Cref{Prop_Pooling} as an (almost) distribution-free lower bound on the optimal menu size for selling an arbitrary number of products.
Broadly speaking, \Cref{Prop_Pooling} shows that in any optimal mechanism, there are at least some types end up purchasing their most preferred bundles. In this sense, it extends the ``no distortion at the top'' property from unidimensional screening problems \citep{laffont2002theory} to multiproduct monopoly with general type and allocation spaces.

\Cref{Prop_Pooling} is proved by contradiction. Since $\mathcal{X}$ is a polytope, its boundary can be expressed as the union of finitely many \emph{facets}, that is, exposed faces of dimension $N-1$. Suppose that $\mathbf{q}$ is not pooling. Using \eqref{Eqn_Pooling}, we can prove that $\mathcal{D}(\mathbf{q})$ is a subset of some facet $\mathcal{S}(\mathbf{n})$ that belongs to the top boundary. If we perturb the price of $\mathbf{q}$ by a small amount $\varepsilon <0$, its demand set after the price perturbation will be close to $\mathcal{S}(\mathbf{n})$. By integrating $\MR$ along the direction of $\mathbf{n}$, we are actually using $\phi$ in the interior of $\mathcal{X}$ to offset $f$ on the top boundary of $\mathcal{X}$, as demonstrated in Sections \ref{Sec_Additive_Menu} and \ref{Sec_Additive_Bundle}. However, as $\varepsilon \to 0$, the demand set shrinks, and we will finally have $\MR_{\varepsilon-}(\mathbf{q})<0$ for $\varepsilon$ sufficiently close to zero, which contradicts \eqref{Eqn_SOC}.

We can apply \Cref{Prop_Pooling} to study a generalization of \Cref{Sec_SubsComp}. Let $\mathcal{X}=[0,1]^{N}$, and let $\mathcal{N}=\{1,\dotsc,N\}$ be the set of products. A type-$\mathbf{x}$ buyer's utility from consuming a subset of products $\mathcal{M}\subseteq \mathcal{N}$ is given by
\begin{equation*}
k_{\mathcal{M}} \left(\mathbf{x}\cdot \sum_{n\in \mathcal{M}}\mathbf{e}_{n}\right)=k_{\mathcal{M}}\sum_{n\in \mathcal{M}}x_{n},
\end{equation*}
where $k_{\mathcal{M}}>0$ for any nonempty $\mathcal{M}$, and $k_{\varnothing}=0$.
The allocation space $\mathcal{Q}$ is therefore the convex hull of $\{k_{\mathcal{M}}(\sum_{n\in \mathcal{M}}\mathbf{e}_{n}):\mathcal{M}\subseteq \mathcal{N}\}$. Applying $\mathbf{q}=k_{\mathcal{M}}\sum_{n\in \mathcal{M}}\mathbf{e}_{n}$ and $\mathbf{x}=\sum_{n\in \mathcal{M}}\mathbf{e}_{n}$ to \eqref{Eqn_Pooling} yields the following sufficient condition for $\mathcal{M}$ to be pooling in any optimal mechanism:
\begin{equation}
\label{Eqn_Pooling_Alt}
|\mathcal{M}|k_{\mathcal{M}}-|\mathcal{M}\cap \mathcal{M}'|k_{\mathcal{M}'}>0,\quad \forall\, \mathcal{M}'\subseteq \mathcal{N}\text{ and }\mathcal{M}'\neq \mathcal{M}.
\end{equation}

\Cref{Cor_GeneralPooling} summarizes three major predictions of \eqref{Eqn_Pooling_Alt}. When there is no risk of confusion, we use $\mathcal{M}$ to represent a deterministic bundle that contains products in $\mathcal{M}$.

\begin{corollary}
\label{Cor_GeneralPooling}
\begin{enumerate}
\item \label{Cor_GeneralPooling_1} Suppose that $k_{\mathcal{M}}\leq k_{\mathcal{M}'}$ for any $\mathcal{M}\subset \mathcal{M}'$, that is, all products are (weak) complements. Then the grand bundle $\mathcal{N}$ is pooling in any optimal mechanism.
\item Suppose that $k_{\mathcal{M}}> k_{\mathcal{M}'}$ for any $\mathcal{M}\subset \mathcal{M}'$, that is, all products are substitutes. Then bundles $\{1\},\dotsc ,\{N\}$ are all pooling in any optimal mechanism.
\item Suppose that $\mathcal{M}^{*}$ is the unique maximizer of $|\mathcal{M}|k_{\mathcal{M}}$. Then $\mathcal{M}^{*}$ is pooling in any optimal mechanism.
\end{enumerate}
\end{corollary}

When $N=2$, \Cref{Cor_GeneralPooling} implies Corollaries \ref{Cor_GrandBundle} and \ref{Cor_GrandBundle_k}.
When $k_{\mathcal{M}}=1$ for any $\mathcal{M}\subseteq \mathcal{N}$, we are back to the standard setting where $\mathcal{Q}=[0,1]^{N}$. In this setting, part \ref{Cor_GeneralPooling_1} of \Cref{Cor_GeneralPooling} suggests that the grand bundle is always pooling in any optimal mechanism. As a comparison, \citet[Theorem 16]{manelli2007multidimensional} establish the same feature for any \emph{piecewise-linear} mechanism that is optimal for some \emph{independent} type distribution. We are not aware of similar results that cover general convex type spaces or allocation spaces.


\subsection{Active Bundles}
\label{Sec_Active}

This section applies (\hyperref[Eqn_MR]{FOCs}) to characterize active bundles under strict regularity.
Recall that $f$ is \emph{strictly regular} if
\begin{equation*}
\phi(\mathbf{x})=\mathbf{x}\cdot \nabla f(\mathbf{x})+(N+1)f(\mathbf{x})>0\text{ for almost all }\mathbf{x}\in \mathcal{X}.
\end{equation*}
Again, we assume that both $\mathcal{X}$ and $\mathcal{Q}$ are convex polytopes with $\mu (\underline{\bd}(\mathcal{X}))=0$. 
\Cref{Prop_Active} states the main result of this section.

\begin{proposition}
\label{Prop_Active}
Let $\mathcal{X}$ and $\mathcal{Q}$ be convex polytopes. Suppose that $\mu (\underline{\bd}(\mathcal{X}))=0$ and that $f$ is strictly regular. Then for any optimal mechanism, $\mathbf{q}\in \mathcal{Q}$ is active only if there exists a facet $\mathcal{S}(\mathbf{n})\subseteq \overline{\bd} (\mathcal{X})$ and a point $\mathbf{x}\in \mathcal{S}(\mathbf{n})$ such that
\begin{equation}
\label{Eqn_Active}
\mathbf{x}\cdot (\mathbf{q}-\hat{\mathbf{q}})>0, \quad \forall\, \hat{\mathbf{q}}\in \mathcal{Q}\text{ such that }\hat{\mathbf{q}}=\mathbf{q}+\tau \mathbf{n}\text{ for some }\tau\neq 0.
\end{equation}
\end{proposition}

Intuitively, a bundle $\mathbf{q}$ is active only if it is strictly preferred by some type on the top boundary among all bundles lying along that type's normal direction $\mathbf{n}$. Clearly, \eqref{Eqn_Active} is weaker than \eqref{Eqn_Pooling}, as the latter requires $\mathbf{q}$ to be preferred to all other bundles.\footnote{In the proof of \Cref{Prop_Active}, we show that \eqref{Eqn_Active} reduces to requiring that $\mathbf{q}+\tau \mathbf{n} \notin \mathcal{Q}$ for any $\tau>0$.}

If $\mathcal{X}=[0,1]^{N}$, then \Cref{Prop_Active} specializes to Proposition 2 in \citet{pavlov2011property}. If we further have $\mathcal{Q}=[0,1]^{N}$, then \Cref{Prop_Active} specializes to Lemma 2 in \citet{mcafee1988multidimensional}. 

The contribution of \Cref{Prop_Active} to the literature is that it applies to any convex polytope, which allows us to handle some ``non-standard'' settings. For instance, \citet{bikhchandani2022selling,bikhchandani2024rank} study the model of selling $N$ \emph{identical} products to a buyer, whose marginal value for the $n$-th unit is denoted by $x_{n}\in [0,1]$. Since the $(n+1)$-th unit must be sold after the $n$-th unit, feasible allocations must satisfy $q_{n}\geq q_{n+1}$ for $1\leq n\leq N-1$. Therefore, the allocation space is $\mathcal{Q}=\{\mathbf{q}:0\leq q_{1}\leq \cdots \leq q_{N}\leq 1\}$. Let $c \in (0,1]$. The buyer has decreasing marginal values (DMV) if $x_{n}\geq c x_{n+1}$ for $1\leq n\leq N-1$, and has increasing marginal values (IMV) if $x_{n}\leq c x_{n+1}$ for $1\leq n\leq N-1$.

Under DMV, the top boundary $\overline{\bd}(\mathcal{X})$ consists of only one facet $\{\mathbf{x}\in [0,1]^{N}:x_{n}\geq c x_{n+1}, \forall\, 1\leq n\leq N-1,x_{1}=1\}$, whose outward unit normal is $\mathbf{e}_{1}$. By \Cref{Prop_Active}, a bundle $\mathbf{q}$ is active only if it is strictly preferred by some buyer with $x_{1}=1$ among all bundles of the form $\mathbf{q}+\tau \mathbf{e}_{1}$. This implies $q_{1}=1$.

Under IMV, the top boundary $\overline{\bd}(\mathcal{X})$ consists of only one facet $\{\mathbf{x}\in [0,1]^{N}:x_{n}\leq c x_{n+1}, \forall\, 1\leq n\leq N-1,x_{N}=1\}$, whose outward unit normal is $\mathbf{e}_{N}$. By a similar approach, we can conclude that a bundle $\mathbf{q}$ is active only if $q_{N-1}=q_{N}$.

\Cref{Cor_BM} summarizes our analysis.

\begin{corollary}
\label{Cor_BM}
\begin{enumerate}
\item \label{Cor_BM_1} Under DMV, $\mathbf{q}$ is active only if $q_{1}=1$.
\item \label{Cor_BM_2} Under IMV, $\mathbf{q}$ is active only if $q_{N-1}=q_{N}$.
\end{enumerate}
\end{corollary}

When $N=2$, \Cref{Cor_BM} specializes to Propositions 1 and 7 of \citet{bikhchandani2022selling}. 


\subsection{Exclusion Set}
\label{Sec_Exclusion}

The \emph{exclusion set} consists of buyers that purchase the zero bundle, namely $\mathcal{D}(\mathbf{0})$. In their Sections 6 and 7, DDT present several instances in which optimal mechanisms can be constructed via an exclusion set that has positive measure with respect to $f$, which we shall refer to as a \emph{nondegenerate} exclusion set. However, the general existence of such an exclusion set remains an open question. \Cref{Prop_Zero} gives a sufficient condition ensuring that the zero bundle is sold to a positive mass of buyers like a pooling nonzero bundle, thereby providing a foundation for the construction in DDT.

\begin{proposition}
\label{Prop_Zero}
Suppose that $\mu (\mathcal{S}(-\mathbf{q}))>-1$ for all $\mathbf{q}\in \mathcal{Q}\setminus \{\mathbf{0}\}$. Then in any optimal mechanism, $\mathcal{D}(\mathbf{0})$ has positive measure with respect to $f$. In particular, this supposition is satisfied if $\mu (\underline{\bd}(\mathcal{X}))=0$.
\end{proposition}

When $\mathcal{X}=\mathcal{Q}=[0,1]^{N}$, \Cref{Prop_Zero} implies that the zero bundle $\mathbf{0}$ is sold to a positive mass of buyers in any optimal mechanism. Again, closest to our results is \citet[Theorem 16]{manelli2007multidimensional}.
In an alternative setting where the allocation space is unbounded with an interior maximizer of social surplus, \citet[Proposition 1]{armstrong1996multiproduct} establishes the existence of a nondegenerate exclusion set for any strictly convex type space. \citet{rochet2003economics} argue that this result is sensitive to the shape of the type space. Although we study a slightly different model, \Cref{Prop_Zero} likewise implies the existence of a nondegenerate exclusion set when the type space is strictly convex.

\subsection{Optimality of Bundling Strategies}

This section generalizes Sections \ref{Sec_Additive_Bundle} and \ref{Sec_SC_Bundle} to an arbitrary number of products.

First, consider selling $N$ items to additive buyers whose types are distributed over a hypercube. \Cref{Prop_Pure_N} provides three necessary conditions for the optimality of a pure bundling mechanism, where we use $\mathbf{x}_{-n}$ to represent a generic element in $[\underline{x},\overline{x}]^{N-1}$.

\begin{proposition}
\label{Prop_Pure_N}
Let $\mathcal{X}=[\underline{x},\overline{x}]^{N}$ and $\mathcal{Q}=[0,1]^{N}$. Then a pure bundling mechanism $p$ is optimal only if
\begin{align}
\label{Eqn_Pure_N_1}
p(\mathbf{1})-(N-1)\underline{x} & \leq \overline{x}, \\
\label{Eqn_Pure_N_2}
\mu (\mathcal{D}(\mathbf{1})) & =0,
\end{align}
and, provided that $\mu (\underline{\bd}(\mathcal{X}))=0$,
\begin{equation}
\label{Eqn_Pure_N_3}
\int^{\overline{x}}_{p(\mathbf{1})-(N-1)\underline{x}}\phi (x_{n},\underline{\mathbf{x}}_{-n})\dd x_{n}-\overline{x}f(x_{n},\underline{\mathbf{x}}_{-n})\geq 0,\quad \forall n\in \mathcal{N}.
\end{equation}
\end{proposition}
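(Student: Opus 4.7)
The plan is to derive each of the three necessary conditions by applying Theorem \ref{Thm_MR} to a suitably chosen $Q$, exploiting the explicit form of the pure-bundling indirect utility $u(\mathbf{x})=\max(0,\mathbf{x}\cdot\mathbf{1}-P)$ with $P:=p(\mathbf{1})$. By convex conjugacy the lower envelope satisfies $\overline{p}(\mathbf{q})=\sup_{\mathbf{x}\in\mathcal{X}}(\mathbf{x}\cdot\mathbf{q}-u(\mathbf{x}))$, and direct computation yields $\overline{p}(\mathbf{e}_n)=\min(\overline{x},P-(N-1)\underline{x})$ together with closed-form descriptions of the demand sets $\mathcal{D}(\mathbf{1})$ and $\mathcal{D}(\mathbf{e}_n)$.

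Condition \eqref{Eqn_Pure_N_2} comes from \eqref{Eqn_MR} applied to $Q=\mathcal{Q}\setminus\{\mathbf{0}\}$: under pure bundling, $\mathcal{D}(\mathcal{Q}\setminus\{\mathbf{0}\})$ differs from $\mathcal{D}(\mathbf{1})=\{\mathbf{x}:\mathbf{x}\cdot\mathbf{1}\geq P\}$ only along the hyperplane $\{\mathbf{x}\cdot\mathbf{1}=P\}$, which has zero Lebesgue measure in the interior and an at-most $(N-2)$-dimensional trace on $\bd(\mathcal{X})$, hence is $\mu$-null; the same observation yields the qualification $\mu(\mathcal{D}(\mathcal{Q}\setminus\{\mathbf{0}\})\cap\mathcal{D}(\{\mathbf{0}\}))=0$, and \eqref{Eqn_MR} then forces $\mu(\mathcal{D}(\mathbf{1}))=0$. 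I take $\mathcal{Q}\setminus\{\mathbf{0}\}$ rather than $\{\mathbf{1}\}$ because, when $N=2$ and $\underline{x}>0$, the edge $\mathcal{D}(\mathbf{e}_n)$ carries positive $\sigma$-mass on $\underline{\bd}(\mathcal{X})$ and would break the qualification for $\{\mathbf{1}\}$ directly.

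Condition \eqref{Eqn_Pure_N_1} comes from \eqref{Eqn_MR-} applied to $Q=\{\mathbf{e}_n\}$: if $P>\overline{x}+(N-1)\underline{x}$, then $\overline{p}(\mathbf{e}_n)=\overline{x}$ and $\mathcal{D}(\mathbf{e}_n)=\{\mathbf{x}:x_n=\overline{x},\ \sum_{m\neq n}x_m\leq P-\overline{x}\}$ is an $(N-1)$-dimensional subset of the face $\{x_n=\overline{x}\}$ on which $\mathbf{x}\cdot\mathbf{n}_{\mathbf{x}}=\overline{x}>0$. Together with $f>0$ on $\bd(\mathcal{X})$ this gives $\mu(\mathcal{D}(\mathbf{e}_n))=\overline{x}\int_{\mathcal{D}(\mathbf{e}_n)}f\dd\sigma>0$, contradicting $\MR_{-}(\{\mathbf{e}_n\})=-\mu(\mathcal{D}(\mathbf{e}_n))\geq 0$; so $P\leq\overline{x}+(N-1)\underline{x}$.

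Condition \eqref{Eqn_Pure_N_3} comes from \eqref{Eqn_SOC_1} applied to $Q=\{\mathbf{e}_n\}$ under \eqref{Eqn_Pure_N_1}, when $\overline{p}(\mathbf{e}_n)=P-(N-1)\underline{x}$ and the unperturbed $\mathcal{D}(\mathbf{e}_n)=\{\mathbf{x}:x_m=\underline{x}\ \forall m\neq n,\ x_n\geq P-(N-1)\underline{x}\}$ is a one-dimensional edge with trivial $\mu$-measure, so the first-order test is uninformative. I therefore expand $\mu$ at the perturbed demand set
\begin{equation*}
\mathcal{D}_{\varepsilon}(\mathbf{e}_n)=\{\mathbf{x}:\,x_n\geq P-(N-1)\underline{x}-\varepsilon,\ \textstyle\sum_{m\neq n}(x_m-\underline{x})\leq\varepsilon\},
\end{equation*}
whose $(N-1)$-dimensional cross-section is a simplex of volume $\varepsilon^{N-1}/(N-1)!$. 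Using \eqref{Eqn_mu} and a leading-order expansion,
\begin{equation*}
\mu(\mathcal{D}_{\varepsilon}(\mathbf{e}_n))=\frac{\varepsilon^{N-1}}{(N-1)!}\Bigl[\overline{x}f(\overline{x},\underline{\mathbf{x}}_{-n})-\int_{P-(N-1)\underline{x}}^{\overline{x}}\phi(x_n,\underline{\mathbf{x}}_{-n})\dd x_n\Bigr]+o(\varepsilon^{N-1}),
\end{equation*}
where the otherwise-dominant $O(\varepsilon^{N-2})$ contributions from the faces $\{x_m=\underline{x}\}$ ($m\neq n$) vanish exactly because these faces sit in $\underline{\bd}(\mathcal{X})$, which is $\sigma$-negligible by hypothesis. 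If the bracket were strictly positive, then $\mu(\mathcal{D}_{\varepsilon})>0$ for all small $\varepsilon$ and \eqref{Eqn_SOC_1} would fail for every small $\overline{\varepsilon}$; thus the bracket must be nonpositive, which is \eqref{Eqn_Pure_N_3}. The hard part is this last condition: the first-order MR at $\mathbf{e}_n$ is trivially zero because $\mathcal{D}(\mathbf{e}_n)$ is lower-dimensional, so the constraint has to be extracted from an $(N-1)^{\text{st}}$-order expansion, and one must verify that the potentially larger boundary corrections from the bottom faces of $\mathcal{X}$ are killed precisely by the $\sigma$-negligibility hypothesis—without it, the SOC is satisfied vacuously and \eqref{Eqn_Pure_N_3} need not hold.
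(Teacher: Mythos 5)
Your proposal follows essentially the same route as the paper: \eqref{Eqn_Pure_N_2} from \eqref{Eqn_MR} applied to $Q=\mathcal{Q}\setminus\{\mathbf{0}\}$, \eqref{Eqn_Pure_N_1} from \eqref{Eqn_MR-} applied to $\{\mathbf{e}_n\}$ via the positive surface mass of $\mathcal{D}(\mathbf{e}_n)$ on the top face, and \eqref{Eqn_Pure_N_3} from \eqref{Eqn_SOC_1} applied to $\{\mathbf{e}_n\}$ with a price cut; your leading-order expansion of $\mu(\mathcal{D}_{\varepsilon}(\mathbf{e}_n))$, including the explicit simplex volume $\varepsilon^{N-1}/(N-1)!$ and the observation that the bottom-face terms of order $\varepsilon^{N-2}$ are killed exactly by the $\sigma$-negligibility of $\underline{\bd}(\mathcal{X})$, makes explicit what the paper compresses into an appeal to \Cref{Lem_Continuity}. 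One point to fix is the order of the first two conditions: your claim that under pure bundling $\mathcal{D}(\mathcal{Q}\setminus\{\mathbf{0}\})$ differs from $\mathcal{D}(\mathbf{1})$ only along the hyperplane $\{\mathbf{x}\cdot\mathbf{1}=P\}$ is false when \eqref{Eqn_Pure_N_1} fails—in that case $\overline{p}(\mathbf{e}_n)=\overline{x}$ and $\mathcal{D}(\mathbf{e}_n)=\{x_n=\overline{x},\ \mathbf{x}\cdot\mathbf{1}\leq P\}$ is an $(N-1)$-dimensional piece of the top face lying outside $\mathcal{D}(\mathbf{1})$, so both your ``differs only along the hyperplane'' claim and the qualification $\mu(\mathcal{D}(Q)\cap\mathcal{D}(\mathbf{0}))=0$ break down. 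Since your proof of \eqref{Eqn_Pure_N_1} is independent of \eqref{Eqn_Pure_N_2}, the remedy is simply to establish \eqref{Eqn_Pure_N_1} first (as the paper does, writing ``By \eqref{Eqn_Pure_N_1}, \dots'') and then verify—e.g., by the paper's separating-hyperplane argument or a direct computation of the conjugate prices—that all demand sets of bundles other than $\mathbf{0},\mathbf{1}$ either lie inside $\mathcal{D}(\mathbf{1})$ or in the $\mu$-null slice, after which your conclusion $\mu(\mathcal{D}(\mathbf{1}))=0$ follows as you state.
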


Intuitively, if \eqref{Eqn_Pure_N_1} is violated, we can show that $\mathcal{D}(\mathbf{e}_{n})$ has positive measure under $\mu$, leading to a contradiction to \eqref{Eqn_MR-}. \eqref{Eqn_Pure_N_2} is an implication of \eqref{Eqn_MR}. Given \eqref{Eqn_Pure_N_1}, the left-hand side of \eqref{Eqn_Pure_N_3} is the limit of ``MR density'' for bundle $\mathbf{e}_{n}$. If \eqref{Eqn_Pure_N_3} is violated, reducing the price of $\mathbf{e}_{n}$ will lead to an increase in the total revenue. In other words, \eqref{Eqn_Pure_N_3} follows from applying \eqref{Eqn_SOC} to bundle $\mathbf{e}_{n}$.

Under strict regularity, we can define the generalized $\zeta$ function as
\begin{equation*}
\zeta (\mathbf{x}_{-n})=\sup\left\{\zeta\in [\underline{x},\overline{x}]:\int^{\overline{x}}_{\zeta}\phi (x_{n},\mathbf{x}_{-n})\dd x_{n}-\overline{x}f(\overline{x},\mathbf{x}_{-n})\geq 0\right\},
\end{equation*}
where we let $\sup \varnothing=\underline{x}$. Then \eqref{Eqn_Pure_N_3} can be simplified as
\begin{equation*}
p(\mathbf{1})-(N-1)\underline{x}\leq \zeta (\underline{\mathbf{x}}_{-n}).
\end{equation*}

In a two-product setting, \citet[Proposition 3]{menicucci2015optimality} provide sufficient conditions for the optimality of pure bundling, which require that the type space is bounded away from the origin.
DDT's Theorem 6 shows that pure bundling is suboptimal for additive buyers with uniformly distributed types when $N$ is sufficiently large. \Cref{Cor_Pure_N} generalizes this result to strictly regular type distributions.


\begin{corollary}
\label{Cor_Pure_N}
Let $\mathcal{X}=[\underline{x},\underline{x}+1]^{N}$, $\mathcal{Q}=[0,1]^{N}$, and $x_{1},\dotsc,x_{N}$ be i.i.d. with a strictly regular p.d.f. $g$. Then there exists $N^{*}$ such that pure bundling is suboptimal if $N\geq N^{*}$.
\end{corollary}

\begin{proof}
Denote by $\overline{\xi}$ and $\overline{g}$ the upper bounds of $|g'(x)/g(x)|$ and $g(x)$, respectively. Then by strict regularity,
\begin{equation*}
\phi (\mathbf{x})=\left(\sum_{n\in \mathcal{N}}\frac{x_{n}g'(x_{n})}{g(x_{n})}+N+1\right)f(\mathbf{x})\leq (N(\underline{x}+1)\overline{\xi} +N+1)\overline{g}^{N}.
\end{equation*}
If a pure bundling mechanism $p$ is optimal, then, by \eqref{Eqn_Pure_N_1}, $\mathcal{D}(\mathbf{0})=\{\mathbf{x}\in \mathcal{X}:\mathbf{x}\cdot \mathbf{1}\leq p(\mathbf{1})\}$ is a simplex with vertices $\underline{\mathbf{x}}$ and $\{\underline{\mathbf{x}}+c\mathbf{e}_{n}\}_{n\in \mathcal{N}}$, where $c=p(\mathbf{1})-N\underline{x}\leq 1$. The volume of $\mathcal{D}(\mathbf{0})$ is bounded above by $1/N!$, and the $(N-1)$-dimensional volume of $\mathcal{D}(\mathbf{0})\cap \bd (\mathcal{X})$ is bounded above by $N/(N-1)!$.
Again by strict regularity,
\begin{equation*}
\begin{aligned}
-\mu (\mathcal{D}(\mathbf{0})) & \leq \frac{(N(\underline{x}+1)\overline{\xi} +N+1)\overline{g}^{N}}{N!}+\frac{(N\underline{x})\overline{g}^{N}}{(N-1)!} \\
& =\left((\underline{x}+1)\overline{\xi}+1+\frac{1}{N}+N\underline{x}\right)\frac{\overline{g}^{N}}{(N-1)!},
\end{aligned}
\end{equation*}
where the right-hand side of the equality converges to zero as $N\to +\infty$. However, by \eqref{Eqn_Pure_N_2}, $\mu (\mathcal{D}(\mathbf{0}))=-1$, implying the existence of a cutoff $N^{*}$ such that $N<N^{*}$.
\end{proof}

Next, consider selling $N$ items to unit-demand buyers whose types are distributed over a hypercube. This setting is an extension of the $k=1/2$ case in \Cref{Sec_SubsComp} where products are perfect substitutes. \Cref{Prop_Separate_N} provides two necessary conditions for the optimality of a symmetric separate selling mechanism where only $\mathcal{D}(\mathbf{0})$ and $\{\mathcal{D}(\mathbf{e}_{n})\}_{n\in \mathcal{N}}$ have positive measure with respect to $f$.

\begin{proposition}
\label{Prop_Separate_N}
Let $\mathcal{X}=[\underline{x},\overline{x}]^{N}$ and $\mathcal{Q}=\{\mathbf{q}\in [0,1]^{N}:\mathbf{q}\cdot\mathbf{1}\leq 1\}$. Then a symmetric separate selling mechanism $p$ is optimal only if
\begin{align}
\label{Eqn_Separate_N_1}
\mu (\mathcal{D}(\mathbf{e}_{n})) & =0, \quad \forall n\in \mathcal{N}, \\
\label{Eqn_Separate_N_2}
\int^{\overline{x}}_{p(\mathbf{e}_{n})}\phi (x,\dotsc,x)\dd x-\overline{x}f(\overline{\mathbf{x}}) & \geq 0, \quad \forall n\in \mathcal{N}.
\end{align}
\end{proposition}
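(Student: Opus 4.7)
The strategy is to apply \Cref{Thm_MR} with two carefully chosen test sets. For \eqref{Eqn_Separate_N_1} I would take $Q=\{\mathbf{e}_n\}$ and aim to invoke \eqref{Eqn_MR}. Under a symmetric separate-selling mechanism the lower convex envelope $\overline{p}$ equals the linear function $\overline{p}(\mathbf{q})=p(\mathbf{e}_n)\sum_m q_m$ on $\mathcal{Q}$, so $\mathcal{D}(\mathbf{e}_n)=\{\mathbf{x}:x_n=\max_m x_m,\ x_n\geq p(\mathbf{e}_n)\}$ and the interface $\mathcal{D}(\mathbf{e}_n)\cap\mathcal{D}(Q^c)$ is contained in the finite union of hyperplanes $\{x_n=x_m\}_{m\neq n}\cup\{x_n=p(\mathbf{e}_n)\}$; each of these meets $\bd(\mathcal{X})$ in a set of codimension at least two and hence surface measure zero. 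Combined with \eqref{Eqn_Qualification}, the qualification $\mu(\mathcal{D}(Q)\cap\mathcal{D}(Q^c))=0$ is satisfied, so \eqref{Eqn_MR} applies and \Cref{Lem_MR} yields $-\mu(\mathcal{D}(\mathbf{e}_n))=\MR(\{\mathbf{e}_n\})=0$.

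For \eqref{Eqn_Separate_N_2} I would take $Q=\{\mathbf{q}^*\}$ with $\mathbf{q}^*=(1/N)\mathbf{1}$, the centroid of the top simplex face of $\mathcal{Q}$. By linearity of $\overline{p}$, $\overline{p}(\mathbf{q}^*)=p:=p(\mathbf{e}_n)$ and $\mathbf{q}^*$ is inactive, since its demand set lies on the diagonal. Computing the convex conjugate of $\overline{p}+\varepsilon_1\mathbb{I}_{\{\mathbf{q}^*\}}$ with $\varepsilon_1=-\delta<0$ shows that the new demand set of $\mathbf{q}^*$ is
\begin{equation*}
T_\delta=\left\{\mathbf{x}\in\mathcal{X}:\frac{1}{N}\sum_{n=1}^N x_n\geq p-\delta,\ x_n-\frac{1}{N}\sum_{m=1}^N x_m\leq\delta\ \text{for all }n\right\},
\end{equation*}
a thin tube around the diagonal segment from $(p,\dotsc,p)$ to $\overline{\mathbf{x}}$. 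Condition \eqref{Eqn_SOC_1} applied to $Q=\{\mathbf{q}^*\}$, together with \Cref{Lem_MR}, require $\mu(T_\delta)\leq 0$ for arbitrarily small $\delta>0$.

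The final step is an asymptotic expansion of $\mu(T_\delta)$ in $\delta$. Parametrizing $\mathbf{x}=t\mathbf{1}+\mathbf{y}$ with $\mathbf{y}\perp\mathbf{1}$, the cross-section $\{\mathbf{y}:\mathbf{y}\cdot\mathbf{1}=0,\ y_n\leq\delta\}$ is a scaled standard simplex of $(N-1)$-volume $\sqrt{N}(N\delta)^{N-1}/(N-1)!$. Because $p>\underline{x}$ (implicit in $\mathcal{D}(\mathbf{0})$ carrying positive mass), for small $\delta$ the tube meets $\bd(\mathcal{X})$ only on the $N$ top faces $\{x_n=\overline{x}\}$, and the analogous parametrization $v_m=\overline{x}-x_m$ shows each such face contributes a scaled simplex of surface measure $(N\delta)^{N-1}/(N-1)!$ concentrated near $\overline{\mathbf{x}}$. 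Assembling these pieces yields
\begin{equation*}
-\mu(T_\delta)=\frac{N^N\delta^{N-1}}{(N-1)!}\left[\int_p^{\overline{x}}\phi(t,\dotsc,t)\,dt-\overline{x}f(\overline{\mathbf{x}})\right]+o(\delta^{N-1}),
\end{equation*}
and requiring $-\mu(T_\delta)\geq 0$ along a sequence $\delta\to 0^+$ forces the bracketed term to be nonnegative, which is exactly \eqref{Eqn_Separate_N_2}. The main obstacle is verifying that the \emph{same} positive constant $N^N/(N-1)!$ appears in front of both the bulk integral $\int_p^{\overline{x}}\phi(t,\dotsc,t)\,dt$ and the boundary integral $\overline{x}f(\overline{\mathbf{x}})$; this requires coordinating an orthonormal change of variables along and perpendicular to the diagonal and checking that the truncation of the cross-section near the corner $\overline{\mathbf{x}}$ contributes only at order $o(\delta^{N-1})$.
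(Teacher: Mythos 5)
Your proposal is correct and follows essentially the same route as the paper: \eqref{Eqn_Separate_N_1} via the hyperplane-separation argument showing $\mu(\mathcal{D}(\mathbf{e}_{n})\cap \mathcal{D}(Q^{c}))=0$ so that \eqref{Eqn_MR} applies, and \eqref{Eqn_Separate_N_2} by perturbing the price of $(1/N)\mathbf{1}$ downward and invoking \eqref{Eqn_SOC_1}. The only difference is presentational: where the paper appeals to \Cref{Lem_Continuity} to conclude $\MR_{-\varepsilon}((1/N)\mathbf{1})<0$ when \eqref{Eqn_Separate_N_2} fails, you compute the perturbed demand tube $T_{\delta}$ and its leading-order $\mu$-measure explicitly, and your common constant $N^{N}\delta^{N-1}/(N-1)!$ multiplying both the bulk integral and the $N$ top-face boundary contributions is indeed correct.
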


The intuition of \Cref{Prop_Separate_N} is similar to that of \Cref{Prop_Pure_N}: \eqref{Eqn_Separate_N_1} follows from \eqref{Eqn_MR}. \eqref{Eqn_Separate_N_2} follows from applying \eqref{Eqn_SOC} to bundle $(1/N)\mathbf{1}$. Again, when $f$ is strictly regular, we can define the $\zeta_{1/N}$ function as
\begin{equation*}
\begin{aligned}
\zeta_{1/N} (\mathbf{x}_{-n}) & =\sup\bigg\{\zeta\in [\underline{x},\overline{x}]:((\overline{x},\mathbf{x}_{-n})-(\overline{x}-\zeta)\mathbf{1})\in \mathcal{X}, \\
& \quad \quad \quad \quad \int^{\overline{x}}_{\zeta }\phi ((\overline{x},\mathbf{x}_{-n})-(\overline{x}-x)\mathbf{1})\dd x-\overline{x}f(\overline{x},\mathbf{x}_{-n})\geq 0\bigg\}.
\end{aligned}
\end{equation*}
Then \eqref{Eqn_Separate_N_2} can be simplified as
\begin{equation*}
p(\mathbf{e}_{n})\leq \zeta_{1/N} (\overline{\mathbf{x}}_{-n}).
\end{equation*}

When buyers are unit-demand with uniformly distributed types, \citet[Theorem 2]{hashimoto2025selling} prove that symmetric separate selling is optimal if and only if $\underline{x}$ is sufficiently small. \Cref{Cor_Separate_N} complements their findings by showing that symmetric separate selling is suboptimal when $\underline{x}$ is large and $g$ is nonincreasing.






\begin{corollary}
\label{Cor_Separate_N}
Let $\mathcal{X}=[\underline{x},\underline{x}+1]^{N}$, $\mathcal{Q}=\{\mathbf{q}\in [0,1]^{N}:\mathbf{q}\cdot\mathbf{1}\leq 1\}$, and $x_{1},\dotsc,x_{N}$ be i.i.d. with a nonincreasing p.d.f. $g$. Then symmetric separate selling is suboptimal if $N\leq \underline{x}$.
\end{corollary}

\begin{proof}
If a symmetric separate selling mechanism $p$ is optimal, then $p(\mathbf{e}_{1})\in [\underline{x},\underline{x}+1]$. The left-hand side of \eqref{Eqn_Separate_N_2} can be reformulated as (recall $\overline{x}=\underline{x}+1$)
\begin{equation*}
\begin{aligned}
\int^{\overline{x}}_{p(\mathbf{e}_{1})}\phi (x,\dotsc,x)\dd x-\overline{x}f(\overline{\mathbf{x}}) & =\int^{\overline{x}}_{p(\mathbf{e}_{1})}\dd{(xg(x)^{N})}+N\int^{\overline{x}}_{p(\mathbf{e}_{1})}g(x)^{N}\dd x-\overline{x}g(\overline{x})^{N} \\
& =N\int^{\overline{x}}_{p(\mathbf{e}_{1})}g(x)^{N}\dd x-p(\mathbf{e}_{1})g(p(\mathbf{e}_{1}))^{N} \\
& \leq (N(\overline{x}-p(\mathbf{e}_{1}))-p(\mathbf{e}_{1}))g(p(\mathbf{e}_{1}))^{N},
\end{aligned}
\end{equation*}
where the right-hand side is nonpositive if $N\leq \underline{x}$.\footnote{In proving \Cref{Cor_Pure_N}, the demand set $\mathcal{D}(\mathbf{0})$ under pure bundling is a simplex whose volume vanishes as $N$ goes to infinity. This is why we appeal to \eqref{Eqn_Pure_N_2}, which is an implication of \eqref{Eqn_MR}. In contrast, for symmetric separate selling, $\mathcal{D}(\mathbf{0})$ becomes a hypercube and thereby we turn to \eqref{Eqn_SOC} in the proof of \Cref{Cor_Separate_N}.}
\end{proof}


As far as we know, Propositions \ref{Prop_Pure_N} and \ref{Prop_Separate_N} are both new, except that \eqref{Eqn_Separate_N_2} with $N=2$ can be inferred from \citet[Theorem 1]{thanassoulis2004haggling}. Although the insight bears resemblance to \eqref{Eqn_SOC}, \citet{thanassoulis2004haggling} restrict their analysis to unit-demand buyers with two products and do not employ the $\zeta$ function to study other settings as we do in \Cref{Sec_SubsComp}.


\section{Concluding Remarks}
\label{Sec_Conclusion}

This paper demonstrates how FOC and SOC based on price perturbations pin down optimal selling mechanisms of a multiproduct monopoly. For SRS distributions, the approach characterizes the optimal mechanisms: Depending on the shape of $\zeta$, they feature pure bundling, mixed bundling, and probabilistic bundling with either finite or infinite menu sizes.\footnote{We also conjecture that a similar approach can be applied recursively to derive explicit solutions for specific high-dimensional distributions such as the $N$-item uniform distribution analyzed in \citet{giannakopoulos2018duality}. The challenge lies in how to utilize \eqref{Eqn_SOC} for $N\geq 3$, as the demand sets for inactive bundles can take various shapes in low dimensions.} As a result, we illustrate how a multiproduct monopolist accounts for multidimensional characteristics such as correlation and substitution or complementarity. Conversely, our approach also nails down conditions on the type distribution for a specific selling mechanism to be optimal. For an arbitrary number of products, FOC and SOC lead to novel insights on pooling bundles and active bundles for general distributions. 

Our paper complements the recent literature on optimal informationally robust selling mechanisms for multiproduct monopoly. For instance, \citet{carroll2017robustness}, \citet{deb2023multi}, and \citet{che2024robustly} solve optimal multiproduct mechanisms that maximize the seller's worst-case revenue over information structures subject to partial distributional information such as marginal, moment, or MPC constraints. 
Comparing optimal Bayesian mechanisms with their optimal informationally robust mechanisms identifies the value of acquiring distributional information beyond moments or marginals, as well as how more sophisticated bundling strategies account for the full distributional information. 

Second, recent contributions by \citet{armstrong2010competitive}, \citet{zhou2017competitive}, and \citet{zhou2021mixed} investigate competitive bundling strategies—both pure and mixed—in multiproduct oligopolistic markets. Comparing the optimal revenues of a multiproduct monopolist with those of competing firms selling products separately identifies the synergy of a merger. Moreover, since our analysis characterizes optimal mechanisms for selling complements and substitutes, it motivates extending the framework to oligopolistic settings with the novel bundling strategies, along the lines of \citet{armstrong2013more}.

Finally, beyond the multiproduct monopoly setting, there exist other multidimensional (and even single-dimensional) screening problems in which randomization is known to improve the objective. Examples include delegation \citep[e.g.,][]{kovac2009stochastic,kleiner2022optimal}, costly state verification \citep[e.g.,][]{mookherjee1989optimal}, and adverse selection with type-dependent participation constraints \citep[e.g.,][]{jullien2000participation}. The FOC–SOC approach may likewise offer insights into the sources and magnitude of the gains from randomization in these environments.

\clearpage
\bibliographystyle{aea}
\bibliography{Screening,InfoDesign,CSV,General}

\clearpage
\appendix

\renewcommand{\theacorollary}{\thesection.\arabic{acorollary}}
\renewcommand{\theaexample}{\thesection.\arabic{aexample}}
\renewcommand{\thealemma}{\thesection.\arabic{alemma}}
\renewcommand{\theaproposition}{\thesection.\arabic{aproposition}}
\renewcommand\thefigure{\thesection.\arabic{figure}}

\setcounter{acorollary}{0}
\setcounter{figure}{0}

\begin{center}
\LARGE Supplemental Appendix
\end{center}

\section{Omitted Proofs}

\setcounter{subsection}{-1}
\subsection{Preliminary lemmas from convex analysis}

As a preparation, this section contains several results from convex analysis. Lemmas \ref{Lem_Cont} and \ref{Lem_ContReversed} are both on the ``continuity'' of demand sets.

\begin{alemma}
\label{Lem_Cont}
For any feasible mechanism, any $\mathbf{q}\neq \mathbf{0}$, any $Q$ satisfying $\mathbf{0}\notin Q$, and any $\delta>0$, there exists $\overline{\varepsilon}>0$ such that
\begin{equation*}
\mathcal{D}(\mathbf{q};\overline{p}+\varepsilon\mathbb{I}_{Q})\subset (\mathcal{D}(\mathbf{q})+\delta B^{N}) \quad \forall \varepsilon\in (-\overline{\varepsilon},\overline{\varepsilon}).
\end{equation*}
\end{alemma}

\begin{proof}
Let $\mathcal{Y}$ be an open convex subset of $\mathbb{R}^{N}$ such that $\mathcal{Q} \subset \mathcal{Y}$. Define $\tilde{p}: \mathcal{Y}\mapsto \mathbb{R}$ as
\begin{equation*}
\tilde{p}(\mathbf{q})=\sup_{\mathbf{x}\in \mathcal{X}} \{\mathbf{x}\cdot\mathbf{q}-u_{p}(\mathbf{x})\}.
\end{equation*}
For any $\mathbf{q}\in \mathcal{Q}$, it is easy to see that $\tilde{p}(\mathbf{q})=\overline{p}(\mathbf{q})$. Thus, $\mathbf{x}\in \partial \tilde{p}(\mathbf{q}) \iff \tilde{p}(\mathbf{q})=\mathbf{x}\cdot\mathbf{q}-u_{p}(\mathbf{x})=\overline{p}(\mathbf{q}) \iff \mathbf{x}\in \partial \overline{p}(\mathbf{q})$. This approach allows us to extend any convex function to a larger open convex set while preserving its subdifferential on $\mathcal{Q}$.

Let $\{\varepsilon_{m}\}_{m\in \mathbb{N}}$ be a sequence converging to 0. Applying Theorem 24.5 of \citet{rockafellar1970} to the lower convex closure of $\{\tilde{p}+\varepsilon_{m} \mathbb{I}_{Q}\}_{m\in \mathbb{N}}$ yields the lemma.
\end{proof}

\begin{alemma}
\label{Lem_ContReversed}
For any $\varepsilon<0$ and $\mathbf{q}\neq \mathbf{0}$, there exists $\delta>0$ such that
\begin{equation*}
((\mathcal{D}(\mathbf{q})+\delta B^{N})\cap \mathcal{X})\subseteq \mathcal{D}(\mathbf{q};\overline{p}+\varepsilon \mathbb{I}_{\mathbf{q}}).
\end{equation*}
\end{alemma}

\begin{proof}
Let $L=\sup\{\lVert \mathbf{q}\rVert_{1}:\mathbf{q}\in \mathcal{Q}\}$ and $\mathbf{x}\in \mathcal{D}(\mathbf{q})$. Since $u_{p}$ is continuous, there exists $\delta \in (0,-\varepsilon /2L)$ such that for any $\hat{\mathbf{x}}\in (\mathbf{x}+\delta B^{N})\cap \mathcal{X}$, $|u_{p}(\hat{\mathbf{x}})-u_{p}(\mathbf{x})|<-\varepsilon/2$. Therefore
\begin{equation*}
\hat{\mathbf{x}}\cdot \mathbf{q}-(\overline{p}(\mathbf{q})+\varepsilon)=(\hat{\mathbf{x}}-\mathbf{x})\cdot \mathbf{q}+u_{p}(\mathbf{x})-\varepsilon \geq -\delta L+u_{p}(\mathbf{x})-\varepsilon \geq u_{p}(\mathbf{x})-\varepsilon/2 \geq u_{p}(\hat{\mathbf{x}}),
\end{equation*}
implying that $\hat{\mathbf{x}}\in \mathcal{D}(\mathbf{q};\overline{p}+\varepsilon \mathbb{I}_{\mathbf{q}})$.
\end{proof}

\Cref{Lem_Diff} is a ``separability'' property of demand sets.

\begin{alemma}
\label{Lem_Diff}
If $Q\cap \hat{Q}=\varnothing$, then $\mathcal{D}(Q)\cap \mathcal{D}(\hat{Q})$ has zero measure with respect to $f$.
\end{alemma}

\begin{proof}
For any $\mathbf{x}\in \mathcal{D}(Q)\cap \mathcal{D}(\hat{Q})$, there exist $\mathbf{q}\in Q$ and $\hat{\mathbf{q}}\in \hat{Q}$ such that $\mathbf{x}\in \mathcal{D}(\mathbf{q})\cap \mathcal{D}(\hat{\mathbf{q}})$. By convex conjugacy, this implies $\{\mathbf{q},\hat{\mathbf{q}}\}\subseteq \partial u_{p}(\mathbf{x})$, thus $u_{p}$ is nondifferentiable at $\mathbf{x}$. Since $u_{p}$ is convex, it is differentiable at almost all $\mathbf{x}\in \Int (\mathcal{X})$. Thus, the set of nondifferentiable points in $\Int (\mathcal{X})$ must have zero measure with respect to $f$.
\end{proof}

For any $\mathbf{q}\in \mathcal{Q}$, denote by $\NC (\mathbf{q})$ the \emph{normal cone} of $\mathbf{q}$ to $\mathcal{Q}$, which consists of all outward normals to $\mathcal{Q}$ at $\mathbf{q}$. \Cref{Lem_NormalCone} speaks to the ``monotonicity'' of demand sets and relates them to normal cones.

\begin{alemma}
\label{Lem_NormalCone}
For any feasible mechanism, if $\mathbf{x}\in \mathcal{D}(\mathbf{q})$ for some $\mathbf{q}\in \bd (\mathcal{Q})$ and $\mathbf{n}$ is an outward normal to $\mathcal{Q}$ at $\mathbf{q}$, then for any $\tau>0$ such that $(\mathbf{x}+\tau \mathbf{n})\in \mathcal{X}$, $(\mathbf{x}+\tau \mathbf{n})\in \mathcal{D}(\mathbf{q})$. As a result, $((\mathbf{x}+\NC (\mathbf{q}))\cap \mathcal{X})\subseteq \mathcal{D}(\mathbf{q})$.
\end{alemma}

\begin{proof}
For any $\hat{\mathbf{q}}\in \mathcal{Q}$, $\mathbf{x}\cdot \hat{\mathbf{q}}-\overline{p}(\hat{\mathbf{q}})\leq \mathbf{x}\cdot \mathbf{q}-\overline{p}(\mathbf{q})$ because $\mathbf{x}\in \mathcal{D}(\mathbf{q})$. Therefore,
\begin{equation*}
[(\mathbf{x}+\tau \mathbf{n})\cdot \hat{\mathbf{q}}-\overline{p}(\hat{\mathbf{q}})]-[(\mathbf{x}+\tau \mathbf{n})\cdot \mathbf{q}-\overline{p}(\mathbf{q})]\leq \tau \mathbf{n}\cdot (\hat{\mathbf{q}}-\mathbf{q})\leq 0,
\end{equation*}
implying that $(\mathbf{x}+\tau \mathbf{n})\in \mathcal{D}(\mathbf{q})$.
\end{proof}

\subsection{Proof of Lemma \ref{Lem_MR}}

Let $Q$ be a closed subset of $\mathcal{Q}$. By extending $\overline{p}$ to a larger open convex set as in the proof of \Cref{Lem_Cont}, we can apply Theorem 24.7 of \citet{rockafellar1970} to show that $\mathcal{D}(Q)$ is closed and bounded. Hence $\mathcal{D}(Q)$ is Borel measurable. It follows that $\mathcal{D}(Q)$ is Borel measurable for all Borel measurable set $Q$ (and therefore $Q^{c}$).


When $\varepsilon<0$, $0\leq u_{\overline{p}+\varepsilon\mathbb{I}_{Q}}(\mathbf{x})-u_{p}(\mathbf{x})\leq -\varepsilon$ for all $\mathbf{x}\in \mathcal{X}$. Since $u_{\overline{p}+\varepsilon\mathbb{I}_{Q}}(\mathbf{x})=u_{p}(\mathbf{x})-\varepsilon$ if and only if $u_{p}(\mathbf{x})=\mathbf{x}\cdot\mathbf{q}-\overline{p}(\mathbf{q})$ for some $\mathbf{q}\in Q$, we have
\begin{equation*}
\mathcal{D}(Q)=\{\mathbf{x}\in \mathcal{X}:u_{\overline{p}+\varepsilon\mathbb{I}_{Q}}(\mathbf{x})=u_{p}(\mathbf{x})-\varepsilon\}.
\end{equation*}
Let $\mathcal{A}_{(0,-\varepsilon)}=\{\mathbf{x}\in \mathcal{X}:0<u_{\overline{p}+\varepsilon\mathbb{I}_{Q}}(\mathbf{x})-u_{p}(\mathbf{x})<-\varepsilon\}$. Then
\begin{equation*}
\mathcal{X}=\{\mathbf{x}\in \mathcal{X}:u_{\overline{p}+\varepsilon\mathbb{I}_{Q}}(\mathbf{x})=u_{p}(\mathbf{x})\}\cup \mathcal{A}_{(0,-\varepsilon)} \cup \mathcal{D}(\mathcal{Q}),
\end{equation*}
and the three sets are disjoint. Therefore,
\begin{equation*}
\lim_{\varepsilon\to 0-}\int_{\mathcal{X}}\left(\frac{u_{\overline{p}+\varepsilon\mathbb{I}_{Q}}-u_{p}}{\varepsilon}\right)\dd\mu=\lim_{\varepsilon\to 0-}\int_{\mathcal{X}}\left(\frac{u_{\overline{p}+\varepsilon\mathbb{I}_{Q}}-u_{p}}{\varepsilon}\right)\mathbb{I}_{\mathcal{A}_{(0,-\varepsilon)}}\dd\mu-\int_{\mathcal{X}}\mathbb{I}_{\mathcal{D}(Q)}\dd \mu,
\end{equation*}
where the first term equals zero by $\mathbb{I}_{\mathcal{A}_{(0,-\varepsilon)}}\to 0$ and Lebesgue's dominated convergence theorem. Hence, $\MR_{-}(Q)=-\mu(\mathcal{D}(Q))$.

When $\varepsilon>0$, $-\varepsilon\leq u_{\overline{p}+\varepsilon\mathbb{I}_{Q}}(\mathbf{x})-u_{p}(\mathbf{x})\leq 0$ for all $\mathbf{x}\in \mathcal{X}$. Since $u_{\overline{p}+\varepsilon\mathbb{I}_{Q}}(\mathbf{x})=u_{p}(\mathbf{x})$ if and only if $u_{p}(\mathbf{x})=\mathbf{x}\cdot\mathbf{q}-\overline{p}(\mathbf{q})$ for some $\mathbf{q}\in Q^{c}$, we have
\begin{equation*}
\mathcal{D}(Q^{c})=\{\mathbf{x}\in \mathcal{X}:u_{\overline{p}+\varepsilon\mathbb{I}_{Q}}(\mathbf{x})=u_{p}(\mathbf{x})\}.
\end{equation*}
Let $\mathcal{A}_{(-\varepsilon,0)}=\{\mathbf{x}\in \mathcal{X}:-\varepsilon<u_{\overline{p}+\varepsilon\mathbb{I}_{Q}}(\mathbf{x})-u_{p}(\mathbf{x})<0\}$. Then 
\begin{equation*}
\mathcal{X}=\mathcal{D}(Q^{c})\cup \mathcal{A}_{(-\varepsilon,0)}\cup \{\mathbf{x}\in \mathcal{X}:u_{\overline{p}+\varepsilon\mathbb{I}_{Q}}(\mathbf{x})=u_{p}(\mathbf{x})-\varepsilon\},
\end{equation*}
and the three sets are disjoint. Therefore,
\begin{equation*}
\begin{aligned}
\lim_{\varepsilon\to 0+}\int_{\mathcal{X}}\left(\frac{u_{\overline{p}+\varepsilon\mathbb{I}_{Q}}-u_{p}+\varepsilon}{\varepsilon}\right)\dd \mu =\int_{\mathcal{X}}\mathbb{I}_{\mathcal{D}(Q^{c})}\dd \mu +\lim_{\varepsilon\to 0+}\int_{\mathcal{X}}\left(\frac{u_{\overline{p}+\varepsilon\mathbb{I}_{Q}}-u_{p}+\varepsilon}{\varepsilon}\right)\mathbb{I}_{\mathcal{A}_{(-\varepsilon,0)}}\dd \mu,
\end{aligned}
\end{equation*}
which simplifies to $\MR_{+}(Q)+\mu (\mathcal{X})=\mu (\mathcal{D}(Q^{c}))$. Hence, $\MR_{+}(Q)=1+\mu (\mathcal{D}(Q^{c}))$.


\subsection{Proof of Lemma \ref{Lem_MV}}

Suppose that $\psi$ is not strictly decreasing. Then there exist $\alpha,\beta \in [a,b]$ such that $\alpha<\beta$ and $\psi(\alpha)\leq \psi(\beta)$. Denote by $\mathcal{R}=\{\psi(x):x\in [\alpha,\beta]\}$. If $\mathcal{R}$ is a singleton, $\psi$ is constant on $[\alpha,\beta]$, a contradiction. By the continuity of $\psi$, $\mathcal{R}$ must be a closed interval. Thus, we can choose $r\in \Int (\mathcal{R})$ such that $r\neq \psi(\alpha)$.

If $r<\psi(\alpha)$, then we let $\xi=\inf\{x\in [\alpha,\beta]:\psi(x)>r\}$. By the continuity of $\psi$, $\psi (\xi)=r<\psi (\alpha)$, so $\xi \in (\alpha,\beta)$. If $\psi_{-}'(\xi)$ exists,
\begin{equation*}
\psi_{-}'(\xi)=\lim_{x\to \xi-}\frac{\psi (x)-\psi (\xi)}{x-\xi}\geq 0.
\end{equation*}
Moreover, there exists a sequence $\{x_{m}\}_{m\in \mathbb{N}}$ whose elements are chosen from $\{x\in [\alpha,\beta]:\psi(x)>r\}$ that converges to $\xi$ as $m\to +\infty$. Therefore, if $\psi_{+}'(\xi)$ exists,
\begin{equation*}
\psi_{+}'(\xi)=\lim_{m\to +\infty}\frac{\psi (x_{m})-\psi (\xi)}{x_{m}-\xi}\geq 0.
\end{equation*}
The two inequalities contradict the assumption of the lemma.

If $r>\psi(\alpha)$, we can get a similar contradiction by letting $\xi=\sup\{x\in [\alpha,\beta]:\psi(x)<r\}$. Hence, our supposition is not true, meaning that $\psi$ is strictly decreasing.

\subsection{Proof of Lemma \ref{Cor_MR}}
\label{Eqn_Qualification}

Since $\mu (\underline{\bd}(\mathcal{X}))=0$, by \Cref{Lem_Diff},
\begin{equation*}
\mu (\mathcal{D}(Q)\cap \mathcal{D}(Q^{c}))=\mu (\mathcal{D}(Q)\cap \mathcal{D}(Q^{c})\cap \bd(\mathcal{X}))\geq 0.
\end{equation*}
Applying this inequality to \eqref{Eqn_MR-} and \eqref{Eqn_MR+} yields
\begin{equation*}
0\leq \MR_{-}(Q)-\MR_{+}(Q)=-\mu (\mathcal{D}(Q)\cap \mathcal{D}(Q^{c}))-\mu (\mathcal{X})-1\leq 0,
\end{equation*}
which implies $\mu (\mathcal{D}(Q)\cap \mathcal{D}(Q^{c}))=0$ and thus \eqref{Eqn_MR}.

\subsection{Proof of Propositions \ref{Prop_Additive} and \ref{Prop_SubsComp}}
\label{Prf_Prop1&2}

We start with two remarks. The first one is about the computation of $z_{k}$ in \eqref{Eqn_z_k}. Suppose that $(x_{1},1)\in \mathcal{D}_{k}((q,1)_{k})$. Then for any $(x_{1},x_{2})_{k}\in \mathcal{X}$, there is  $(x_{1},x_{2})_{k}\in \mathcal{D}_{k}((q,1)_{k})$ if
\begin{equation*}
\begin{aligned}
x_{2}\geq x_{1}-\frac{(1-k)(1-x_{2})}{k} & \geq 0, \\
\overline{u}(x_{1})-(1-x_{2})((q,1)_{k}\cdot \mathbf{n}_{k})=\overline{u}(x_{1})-(1-x_{2}) & \geq 0.
\end{aligned}
\end{equation*}
The solution of these two inequalities is exactly $x_{2}\geq \max(1-\overline{u}(x_{1}),z_{k}(x_{1}))$.

The second one is about the computation of $\mathcal{D}_{k}$ in \eqref{Eqn_D_k}. When $k>1$ and $a^{q}=0$, there is an edge case where some buyers with $x_{1}<(k-1)(1-x_{2})/k$ also purchase $(q,1)_{k}$.\footnote{Otherwise, there exists a ``smaller bundle'' $(\hat{q},1)_{k}$ with $\hat{q}<q$ and $\delta>0$, such that $\MR ((\hat{q},1)_{k}+\delta B^{2})$ is strictly positive, a contradiction to \eqref{Eqn_MR}.}
In this case, the demand set of $(q,1)_{k}$ is
\begin{equation*}
\begin{aligned}
\mathcal{D}_{k}((q,1)_{k}) & =\{(x_{1},x_{2})_{k}\in \mathcal{X}: x_{1}\leq b^{q},x_{2}\geq \max(1-\overline{u}(x_{1}),z_{k}(x_{1}))\} \\
& \quad\ \cup \{(x_{1},x_{2})_{k}\in \mathcal{X}: x_{1}<0,\text{ and }x_{2}\geq \max(1-\overline{u}(0),z_{k}(0))\}.
\end{aligned}
\end{equation*}

Now we are ready to introduce the formal proof.

\emph{Proof of part \ref{Prop_Additive_2} of both propositions:}
Let $(q,1)_{k}$ be a pooling bundle. Then its demand set is given by \eqref{Eqn_D_k} except for the case where $k>1$ and $a^{q}=0$. \eqref{Eqn_IntPhi} and \eqref{Eqn_IntPhi_k} follow from \eqref{Eqn_MR}.

Suppose that $q>0$ and $\Phi_{k}(a^{q})<0$. When $a^{q}>0$, we choose $\delta>0$ sufficiently small such that $\delta<\max(1-\overline{u}(x),z_{k}(x))-\zeta_{k}(x)$ for all $x\in (a^{q}-\delta,a^{q}+\delta)$. Based on this $\delta$, we choose $\hat{q}$ satisfying $\overline{u}_{+}'(a^{q}-\delta)<\hat{q}<\overline{u}_{-}'(a^{q})$. It can be verified that $[a^{\hat{q}},b^{\hat{q}}]\subset (a^{q}-\delta,a^{q}]$.
When $a^{q}=0$, we choose $\delta>0$ such that $\delta<\max(1-\overline{u}(x),z_{k}(x))-\zeta_{k}(x)$ for all $x\in [0,a^{q}+\delta)$. Again, based on this $\delta$, we choose $\hat{q}<q$.


Let $\hat{\mathbf{q}}=(\hat{q},1)_{k}$.
Consider the price schedule $\overline{p}+\varepsilon \mathbb{I}_{Q}$ for $\varepsilon<0$ and $Q=\{(\hat{q}_{1},\hat{q}_{2}),(\hat{q}_{2},\hat{q}_{1})\}$. Let $\hat{\delta}=a^{\hat{q}}-(a^{q}-\delta)<\delta$. By \Cref{Lem_Cont}, there exists $\overline{\varepsilon}>0$ such that $\mathcal{D}_{k}(\hat{\mathbf{q}};\overline{p}+\varepsilon \mathbb{I}_{Q})\subset \mathcal{D}_{k}(\hat{\mathbf{q}})+\hat{\delta} B^{2}$ for any $\varepsilon \in (-\overline{\varepsilon},0)$. Recall that
\begin{equation*}
\mathcal{D}_{k}(\hat{\mathbf{q}})=\{(x_{1},x_{2})_{k}\in \mathcal{X}:a^{\hat{q}}\leq x_{1}\leq b^{\hat{q}},x_{2}\geq \max(1-\overline{u}(x_{1}),z_{k}(x_{1}))\},
\end{equation*}
and that $\delta<\max(1-\overline{u}(x),z_{k}(x))-\zeta_{k}(x)$ for all $x\in (a^{q}-\delta,a^{q}+\delta)$. There must be
\begin{equation*}
\MR_{\varepsilon-}(Q)=-2\mu (\mathcal{D}_{k}(\hat{\mathbf{q}};\overline{p}+\varepsilon \mathbb{I}_{Q}))<0, \quad \forall \varepsilon\in (-\overline{\varepsilon},0),
\end{equation*}
a contradiction to \eqref{Eqn_SOC}. Hence, our supposition is not true, meaning that $\Phi_{k}(a^{q})\geq 0$. $\Phi_{k}(b^{q})\geq 0$ for any $q<1$ can be proved analogously.



\emph{Proof of part \ref{Prop_Additive_3} of both propositions:}
Let $(q,1)_{k}$ be a separating bundle. When $q\in (0,k)$, we let $Q_{\delta}=\{(\hat{q},1)_{k}:q-\delta\leq \hat{q}\leq q+\delta\}$ for $\delta<\min (q,1-q)$. Since we rule out the case where $k>1$ and $a^{q}=0$, \eqref{Eqn_MR} admits a simple representation
\begin{equation*}
\MR (Q_{\delta})=-\int^{b^{q+\delta}}_{a^{q-\delta}}\Phi (x_{1})\dd x_{1}=0.
\end{equation*}
As $\delta\to 0$, $[a^{q-\delta},b^{q+\delta}]\to \{a^{q}\}$, which implies $\Phi (a^{q})=0$. The case where $q=0$ or $q=k$ can be proved analogously.

\subsection{Proof of Proposition \ref{Prop_DDT}}

As a starting point, we add a point mass of 1 at the origin $\mathbf{0}$ to $\mu$ to make it identical to the ``transformed measure'' proposed by DDT, which we denote by $\mu'$. We also rephrase DDT's Corollary 1 as follows.

\begin{namedcorollary}[DDT]
\label{Cor_DDT}
Let $u^{*}$ be an indirect utility function induced by a feasible mechanism. Then $u^{*}$ is optimal if and only if there exists a Radon measure $\mu^{*}$ satisfying $\mu^{*} \succeq_{\cvx} \mu'$ and $\mu^{*}(\mathcal{X})=0$, such that:
\begin{enumerate}
\item \label{Cor_DDT_1} $\int_{\mathcal{X}}u^{*}\dd \mu^{*}=\int_{\mathcal{X}}u^{*}\dd \mu'$.
\item \label{Cor_DDT_2} There exists a transport plan $\gamma$ between the positive and negative part of $\mu^{*}$ such that $u^{*}(\mathbf{x})-u^{*}(\hat{\mathbf{x}})=\lVert \mathbf{x}-\hat{\mathbf{x}}\rVert_{1}$, $\gamma(\mathbf{x},\hat{\mathbf{x}})$-almost surely.
\end{enumerate}
\end{namedcorollary}

Now we are ready to prove the proposition.

\emph{The ``if'' direction.}
Suppose that $\overline{u}$ and $F^{*}$ satisfies parts \ref{Prop_DDT_0}--\ref{Prop_DDT_2} of the proposition.
We first construct a measure $\mu^{*}$ that convexly dominates $\mu'$. Recall that $\mathcal{D}([0,1]^{2}\setminus\{\mathbf{0}\})$ is the union of demand sets for all nonzero bundles:
\begin{equation*}
\mathcal{D}([0,1]^{2}\setminus\{\mathbf{0}\})=\{\mathbf{x}\in \mathcal{X}:x_{2}\geq \max (1-\overline{u}(x_{1}),x_{1})\text{ or }x_{1}\geq \max (1-\overline{u}(x_{2}),x_{2})\}.
\end{equation*}
Let $\mu^{*}$ be a signed Radon measure defined as follows: For any measurable $\mathcal{A}\subseteq \mathcal{X}$,
\begin{equation}
\label{Eqn_mu*}
\mu^{*}(\mathcal{A})=\int_{[0,1]}\mathbb{I}_{\mathcal{A}}(x_{1},1)\dd F^{*}(x_{1})+\int_{[0,1]}\mathbb{I}_{\mathcal{A}}(1,x_{2})\dd F^{*}(x_{2})-\int_{\mathcal{A}\cap \mathcal{D}([0,1]^{2}\setminus\{\mathbf{0}\})}\phi(\mathbf{x})\dd \mathbf{x}.
\end{equation}
We can see that $\mu^{*}$ assigns positive mass to the top boundary $[0,1]\times \{1\}$ and $\{1\}\times [0,1]$, and negative mass to the interior of $\mathcal{D}([0,1]^{2}\setminus\{\mathbf{0}\})$. By part \ref{Prop_DDT_1} of the proposition, $\mu^{*}$ convexly dominates $\mu'$ on the top boundary. Moreover, the trivial measure on $\mathcal{D}(\mathbf{0})$ convexly dominates $\mu'$ on $\mathcal{D}(\mathbf{0})$. Hence, $\mu^{*}$ convexly dominates $\mu'$ on the entire type space.


Now we apply DDT's \Cref{Cor_DDT}. By part \ref{Prop_DDT_2} of the proposition,
\begin{equation*}
\int_{\mathcal{X}}u_{p}\dd \mu =2\int^{1}_{0}\overline{u}(x)f(x,1)\dd x-\int_{\mathcal{D}([0,1]^{2}\setminus\{\mathbf{0}\})}u_{p}(\mathbf{x})\phi (\mathbf{x})\dd \mathbf{x}=\int_{\mathcal{X}}u_{p}\dd \mu^{*}.
\end{equation*}
Thus, $u_{p}$ and $\mu^{*}$ satisfy part \ref{Cor_DDT_1} of DDT's \Cref{Cor_DDT}. Consider a transport plan that
\begin{itemize}
\item matches the negative mass at $\mathbf{x}\in \mathcal{D}([0,1]^{2}\setminus\{\mathbf{0},\mathbf{1}\})$ upward to the positive mass at $(x_{1},1)$ (if $x_{1}\leq x_{2}$) or $(1,x_{2})$ (if $x_{1}>x_{2}$). 
\item matches the negative mass at $\mathbf{x}\in \mathcal{D}(\mathbf{1})$ upward and rightward to the positive mass at $(y_{1},1)$ or $(1,y_{1})$,
\end{itemize}
where the second step is feasible because of parts \ref{Prop_DDT_0} and \ref{Prop_DDT_1} of the proposition. By \eqref{Eqn_MM}, if $x_{1}\leq x_{2}$, then in the first step, we have $u_{p}(x_{1},1)-u_{p}(\mathbf{x})=1-x_{2}$, and in the second step, we have $u_{p}(y_{1},1)-u_{p}(\mathbf{x})=(y_{1}-x_{1})+(1-x_{2})$. If $x_{1}>x_{2}$, we have two similar equalities. These verify part \ref{Cor_DDT_2} of DDT's \Cref{Cor_DDT}. Therefore, $u_{p}$ is optimal.

\emph{The ``only if'' direction.}
Suppose that $u_{p}$ given by \Cref{Lem_MM} is optimal. Then the transport plan in part \ref{Cor_DDT_2} of DDT's \Cref{Cor_DDT} should be the one described in the proof of the ``if'' direction. Since $\mu^{*}\succeq_{\cvx}\mu'$, and $\mu'$ assigns only negative mass to the interior of $\mathcal{B}$, we conclude that $\mu^{*}$ and $\mu'$ must be identical on $\mathcal{B}$. This means $\mu^{*}$ is given by \eqref{Eqn_mu*}. Therefore, in the proposition, part \ref{Prop_DDT_0} is ensured by the feasibility of the transport plan, part \ref{Prop_DDT_1} follows from $\mu^{*}\succeq_{\cvx}\mu'$, and part \ref{Prop_DDT_2} follows from part \ref{Cor_DDT_1} of DDT's \Cref{Cor_DDT}.

In addition to the proof, we demonstrate how \Cref{Prop_Additive} can be derived from \eqref{Eqn_F*_4}.

\emph{Proof of \Cref{Prop_Additive}.}
If $(q,1)$ is pooling, then $\{a^{q},b^{q}\}\subseteq \supp \overline{u}'$. By \eqref{Eqn_F*_4}, $\mathcal{F}$ reaches its maximum at $a^{q}$ and $b^{q}$, so the first-order condition of $\mathcal{F}$ implies $\mathcal{F}'(a^{q})=\mathcal{F}'(b^{q})=0$. By the construction of $\mathcal{F}$, this further implies $F^{*}(a^{q})=F(a^{q})$ and $F^{*}(b^{q})=F(b^{q})$. Thus
\begin{equation*}
\int^{b^{q}}_{a^{q}}\dd {[F^{*}(x)-F(x)]}=\int^{b^{q}}_{a^{q}}\left[\int^{1}_{\max (1-\overline{u}(x_{1}),x_{1})}\phi (\mathbf{x})\dd x_{2}-f(x_{1},1)\right]\dd x_{1}=\int^{b^{q}}_{a^{q}}\Phi(x_{1})\dd x_{1}=0,
\end{equation*}
which is exactly condition \eqref{Eqn_IntPhi} of \Cref{Prop_Additive}.
Moreover, the second-order condition of $\mathcal{F}$ implies $\mathcal{F}''(a^{q})\leq 0$ unless $q=0$, and $\mathcal{F}''(b^{q})\leq 0$ unless $q=1$. They can be reformulated as
\begin{equation*}
\begin{aligned}
\int^{1}_{\max (1-\overline{u}(a^{q}),a^{q})}\phi (a^{q},x_{2})\dd x_{2}-f(a^{q},1)=\Phi (a^{q})\leq 0, \quad \forall\, q>0, \\
\int^{1}_{\max (1-\overline{u}(b^{q}),b^{q})}\phi (b^{q},x_{2})\dd x_{2}-f(b^{q},1)=\Phi (b^{q})\leq 0, \quad \forall\, q<1,
\end{aligned}
\end{equation*}
which are the same as the two implications of \eqref{Eqn_SOC} in part \ref{Prop_Additive_2} of \Cref{Prop_Additive}.

If $(q,1)$ is separating, then there exists $\varepsilon>0$ such that $(a^{q}-\varepsilon,a^{q}+\varepsilon)\subseteq \supp \overline{u}'$, which also implies $q\in (0,1)$. By \eqref{Eqn_F*_4}, $\mathcal{F}(x)=0$ for any $x\in (a^{q}-\varepsilon,a^{q}+\varepsilon)$. Therefore, $\mathcal{F}''(a^{q})=\Phi (a^{q})=0$, which is exactly part \ref{Prop_Additive_3} of \Cref{Prop_Additive}.

\subsection{Proof of Proposition \ref{Prop_Pooling}}

Suppose that $\mathbf{x}^{*}$ and $\mathbf{q}^{*}$ satisfy the conditions of the proposition, but $\mathcal{D}(\mathbf{q}^{*})$ has zero measure with respect to $f$. Our goal is to prove that $\mathbf{q}^{*}$ violates \eqref{Eqn_SOC}. The proof consists of three claims.

\emph{Claim 1: Let $\mathcal{X}_{P}=\{\mathbf{x}\in \bd (\mathcal{X}):\forall \tau>0,\ (\mathbf{x}+\tau \mathbf{x}^{*})\notin \mathcal{X}\}$. Then $\mathcal{D}(\mathbf{q}^{*})\subseteq \mathcal{X}_{P}$.} Suppose that $\mathcal{D}(\mathbf{q}^{*})\setminus \mathcal{X}_{P}=\emptyset$. Then there exist $\hat{\mathbf{x}}\in \mathcal{D}(\mathbf{q}^{*})\setminus \mathcal{X}_{P}$ and $\tau >0$ such that $(\hat{\mathbf{x}}+\tau \mathbf{x}^{*})\in \mathcal{X}$. By \eqref{Eqn_Pooling}, $\tau \mathbf{x}^{*}\in \Int \NC (\mathbf{q}^{*})$, which, by \Cref{Lem_NormalCone}, implies $(\hat{\mathbf{x}}+\tau \mathbf{x}^{*})\in \mathcal{D}(\mathbf{q}^{*})$ and $\Int (\hat{\mathbf{x}}+\NC (\mathbf{q}^{*}))\cap \Int (\mathcal{X})\neq \emptyset$, a contradiction to the supposition that $\mathcal{D}(\mathbf{q}^{*})$ has zero measure with respect to $f$. The claim is thereby proved.

\emph{Claim 2: $\mathcal{X}_{P}$ is the union of finitely many facets of $\mathcal{X}$, and $\mathcal{X}_{P}\subseteq \overline{\bd}(\mathcal{X})$.} Since $\mathcal{X}$ is a polytope, for any $\mathbf{x}\in \mathcal{X}_{P}$, there exists a hyperplane, which is extended from a facet $\mathcal{S}(\mathbf{n})$ of $\mathcal{X}$, separating $\Int (\mathcal{X})$ and $\{\mathbf{x}+\tau \mathbf{x}^{*}:\tau >0\}$. As a result, $(\mathbf{x}+\tau \mathbf{x}^{*}-\mathbf{x})\cdot \mathbf{n}=\tau \mathbf{x}^{*}\cdot \mathbf{n}>0$ for all $\tau >0$. This implies $\mathcal{S}(\mathbf{n})\subseteq \mathcal{X}_{P}$, so $\mathcal{X}_{P}$ is the union of finitely many facets of $\mathcal{X}$. Moreover, since $\mathbf{x}^{*}\in \mathcal{X}$, $(\mathbf{x}-\mathbf{x}^{*})\cdot \mathbf{n}\geq 0$, which implies $\mathbf{x}\cdot \mathbf{n}\geq \mathbf{x}^{*}\cdot \mathbf{n}>0$. Thus, $\mathbf{x}\in \mathcal{S}(\mathbf{n}) \subseteq \overline{\bd}(\mathcal{X})$. The claim is thereby proved.


Now consider the price schedule $\overline{p}+\varepsilon \mathbb{I}_{\mathbf{q}^{*}}$ for $\varepsilon<0$. Let $\mathcal{D}_{\varepsilon}(\mathbf{q}^{*})=\mathcal{D}(\mathbf{q}^{*};\overline{p}+\varepsilon \mathbb{I}_{\mathbf{q}^{*}})$, and recall from \Cref{Lem_MR} that $\MR_{\varepsilon-}(\mathbf{q}^{*})=-\mu (\mathcal{D}_{\varepsilon}(\mathbf{q}^{*}))$.

\emph{Claim 3: $\MR_{\varepsilon-}(\mathbf{q}^{*})<0$.} By \Cref{Lem_ContReversed}, $\mathbf{q}^{*}$ is pooling under the new price schedule, so $\mathcal{D}_{\varepsilon}(\mathbf{q}^{*})\cap \mathcal{X}_{P}$ has positive measure with respect to $\sigma$. By \Cref{Lem_NormalCone}, if $\hat{\mathbf{x}}\in \mathcal{D}_{\varepsilon}(\mathbf{q}^{*})$, then for any $\tau \geq 0$ such that $(\hat{\mathbf{x}}+\tau \mathbf{x}^{*})\in \mathcal{X}$, we have $(\hat{\mathbf{x}}+\tau \mathbf{x}^{*})\in \mathcal{D}_{\varepsilon}(\mathbf{q}^{*})$. In other words, any type in $\mathcal{D}_{\varepsilon}(\mathbf{q}^{*})$ can be represented as $\mathcal{T} (\mathbf{x},\tau)=\mathbf{x}-\tau \mathbf{x}^{*}$ for some $\mathbf{x}\in \mathcal{D}_{\varepsilon}(\mathbf{q}^{*})\cap \mathcal{X}_{P}$ and $\tau\geq 0$. The mapping $\mathcal{T}$ is injective: Suppose that $\hat{\mathbf{x}},\hat{\mathbf{x}}'\in \mathcal{D}_{\varepsilon}(\mathbf{q}^{*})$ can be expressed as $\hat{\mathbf{x}}=\mathbf{x}-\tau \mathbf{x}^{*}$ and $\hat{\mathbf{x}}'=\mathbf{x}'-\tau' \mathbf{x}^{*}$, respectively. Then $\mathbf{x}=\mathbf{x}'$ and $\tau=\tau'$ jointly imply $\hat{\mathbf{x}}=\hat{\mathbf{x}}'$, and $\hat{\mathbf{x}}=\hat{\mathbf{x}}'$ implies $\mathbf{x}=\mathbf{x}'$ and $\tau=\tau'$ (otherwise $\mathbf{x}=\mathbf{x}'+(\tau -\tau')\mathbf{x}^{*}$, a contradiction to $\mathbf{x},\mathbf{x}'\in \mathcal{X}_{P}$).
Then, applying the change of variables formula, we obtain:
\begin{equation*}
\int_{\mathcal{D}_{\varepsilon}(\mathbf{q}^{*})}\phi(\mathbf{x})\dd \mathbf{x}=\int_{\mathcal{D}_{\varepsilon}(\mathbf{q}^{*})\cap \mathcal{X}_{P}}\left[\int^{\overline{\tau}(\mathbf{x})}_{0}\phi(\mathbf{x}-\tau \mathbf{x}^{*})|\det J(\mathbf{x},\tau)|\dd \tau\right]\dd \sigma (\mathbf{x}),
\end{equation*}
where $\overline{\tau}(\mathbf{x})$ is the largest $\tau$ satisfying $(\mathbf{x}-\tau \mathbf{x}^{*})\in \mathcal{D}_{\varepsilon}(\mathbf{q}^{*})$, and $\det J(\mathbf{x},\tau)$ is the determinant of the Jacobian of $\mathcal{T}$. Since $\mu (\underline{\bd}(\mathcal{X}))=0$, we have
\begin{equation*}
\begin{aligned}
\MR_{\varepsilon-}(\mathbf{q}^{*}) & =\int_{\mathcal{D}_{\varepsilon}(\mathbf{q}^{*})\cap \mathcal{X}_{P}}\left[\int^{\overline{\tau}(\mathbf{x})}_{0}\phi(\mathbf{x}-\tau \mathbf{x}^{*})|\det J(\mathbf{x},\tau)|\dd \tau-f(\mathbf{x})(\mathbf{x}\cdot\mathbf{n}_{\mathbf{x}})\right]\dd \sigma (\mathbf{x}) \\
& \quad -\int_{\mathcal{D}_{\varepsilon}(\mathbf{q})\cap (\overline{\bd}(\mathcal{X})\setminus\mathcal{X}_{P})}f(\mathbf{x})(\mathbf{x}\cdot\mathbf{n}_{\mathbf{x}})\dd \sigma (\mathbf{x}).
\end{aligned}
\end{equation*}
Note that both $\phi (\mathbf{x}-\tau \mathbf{x}^{*})$ and $|\det J(\mathbf{x},\tau)|$ are bounded from above, whereas $f(\mathbf{x})$ has a positive lower bound on $\overline{\bd}(\mathcal{X})$. When $\varepsilon \to 0$, $\mathcal{D}_{\varepsilon}(\mathbf{q}^{*})\to \mathcal{D}(\mathbf{q}^{*})$, meaning that $\overline{\tau}(\mathbf{x})\to 0$. We can choose $\varepsilon$ sufficiently close to zero such that for any $\mathbf{x}\in \mathcal{D}_{\varepsilon}(\mathbf{q}^{*})\cap \mathcal{X}_{P}$,
\begin{equation*}
\int^{\overline{\tau}(\mathbf{x})}_{0}\phi(\mathbf{x}-\tau \mathbf{x}^{*})|\det J(\mathbf{x},\tau)|\dd \tau-f(\mathbf{x})(\mathbf{x}\cdot\mathbf{n}_{\mathbf{x}})<0,
\end{equation*}
implying that $\MR_{\varepsilon-}(\mathbf{q}^{*})<0$, a contradiction to \eqref{Eqn_SOC}. Hence, our supposition is not true, meaning that $\mathbf{q}^{*}$ must be pooling.

\subsection{Proof of Proposition \ref{Prop_Active}}

Denote by $\mathcal{Q}_{A}$ the set of bundles satisfying the conditions of the proposition, and $\mathcal{V}_{A}$ the set of nonzero vectors $\mathbf{n}$ such that $\mathcal{S}(\mathbf{n})$ is a facet of $\mathcal{X}$ with $\mathcal{S}(\mathbf{n})\subseteq \overline{\bd}(\mathcal{X})$. Also denote $\mathcal{Q}_{A}'=\{\mathbf{q}\in \bd(\mathcal{Q}):\exists \mathbf{n}\in \mathcal{V}_{A} \text{ such that }\forall \tau>0,\ \mathbf{q}+\tau \mathbf{n} \notin \mathcal{Q}\}$

\emph{Claim 1: $\mathcal{Q}_{A}=\mathcal{Q}_{A}'$.}
For any $\mathbf{q}\in \mathcal{Q}_{A}$, there exist $\mathbf{n}\in \mathcal{V}_{A}$ and $\mathbf{x}\in \mathcal{S}(\mathbf{n})$ satisfying \eqref{Eqn_Active}. If $\mathbf{q}+\tau \mathbf{n} \in \mathcal{Q}$ for some $\tau>0$, letting $\hat{\mathbf{q}}=\mathbf{q}+\tau \mathbf{n}$ in \eqref{Eqn_Active} leads to a contradiction. Thus, $\mathbf{q}\in \mathcal{Q}_{A}'$.
Conversely, for any $\mathbf{q}\in \mathcal{Q}_{A}'$, there exists $\mathbf{n}\in \mathcal{V}_{A}$ such that $\mathbf{q}+\tau \mathbf{n} \notin \mathcal{Q}$ for any $\tau>0$. As a consequence, any $\hat{\mathbf{q}}$ qualified for \eqref{Eqn_Active} must be of the form $\hat{\mathbf{q}}=\mathbf{q}+\tau \mathbf{n}$ for some $\tau<0$, so $\mathbf{q}$ satisfies \eqref{Eqn_Active}.
The claim is thereby proved.

\emph{Claim 2: $\mathcal{Q}_{A}$ is closed.}
Similar to Claim 2 in the proof of \Cref{Prop_Pooling}, it can be proved that $\mathcal{Q}_{A}$ is the union of finitely many facets of $\mathcal{Q}$. The claim follows immediately.

Suppose that $\mathbf{q}\notin \mathcal{Q}_{A}$ is active. Let $Q_{\delta}=\mathbf{q}+\delta B^{N}$. By definition, for any $\delta>0$, $\mathcal{D}(Q_{\delta})$ has positive measure with respect to $f$. Since $\mu (\underline{\bd}(\mathcal{X}))=0$, $\mu (\mathcal{D}(Q_{\delta})\cap \overline{\bd}(\mathcal{X}))>0$ for any $\delta>0$. We choose a sufficiently small $\delta$ such that $Q_{\delta}\cap \mathcal{Q}_{A}=\emptyset$. Without loss of generality, we assume that $\mu (\mathcal{D}(Q_{\delta})\cap \mathcal{S}(\mathbf{n}))>0$ for some $\mathbf{n}\in \mathcal{V}_{A}$, conditional on the $\delta$ we just choose. By construction, any $\hat{\mathbf{q}} \in Q_{\delta}$ should satisfy the following: For any vector in $\mathcal{V}_{A}$, and in particular for $\mathbf{n}$, there exists $\hat{\tau}>0$ such that $(\hat{\mathbf{q}}+\hat{\tau} \mathbf{n})\in \mathcal{Q}$. Consider $\hat{\mathbf{q}}\in Q_{\delta}$ such that $\mathcal{D}(\hat{\mathbf{q}})\cap \mathcal{S}(\mathbf{n})\neq \emptyset$ and its corresponding $\hat{\tau}$.

\emph{Claim 3: $\mathcal{D}(\hat{\mathbf{q}}+\hat{\tau}\mathbf{n})=\mathcal{D}(\hat{\mathbf{q}})\cap\mathcal{S}(\mathbf{n})$.} For any $\mathbf{x}\in \mathcal{D}(\hat{\mathbf{q}})\cap \mathcal{S}(\mathbf{n})$ and $\hat{\mathbf{x}}\in \mathcal{D}(\hat{\mathbf{q}}+\hat{\tau}\mathbf{n})$,
\begin{equation*}
0\leq [\mathbf{x}\cdot \hat{\mathbf{q}}-\overline{p}(\hat{\mathbf{q}})]-[\mathbf{x}\cdot (\hat{\mathbf{q}}+\hat{\tau}\mathbf{n})-\overline{p}(\hat{\mathbf{q}}+\hat{\tau}\mathbf{n})]\leq \tau (\hat{\mathbf{x}}-\mathbf{x})\cdot \mathbf{n} \leq 0,
\end{equation*}
where the first inequality comes from $\mathbf{x}\in \mathcal{D}(\hat{\mathbf{q}})$, the second inequality comes from $\hat{\mathbf{x}}\cdot (\hat{\mathbf{q}}+\hat{\tau}\mathbf{n})-\overline{p}(\hat{\mathbf{q}}+\hat{\tau}\mathbf{n})\geq \hat{\mathbf{x}}\cdot \hat{\mathbf{q}}-\overline{p}(\hat{\mathbf{q}})$ which is due to $\hat{\mathbf{x}}\in \mathcal{D}(\hat{\mathbf{q}}+\hat{\tau}\mathbf{n})$, and the last inequality comes from $\mathbf{x}\in \mathcal{S}(\mathbf{n})$. Thus, all inequalities are equalities, implying that $\mathbf{x}\in \mathcal{D}(\hat{\mathbf{q}}+\hat{\tau}\mathbf{n})$ and $\hat{\mathbf{x}}\in \mathcal{S}(\mathbf{n})$. Similarly,
\begin{equation*}
0\leq [\hat{\mathbf{x}}\cdot (\hat{\mathbf{q}}+\hat{\tau}\mathbf{n})-\overline{p}(\hat{\mathbf{q}}+\hat{\tau}\mathbf{n})]-[\hat{\mathbf{x}}\cdot \hat{\mathbf{q}}-\overline{p}(\hat{\mathbf{q}})]\leq \tau (\hat{\mathbf{x}}-\mathbf{x})\cdot \mathbf{n} \leq 0,
\end{equation*}
implying that $\hat{\mathbf{x}}\in \mathcal{D}(\hat{\mathbf{q}})$. The claim is thereby proved.\footnote{It is helpful to observe the symmetry between \Cref{Lem_NormalCone} and Claim 3: By \Cref{Lem_NormalCone}, if $\mathbf{x}\in \mathcal{D}(\mathbf{q})$ for a \emph{boundary allocation} $\mathbf{q}$ and $\mathbf{n}$ is an outward normal to $\mathcal{Q}$, then $(\mathbf{x}+\tau\mathbf{n})\in \mathcal{D}(\mathbf{q})$ for any $\tau>0$ such that $(\mathbf{x}+\tau \mathbf{n})\in \mathcal{X}$. By Claim 3, if $\mathbf{x}\in \mathcal{D}(\mathbf{q})$ for a \emph{boundary type} $\mathbf{x}$ and $\mathbf{n}$ is an outward normal to $\mathcal{X}$, then $\mathbf{x}\in \mathcal{D}(\mathbf{q}+\tau \mathbf{n})$ for any $\tau>0$ such that $(\mathbf{q}+\tau \mathbf{n})\in \mathcal{Q}$. Such symmetry is a natural result of the convex conjugacy between $\overline{p}$ and $u_{p}$.}

As a final step, let $\hat{Q}$ be the collection of all $\hat{\mathbf{q}}+\hat{\tau}\mathbf{n}$ satisfying $\hat{\mathbf{q}}\in Q_{\delta}$ and Claim 3. Then $\mathcal{D}(\hat{Q})=\mathcal{D}(Q_{\delta})\cap \mathcal{S}(\mathbf{n})$, which means $\mu (\mathcal{D}(\hat{Q}))>0$, a contradiction to \eqref{Eqn_MR-}. Hence, our supposition is not true, meaning that only bundles in $\mathcal{Q}_{A}$ can be active.

\subsection{Proof of Proposition \ref{Prop_Zero}}

Suppose that $\mathcal{D}(\mathbf{0})$ has zero measure with respect to $f$. It can be verified that $-\mathbf{e}_{n}\in \NC (\mathbf{0})$ for all $n\in \mathcal{N}$, so $\NC (\mathbf{0})$ is of dimension $N$. By \Cref{Lem_NormalCone}, $\mathcal{D}(\mathbf{0})$ cannot contain any point in $\Int (\mathcal{X})$. Moreover, by the convexity of $\mathcal{D}(\mathbf{0})$, there must exist a nonzero vector $\mathbf{n}\in \mathbb{R}^{N}$ such that $\mathcal{D}(\mathbf{0})\subseteq \mathcal{S}(\mathbf{n})$. As a result, for any $\mathbf{x}\in \mathcal{D}(\mathbf{0})$, $\Int (\mathbf{x}+\NC (\mathbf{0}))$ and $\Int (\mathcal{X})$ are separated by $\mathcal{S}(\mathbf{n})$. Therefore, $\mathbf{n}\cdot \mathbf{n}_{\mathbf{0}}\geq 0$ for all $\mathbf{n}_{\mathbf{0}}\in \NC (\mathbf{0})$, namely, $-\mathbf{n}$ belongs to the polar cone of $\NC (\mathbf{0})$ 
(ch. \citet[Section 14]{rockafellar1970}).

Since $\NC (\mathbf{0})=\{\hat{\mathbf{n}}\in \mathbb{R}^{N}:\hat{\mathbf{n}}\cdot (\mathbf{0}-\mathbf{q})\geq 0,\ \forall \mathbf{q}\in \mathcal{Q}\}$, $\NC (\mathbf{0})$ is the polar cone of the smallest convex cone that contains $\mathcal{Q}$, which we denote by $\overline{\mathcal{Q}}$. Conversely, $\overline{\mathcal{Q}}$ is the polar cone of $\NC (\mathbf{0})$, and $-\mathbf{n}\in \overline{\mathcal{Q}}$. That is, there exists $\tau>0$ and $\mathbf{q}\in \mathcal{Q}$ such that $-\mathbf{n}=\tau \mathbf{q}$. However, letting $Q=\mathcal{Q}\setminus \{\mathbf{0}\}$ in \Cref{Lem_MR} gives us $\mu (\mathcal{D}(Q^{c}))=\mu (\mathcal{D}(\mathbf{0}))\geq \mu (\mathcal{S}(-\mathbf{q}))>-1$, a contradiction to \eqref{Eqn_MR+}. Hence, our supposition is not true, meaning that $\mathcal{D}(\mathbf{0})$ has positive measure with respect to $f$.

\subsection{Proof of Proposition \ref{Prop_Pure_N}}

Suppose that $p(\mathbf{1})-(N-1)\underline{x}>\overline{x}$. Then, by letting $c=[p(\mathbf{1})-\overline{x}]/(N-1)$, we know that $c\in (\underline{x},\overline{x}]$, and buyers of type $\mathbf{x}=(c,\dotsc,c,\overline{x})$ are indifferent between bundles $\mathbf{0}$ and $\mathbf{1}$. Thus, $(\underline{x},\dotsc,\underline{x},\overline{x})\in \mathcal{D}(\mathbf{0})$. By \Cref{Lem_NormalCone}, we also have $(\underline{x},\dotsc,\underline{x},\overline{x})\in \mathcal{D}(\mathbf{e}_{N})$, so $p(\mathbf{e}_{N})=\overline{x}$. This further implies $(c,\dotsc,c,\overline{x})\in \mathcal{D}(\mathbf{e}_{N})$. By the convexity of demand sets, $\mathcal{D}(\mathbf{e}_{N})$ has strictly positive measure with respect to the $(N-1)$-dimensional surface integral on $\overline{\bd}(\mathcal{X})$, implying that $\mu (\mathcal{D}(\mathbf{e}_{N}))>0$, a contradiction to \eqref{Eqn_MR-}. Hence, our supposition is not true, and \eqref{Eqn_Pure_N_1} is proved.

By \eqref{Eqn_Pure_N_1}, $\mathcal{D}(\mathbf{0})$ is the convex hull of $\{\underline{\mathbf{x}}+(p(\mathbf{1})-N\underline{x})\mathbf{e}_{n}\}_{n\in \mathcal{N}}$ and $\mathbf{0}$. For any $\mathbf{q}$ that is neither $\mathbf{0}$ nor $\mathbf{1}$, $\mathcal{D}(\mathbf{q})$ and $\mathcal{D}(\mathbf{1})$ are separated by a hyperplane with normal vector $\mathbf{q}-\mathbf{1}$, meaning that $\mathcal{D}(\mathbf{q})\cap \mathcal{D}(\mathbf{1})$ is a subset of the convex hull of $\{\underline{\mathbf{x}}+(p(\mathbf{1})-N\underline{x})\mathbf{e}_{n}\}_{n\in \mathcal{N}}$, which is exactly $\mathcal{D}(\mathbf{0})\cap \mathcal{D}(\mathbf{1})$. Therefore, $\mathcal{D}(\mathbf{1})\cap \mathcal{D}(\mathbf{1}^{c})=\mathcal{D}(\mathbf{0})\cap \mathcal{D}(\mathbf{1})$. Since $\mu (\mathcal{D}(\mathbf{0})\cap \mathcal{D}(\mathbf{1}))=0$, \eqref{Eqn_Pure_N_2} follows from \eqref{Eqn_MR}.

By the argument leading to \eqref{Eqn_Pure_N_1}, for any $n\in \mathcal{N}$,
\begin{equation*}
\mathcal{D}(\mathbf{e}_{n})=\{\mathbf{x}\in \mathcal{X}: x_{m}=\underline{x}\text{ for any }m\neq n\text{ and }x_{n}\geq p(\mathbf{1})-(N-1)\underline{x}\}.
\end{equation*}
Suppose that \eqref{Eqn_Pure_N_3} is false. Consider changing $p(\mathbf{e}_{n})$ by $\varepsilon<0$. This price change will attract buyers in the neighborhood of $\mathcal{D}(\mathbf{e}_{n})$. By \Cref{Lem_Cont}, when $\varepsilon$ is sufficiently small, $\MR_{\varepsilon-} (\mathbf{e}_{n})=-\mu (\mathcal{D}(\mathbf{e}_{n};\overline{p}+\varepsilon \mathbb{I}_{\mathbf{e}_{n}}))<0$, a contradiction to \eqref{Eqn_SOC}. Hence, our supposition is not true, and \eqref{Eqn_Pure_N_3} is proved.

\subsection{Proof of Proposition \ref{Prop_Separate_N}}

The proof is similar to the proof of \Cref{Prop_Pure_N}. Since $p$ is symmetric, for any $n\in \mathcal{N}$,
\begin{equation*}
\mathcal{D}(\mathbf{e}_{n})=\{\mathbf{x}\in \mathcal{X}: x_{n}\geq \max (x_{1},\dotsc,x_{N})\text{ and }x_{n}\geq p(\mathbf{e}_{n})\}.
\end{equation*}
For any $\mathbf{q}$ that is neither $\mathbf{0}$ nor $\mathbf{e}_{n}$, $\mathcal{D}(\mathbf{q})$ and $\mathcal{D}(\mathbf{e}_{n})$ are separated by a hyperplane with normal vector $\mathbf{q}-\mathbf{e}_{n}$, meaning that either
\begin{equation*}
(\mathcal{D}(\mathbf{q})\cap \mathcal{D}(\mathbf{\mathbf{e}}_{n}))\subseteq \{\mathbf{x}\in \mathcal{X}:x_{n}\geq \max (x_{1},\dotsc,x_{N})\text{ and }x_{n}=p(\mathbf{e}_{n})\},
\end{equation*}
or, for some $m\neq n$,
\begin{equation*}
(\mathcal{D}(\mathbf{q})\cap \mathcal{D}(\mathbf{\mathbf{e}}_{n}))\subseteq \{\mathbf{x}\in \mathcal{X}:x_{n}=x_{m}\geq \max (x_{1},\dotsc,x_{N})\}.
\end{equation*}
In both cases, $\mu (\mathcal{D}(\mathbf{q})\cap \mathcal{D}(\mathbf{\mathbf{e}}_{n}))=0$. Thus, \eqref{Eqn_Separate_N_1} follows from \eqref{Eqn_MR}.

Let $\mathbf{q}=(1/N)\mathbf{1}$. It can be computed that
\begin{equation*}
\mathcal{D}(\mathbf{q})=\{\mathbf{x}\in \mathcal{X}: x_{1}=\cdots=x_{N}\geq p(\mathbf{e}_{1})\}.
\end{equation*}
Suppose that \eqref{Eqn_Separate_N_2} is false. Consider changing $p(\mathbf{q})$ by $\varepsilon<0$. This price change will attract buyers in the neighborhood of $\mathcal{D}(\mathbf{q})$. By \Cref{Lem_Cont}, when $\varepsilon$ is sufficiently small, $\MR_{\varepsilon-} (\mathbf{q})=-\mu \mathcal{D}(\mathbf{q};\overline{p}+\varepsilon \mathbb{I}_{\mathbf{q}})<0$, a contradiction to \eqref{Eqn_SOC}. Hence, our supposition is not true, and \eqref{Eqn_Separate_N_2} is proved.

\section{Omitted Results}

\subsection{Direct Mechanisms}
\label{Sec_DirectMech}

In this section, we reformulate the seller's problem in \Cref{Sec_Model} as finding the optimal direct mechanism. In particular, a direct mechanism is denoted by $(\mathbf{q},t)$, where $\mathbf{q} :\mathcal{X}\mapsto \mathcal{Q}$ maps a buyer type to an allocation, and $t :\mathcal{X}\mapsto \mathbb{R}$ maps a buyer type to a monetary transfer. A mechanism is \emph{feasible} if it satisfies the incentive compatibility and the individual rationality constraints
\begin{align}
\label{IC}
\mathbf{x}\cdot\mathbf{q}(\mathbf{x})-t(\mathbf{x}) & \geq \mathbf{x}\cdot\mathbf{q}(\hat{\mathbf{x}})-t(\hat{\mathbf{x}}) & \forall\, & \mathbf{x},\hat{\mathbf{x}}\in \mathcal{X}, \tag{IC} \\
\label{IR}
\mathbf{x}\cdot\mathbf{q}(\mathbf{x})-t(\mathbf{x}) & \geq 0 & \forall\, & \mathbf{x}\in \mathcal{X}.
\tag{IR}
\end{align}
Let $u(\mathbf{x})=\mathbf{x}\cdot\mathbf{q}(\mathbf{x})-t(\mathbf{x})$ be the indirect utility function. By standard arguments, $(\mathbf{q},t)$ is feasible if and only if $u$ is nonnegative, convex, and satisfies $\nabla u(\mathbf{x})=\mathbf{q}(\mathbf{x})\in \mathcal{Q}$ whenever $u$ is differentiable at $\mathbf{x}$. The price schedule induced by $u$ is the convex conjugate of $u$, denoted by $p_{u}$:
\begin{equation*}
p_{u}=\sup_{\mathbf{x}\in \mathcal{X}}\{\mathbf{x}\cdot\mathbf{q}-u(\mathbf{x})\}.
\end{equation*}
Let $\mathcal{U}$ be the set of feasible $u$. Then the seller's revenue maximization problem is
\begin{equation}
\label{Eqn_Problem_u}
\max_{u\in \mathcal{U}}\int_{\mathcal{X}}u\dd \mu.
\tag{Problem $u$}
\end{equation}

\Cref{Prop_Equiv} establishes the equivalence between \ref{Eqn_Problem_p} and \ref{Eqn_Problem_u}.

\begin{aproposition}
\label{Prop_Equiv}
$p$ solves \ref{Eqn_Problem_p} if and only if $u_{p}$ solves \ref{Eqn_Problem_u}. Conversely, $u$ solves \ref{Eqn_Problem_u} if and only if $p_{u}$ solves \ref{Eqn_Problem_p}.
\end{aproposition}

\Cref{Prop_Equiv} is essentially a consequence of the taxation principle \citep{laffont2002theory}.
If $p$ is a feasible price schedule for \ref{Eqn_Problem_p}, $u_{p}$ is also a feasible indirect utility function for \ref{Eqn_Problem_u}. If $u$ is a feasible indirect utility function for \ref{Eqn_Problem_u}, it is possible that $p_{u}(\mathbf{0})<0$, but the seller can always increase the prices for all bundles by a fixed amount and obtain a (weakly) higher revenue. In other words, an optimal $u$ for \ref{Eqn_Problem_u} must induce an optimal $p_{u}$ for \ref{Eqn_Problem_p}.

\subsection{Properties of $\zeta$ in All Examples of Section \ref{Sec_Additive_Menu}}
\label{Sec_Examples}

When $h$ is twice differentiable and $h'(x)\neq 0$, we can compute $\zeta'$ and $\zeta''$ from \eqref{Eqn_Zeta_iid}:
\begin{align}
\label{Eqn_Zeta'}
\zeta'(x) & =\frac{h'(x)\zeta(x)}{[h(x)+2][h(\zeta (x))+h(x)+3]}, \\
\label{Eqn_Zeta''}
\zeta''(x) & =\frac{h''(x)\zeta'(x)}{h'(x)}-\left[\frac{h(\zeta (x))+2h(x)+4}{\zeta (x)}+\frac{h'(\zeta (x))}{h(\zeta (x))+h(x)+3}\right]\zeta'(x)^{2}.
\end{align}

\subsubsection{Power-law Distribution}

When $\eta\geq 1$, $h$ is nondecreasing, so $\zeta$ is nondecreasing. Moreover, it can be computed that $C=\eta /[(1+\theta)^{\eta}-\theta^{\eta}]$ and $G(x)=[(x+\theta)^{\eta}-\theta^{\eta}]/[(1+\theta)^{\eta}-\theta^{\eta}]$. If $\zeta (x)=x$, then $x$ solves
\begin{equation*}
\frac{xg(x)}{1-G(x)}=h(x)+2,
\end{equation*}
which can be simplified as
\begin{equation*}
\frac{\eta}{(1+\theta)^{\eta}-(x+\theta)^{\eta}}=\frac{\eta-1}{x+\theta}+\frac{2}{x}.
\end{equation*}
The left-hand side is strictly increasing in $x$ and ranges from $C$ to $+\infty$, while the right-hand side is strictly decreasing in $x$ and ranges from $+\infty$ to $(\eta-1)/(1+\theta)+2$. Therefore, $\zeta (x)=x$ has a unique solution.

\subsubsection{Truncated Pareto distribution}

Insert $h$ into \eqref{Eqn_Zeta''} and apply \eqref{Eqn_Zeta'}:
\begin{equation*}
\begin{aligned}
\zeta''(x) & =-\frac{2\zeta'(x)}{x+1}-\left[\frac{h(\zeta (x))+2h(x)+4}{\zeta (x)}+\frac{h'(\zeta (x))}{h(\zeta (x))+h(x)+3}\right]\zeta'(x)^{2} \\
& =-\left\{\frac{2}{x+1}-\frac{\eta [h(\zeta (x))+2h(x)+4]}{(x+1)^{2}[h(x)+2][h(\zeta (x))+h(x)+3]}\right\}\zeta'(x)-\frac{h'(\zeta (x))\zeta'(x)^{2}}{h(\zeta (x))+h(x)+3}.
\end{aligned}
\end{equation*}
It can be verified that
\begin{equation*}
\begin{aligned}
\frac{h(\zeta (x))+2h(x)+4}{h(\zeta (x))+h(x)+3} & <2, \\
0<\frac{\eta}{(x+1)[h(x)+2]} & <1.
\end{aligned}
\end{equation*}
Thus,
\begin{equation*}
\zeta''(x)>-\left\{\frac{2}{x+1}-\frac{2\eta }{(x+1)^{2}[h(x)+2]}\right\}\zeta'(x)>0.
\end{equation*}

\subsubsection{Uniform Marginals with Correlation}

By direct computation,
\begin{equation*}
\phi (\mathbf{x})=3-6\rho (x_{1}+x_{2}-4x_{1}x_{2}).
\end{equation*}
When $\rho<0$, $\phi$ is minimized at $(1,1)$. When $\rho>0$, $\phi$ is minimized at $(0,1)$ or $(1,0)$. It can be computed that $\phi >0$ if $\rho \in (-1/4,1/3)$.

Next, we show that $\zeta>1/2$, which can be easily verified from \eqref{Eqn_Zeta}:
\begin{equation*}
\int^{1}_{1/2}\phi (\mathbf{x})\dd x_{2}-f(x_{1},1)=\frac{3}{4}\rho+\frac{1}{2}>0.
\end{equation*}
Also, by taking derivatives on both sides of \eqref{Eqn_Zeta}, we have
\begin{equation*}
\zeta'(x)=-\frac{6\rho \zeta (x) (2\zeta (x) -1)}{\phi (x,\zeta (x))}.
\end{equation*}
Clearly, $\zeta$ is strictly increasing when $\rho <0$ and strictly decreasing when $\rho >0$.

Finally, we show that $\zeta$ is convex when $\rho >0$. For convenience, we use \eqref{Eqn_Zeta} to express $x$ as a function of $\zeta$ and denote this function by $Z$:
\begin{equation*}
x=Z(\zeta)=\frac{3\rho \zeta^{2}-3\zeta +2}{6\rho \zeta (2\zeta -1)}.
\end{equation*}
Then, equivalently, we should prove that $Z$ is convex. By direct computation,
\begin{equation*}
\begin{aligned}
Z'(\zeta) & =\frac{-3(\rho-2)\zeta^{2}-8\zeta +2}{6\rho \zeta^{2} (2\zeta -1)^{2}}, \\
Z''(\zeta) & =\frac{6(\rho-2)\zeta^{3}+24\zeta^{2}-12\zeta +2}{3\rho \zeta^{3} (2\zeta -1)^{3}}=\frac{1}{\zeta^{3}(2\zeta -1)^{3}}\left[\frac{1}{\rho}\left(\frac{2}{3}-4\zeta (1-\zeta)^{2}\right)+2\zeta^{3}\right].
\end{aligned}
\end{equation*}
When $\zeta\in (1/2,1)$, $2/3-4\zeta (1-\zeta)^{2}>1/6$, which implies $Z''(\zeta)>0$. This means $\zeta$ as a function of $x$ is also convex.

\subsubsection{Truncated Normal Distribution}

Insert $h$ into \eqref{Eqn_Zeta''}:
\begin{equation*}
\zeta''(x)=-\frac{2\zeta'(x)}{h'(x)}-\left[\frac{h(\zeta (x))+2h(x)+4}{\zeta (x)}+\frac{(\theta-2\zeta (x))}{h(\zeta (x))+h(x)+3}\right]\zeta'(x)^{2}.
\end{equation*}
We then compute a lower bound for $h(\zeta (x))+h(x)$. Using integration by parts,
\begin{equation*}
\begin{aligned}
\int^{1}_{x}[1-G(\hat{x})]\dd (\hat{x}-\theta) & =(\theta-x)(1-G(x))+\int^{1}_{x}(\hat{x}-\theta)g(\hat{x})\dd\hat{x} \\
& =(\theta-x)(1-G(x))+g(x)-g(1).
\end{aligned}
\end{equation*}
Since the left-hand side is nonnegative,
\begin{equation*}
\theta-x+\frac{g(x)}{1-G(x)}\geq \frac{g(1)}{1-G(x)}\geq g(1)=e^{-(1-\theta)^{2}/2}\geq 1-\frac{(1-\theta)^{2}}{2}\geq \frac{1}{2},
\end{equation*}
where the last inequality comes from $e^{\tau}\geq 1+\tau$. Therefore,
\begin{equation*}
h(\zeta (x))+h(x)+2=\zeta (x)(\theta-\zeta (x))+\frac{\zeta (x)g(\zeta (x))}{1-G(\zeta (x))}\geq \frac{\zeta (x)}{2}.
\end{equation*}
Since $h(x)\geq -1$, $h(\zeta (x))+2h(x)+4\geq h(\zeta (x))+h(x)+3\geq 1+\zeta (x)/2$. Finally,
\begin{equation*}
[h(\zeta (x))+2h(x)+4][h(\zeta (x))+h(x)+3]-\zeta (x)(\theta-2\zeta (x))\geq \left[1+\frac{\zeta (x)}{2}\right]^{2}-2\zeta (x)^{2}>0,
\end{equation*}
which means $\zeta''(x)<0$.

\subsubsection{Truncated Normal Distribution with Correlation}

Suppose that $\mathbf{x}$ follows a bivariate normal distribution with correlation:
\begin{equation*}
f(\mathbf{x})=C\exp \left\{-\frac{(x_{1}-\theta)^{2}-2\rho (x_{1}-\theta)(x_{2}-\theta)+(x_{2}-\theta)^{2}}{2(1-\rho^{2})}\right\}.
\end{equation*}
With a slight abuse of notation, denote by $h(\mathbf{x})=\mathbf{x}\cdot \nabla f(\mathbf{x})/f(\mathbf{x})$. By direct computation,
\begin{equation*}
h(\mathbf{x})=-\frac{x_{1}^{2}-2\rho x_{1}x_{2}+x_{2}^{2}-\theta (1-\rho)(x_{1}+x_{2})}{1-\rho^{2}}.
\end{equation*}
It can be verified that $h$ is concave in $\mathbf{x}$, so $h(\mathbf{x})$ is bounded below by the minimum of $h(0,0)$, $h(1,0)$, and $h(1,1)$. If $\theta\leq (1-2\rho)/(1-\rho)$, then
\begin{equation*}
h(1,0)\geq h(1,1)=-\frac{2-2\theta}{1+\rho},
\end{equation*}
meaning that $f$ is strictly regular if $1+3\rho +2\theta >0$. If $\theta >(1-2\rho)/(1-\rho)$, then
\begin{equation*}
h(1,1)\geq h(1,0)=-\frac{1-\theta (1-\rho)}{1-\rho^{2}},
\end{equation*}
meaning that $f$ is strictly regular if $\theta >(3\rho^{2}-2)/(1-\rho)$. In particular, when $\theta=0$, $f$ is strictly regular if $-1/3<\rho<\sqrt{2/3}$.


\subsubsection{Truncated Gamma Distribution}

Insert $h$ into \eqref{Eqn_Zeta''}:
\begin{equation*}
\zeta''(x)=-\left[\frac{h(\zeta (x))+2h(x)+4}{\zeta (x)}+\frac{-\lambda}{h(\zeta (x))+h(x)+3}\right]\zeta'(x)^{2}.
\end{equation*}
By direct computation,
\begin{equation*}
[h(\zeta (x))+2h(x)+4][h(\zeta (x))+h(x)+3]-\lambda \zeta (x)\geq 1-\lambda \zeta (x)>0,
\end{equation*}
which implies $\zeta''(x)<0$.

\subsection{A General Upper Bound on Menu Size}
\label{Sec_Upperbound}

In the setting of \Cref{Sec_Additive}, even if $\zeta$ does not have a desirable global property, we can obtain an upper bound for the number of pooling bundles in any optimal mechanism. On top of SRS, we assume that $\zeta$ crosses the 45-degree line exactly once.

Let $x_{\zeta}$ be the unique solution of $\zeta (x)=x$. The subset of $(0,x_{\zeta})$ where $\zeta$ is convex plays a key role in our characterization:
\begin{equation*}
X_{\cvx}=\{x\in (0,x_{\zeta}): \exists \delta>0\ \text{s.t.}\ \zeta \ \text{is convex on}\ (x-\delta,x+\delta)\}.
\end{equation*}
Since any open subset of $\mathbb{R}$ can be uniquely decomposed as the union of countably many nonempty disjoint open intervals, we denote the interior of $X_{\cvx}$ as $\Int (X_{\cvx})=\bigcup_{i\in \mathcal{I}}(\alpha_{i},\beta_{i})$, where $\mathcal{I}$ is a countable index set.

\begin{acorollary}
\label{Cor_MenuSize}
Suppose that $\zeta$ crosses the 45-degree line exactly once and that $|\mathcal{I}|$ is finite. Then in any symmetric optimal mechanism, $|\{q\in (0,1):(q,1)\text{ is pooling}\}|\leq |\mathcal{I}|$.
\end{acorollary}

\begin{proof}
Throughout the proof, we fix an optimal $\overline{u}$ and let $x_{\overline{u}}$ be the unique intersection of $1-\overline{u}$ and the 45-degree line. We also denote by $\zeta |_{[\alpha,\beta]}$ the restriction of $\zeta$ to $[\alpha,\beta]$.

By \Cref{Prop_Additive}, the grand bundle $\mathbf{1}$ satisfies $\max (1-\overline{u}(a^{1}),a^{1})\leq \zeta (a^{1})$, so $a^{1}<x_{\zeta}$. Suppose that $(q,1)$ is pooling with $q\in (0,1)$ and $a^{q}<b^{q}\leq a^{1}<x_{\zeta}$. Based on the relationship among $a^{q}$, $b^{q}$, and $x_{\overline{u}}$, three cases will be discussed in order.

Case 1: $b^{q}\leq x_{\overline{u}}$ and $1-\overline{u}(x)=\zeta (x)$ for all $x\in (a^{q},b^{q})$. Then $(a^{q},b^{q})\subseteq X_{\cvx}$. Assume that $(a^{q},b^{q})\subseteq (\alpha_{i},\beta_{i})$ for some $i\in \mathcal{I}$. Then the convex curve $\zeta |_{[\alpha_{i},\beta_{i}]}$ lies above the concave curve $1-\overline{u}$, and their only intersection is on $[a^{q},b^{q}]$.

Case 2: $b^{q}\leq x_{\overline{u}}$ and $1-\overline{u}(x)>\zeta (x)$ for some $x\in (a^{q},b^{q})$. Then $1-\overline{u}-\zeta$ has a maximum point in $(a^{q},b^{q})$, which we denote by $\hat{x}$. As a result, $\zeta'(\hat{x}-)\leq -q\leq \zeta'(\hat{x}+)$, and $\zeta$ is convex in the neighborhood of $\hat{x}$. Assume that $\hat{x}\in (\alpha_{i},\beta_{i})$ for some $i\in \mathcal{I}$. One of the following must hold:

(a) $(\alpha_{i},\beta_{i})\subseteq (a^{q},b^{q})$.

(b) $a^{q}\in (\alpha_{i},\beta_{i})$ and $b^{q}\notin (\alpha_{i},\beta_{i})$. Then $\zeta'(a^{q}-)<-q$. The intersection of $\zeta |_{[\alpha_{i},\beta_{i}]}$ and $1-\overline{u}$ is a subset of $[a^{q},\beta_{i}]$. A similar discussion applies to the situation where $a^{q}\notin (\alpha_{i},\beta_{i})$ and $b^{q}\in (\alpha_{i},\beta_{i})$.

(c) $(a^{q},b^{q})\subset (\alpha_{i},\beta_{i})$. Then $\zeta'(a^{q}-)<-q$ and $\zeta'(b^{q}+)>-q$. The intersection of $\zeta |_{[\alpha_{i},\beta_{i}]}$ and $1-\overline{u}$ is a subset of $[a^{q},b^{q}]$.

Case 3: $x_{\overline{u}}\in (a^{q},b^{q})$. Since $b^{q}<x_{\zeta}$, there cannot be $\max (1-\overline{u}(x),x)=\zeta (x)$ for all $x\in (a^{q},b^{q})$. We must have $1-\overline{u}(x)>\zeta (x)$ for some $x\in (a^{q},x_{\overline{u}})$. The rest of the analysis is similar to case 2.

Case 4: $a^{q}\geq x_{\overline{u}}$. Then $a^{q}\leq \zeta (a^{q})$ and $b^{q}\leq \zeta (b^{q})$, meaning that $\zeta$ crosses the 45-degree line at least twice on $[a^{q},b^{q}]$, a contradiction.

Summarizing cases 1-3, when $a^{q}<x_{\overline{u}}$, there exists $i\in \mathcal{I}$ such that either $(\alpha_{i},\beta_{i})\subseteq (a^{q},b^{q})$, or $\zeta |_{[\alpha_{i},\beta_{i}]}$ and $1-\overline{u}$ intersects on a subset of $[a^{q},b^{q}]$. In total, there can be at most $|\mathcal{I}|$ pooling bundles from the three cases.
\end{proof}

Note that $\mathbf{e}_{1}$, $\mathbf{e}_{2}$, and $\mathbf{1}$ are excluded from \Cref{Cor_MenuSize}. Thus, the number of pooling bundles in any optimal mechanism is capped by $2|\mathcal{I}|+3$.

\subsection{Irregular Distribution}
\label{Sec_Quasiregular}

This section uses an example to demonstrate how the approach in \Cref{Sec_Additive} can be extended to analyze irregular type distributions. In the example, $x_{1}$ and $x_{2}$ are assumed to be i.i.d. random variables with Beta marginals. It has also been studied by DDT in their Example 3 as an illustration of their Theorem 7.

\begin{aexample}[Beta distribution]
\label{Exp_Beta}
Let $g(x)=\mathcal{B}(\alpha,\beta)^{-1}x^{\alpha-1}(1-x)^{\beta-1}$, where $\alpha,\beta>0$ and $\mathcal{B}$ is the beta function. It can be proved that $\phi (\mathbf{x})$ is positive in the bottom-left corner of $[0,1]^{2}$ and negative elsewhere. We can treat $\{\mathbf{x}\in [0,1]^{2}:\phi (\mathbf{x})>0 \}$ as the ``type space'' on which $f$ is strictly regular and define $\zeta$ in the same manner as \eqref{Eqn_Zeta_iid}. When $\alpha=1$, we can get an explicit expression for $\zeta$:
\begin{equation*}
\zeta (x)=
\begin{cases}
\frac{2-(\beta+1)x}{(\beta+2)-(2\beta+1)x} & x\in [0,\frac{2}{\beta +1}], \\
0 & x\in (\frac{2}{\beta +1},1].
\end{cases}
\end{equation*}
It can be verified that $\zeta$ is concave on $[0,2/(\beta+1)]$. Thus the optimal $\overline{u}$ is characterized by \Cref{Cor_CcvDec}, that is, $\overline{u}(x)=1-\zeta (x)$ for $x\in [0,a^{1})$, and $\overline{u}'(x)=1$ for $x\in [a^{1},1]$. \Cref{Fig_Beta} is a graphical illustration for this example.
\end{aexample}

\begin{figure}[htbp]
\centering
\begin{tikzpicture}[baseline={(0,0)},x=4cm,y=4cm]
\draw (0,0) node [below left] {0} -- (1,0) node [below] {1} -- (1,1) -- (0,1) node [left] {1} -- (0,0);
\draw [dotted] (0,0) -- (1,1);
\draw [domain=0:0.75, dashed] plot ({\x}, {(3-4*\x)/(4-5*\x});
\draw (0,0.75) node [left] {$\phi(\mathbf{x})=0$};
\draw [domain=0:0.3, thick, red] plot ({\x}, {(2-3*\x)/(4-5*\x}) -- (0.37,0.37);
\draw [domain=0.3:2/3, thick, green] plot ({\x}, {(2-3*\x)/(4-5*\x});
\draw (0,0.5) node [left] {$\zeta$};
\draw [domain=0:0.3, thick, red] plot ({(2-3*\x)/(4-5*\x},{\x}) -- (0.37,0.37);
\draw (0.8,0.8) node {$\mathbf{1}$};
\draw (0.15,0.85) node {$(q,1)$};
\draw (0.85,0.15) node {$(1,q)$};
\draw [dotted] (0.3,0) node [below] {$a^{1}$} -- (0.3,1);
\draw [dotted] (0.44,0.3) -- (1,0.3);
\end{tikzpicture}
\caption{Beta distribution.}
\label{Fig_Beta}
\end{figure}

\subsection{Results on Menu Size in Section \ref{Sec_SubsComp}}
\label{Sec_SubsComp_OptMech}


In what follows, we present four corollaries based on \Cref{Prop_SubsComp}. Corollaries \ref{Cor_k=1/2}--\ref{Cor_k>1} are nothing but extensions of Corollaries \ref{Cor_Inc}--\ref{Cor_CcvDec}. Therefore, we only sketch their proofs, emphasizing their distinctions relative to the arguments used for Corollaries \ref{Cor_Inc}--\ref{Cor_CcvDec}.

\begin{acorollary}
\label{Cor_k=1/2}
Suppose that $k=1/2$ and there exists $x^{*}\in [0,1]$ such that $\zeta_{k}(x)=0$ for all $x\in [0,x^{*})$. Then the following hold for any optimal $\overline{u}$:
\begin{enumerate}
\item If $\zeta_{k}$ is nondecreasing on $[x^{*},1]$, then $\overline{u}'(x)=0$ for $x\in [0,1]$. \label{Cor_k=1/2_1}
\item If $\zeta_{k}$ is convex on $[x^{*},1]$, then there exist $x^{*}<b^{0}\leq 1$ and $0<q\leq k$ such that:
\begin{enumerate}
    \item $\overline{u}'(x)=0$ for $x\in [0,b^{0})$,
    \item $\overline{u}'(x)=q$ for $x\in [b^{0},1]$.
\end{enumerate}
\item If $\zeta_{k}$ is concave on $[x^{*},1]$, then there exist $x^{*}<b^{0}\leq a^{1}\leq 1$ such that:
\begin{enumerate}
    \item $\overline{u}'(x)=0$ for $x\in [0,b^{0})$,
    \item $\overline{u}(x)=1-\zeta_{k}(x)$ for $x\in [b^{0},a^{1})$,
    \item $\overline{u}'(x)=k$ for $x\in [a^{1},1]$.
\end{enumerate}
\end{enumerate}
\end{acorollary}

\begin{proof}[Proof Sketch]
There are two key differences between the cases $k=1/2$ and $k=1$. First, $\mathbf{k}$ is not necessarily pooling when $k=1/2$. Second, according to \Cref{Prop_Pooling}, both $\mathbf{e}_{1}$ and $\mathbf{e}_{2}$ must be pooling when $k=1/2$.  Consequently, when $\zeta_{k}$ is nondecreasing, optimal mechanisms are separate selling. When $\zeta_{k}$ is convex (or concave), we allow $b^{0}=1$ (or $a^{1}=1$), but instead require $b^{0}>x^{*}$.
\end{proof}

\begin{acorollary}
\label{Cor_1/2<k<1}
Suppose that $1/2<k<1$, that $\zeta_{k}$ crosses the line $x_{2}=(kx_{1}+k-1)/(2k-1)$ exactly once, and there exists $x^{*}\in [0,1]$ such that $\zeta_{k}(x)=0$ for all $x\in [0,x^{*})$. Then the following hold for any optimal $\overline{u}$:
\begin{enumerate}
\item If $\zeta_{k}$ is nondecreasing on $[x^{*},1]$, then there exists $x^{*}<a^{1}<1$ such that: \label{Cor_1/2<k<1_1}
\begin{enumerate}
    \item $\overline{u}'(x)=0$ for $x\in [0,a^{1})$,
    \item $\overline{u}'(x)=k$ for $x\in [a^{1},1]$.
\end{enumerate}
\item If $\zeta_{k}$ is convex on $[x^{*},1]$, then there exist $x^{*}<b^{0}\leq a^{1}<1$ and $0<q<k$ such that:
\begin{enumerate}
    \item $\overline{u}'(x)=0$ for $x\in [0,b^{0})$,
    \item $\overline{u}'(x)=q$ for $x\in [b^{0},a^{1})$.
    \item $\overline{u}'(x)=k$ for $x\in [a^{1},1]$.
\end{enumerate}
\item If $\zeta_{k}$ is concave on $[x^{*},1]$, then there exist $x^{*}<b^{0}\leq a^{1}<1$ such that:
\begin{enumerate}
    \item $\overline{u}'(x)=0$ for $x\in [0,b^{0})$,
    \item $\overline{u}(x)=1-\zeta_{k}(x)$ for $x\in [b^{0},a^{1})$,
    \item $\overline{u}'(x)=k$ for $x\in [a^{1},1]$.
\end{enumerate}
\end{enumerate}
\end{acorollary}

\begin{proof}[Proof Sketch]
This case is an intermediate scenario between the cases $k=1/2$ and $k=1$. According to \Cref{Prop_Pooling}, $\mathbf{e}_{1}$, $\mathbf{e}_{2}$, and $\mathbf{k}$ are all pooling in optimal mechanisms. Consequently, when $\zeta_{k}$ is nondecreasing, optimal mechanisms are deterministic. When $\zeta_{k}$ is convex, optimal mechanisms may have two additional pooling bundles relative to \Cref{Cor_Cvx}. When $\zeta_{k}$ is concave, optimal mechanisms exhibit features analogous to those characterized in \Cref{Cor_Ccv}.
\end{proof}

\begin{acorollary}
\label{Cor_k>1}
Suppose that $k>1$, that $\zeta_{k}$ crosses the line $x_{2}=(kx_{1}+k-1)/(2k-1)$ exactly once, and and there exists $x^{*}\in [0,1]$ such that $\zeta_{k}(x)=0$ for all $x\in (x^{*},1]$. Then the following hold for any optimal $\overline{u}$:
\begin{enumerate}
\item If $\zeta_{k}$ is nonincreasing on $[0,x^{*}]$, then both $\mathbf{e}_{1}$ and $\mathbf{e}_{2}$ are inactive. \label{Cor_k>1_1}
\item If $\zeta_{k}$ is nondecreasing on $[0,x^{*}]$, then there exist $0\leq a^{1}<1$ and $0\leq q<k$ such that: \label{Cor_k>1_2}
\begin{enumerate}
    \item $\overline{u}'(x)=q$ for $x\in [0,a^{1})$,
    \item $\overline{u}'(x)=k$ for $x\in [a^{1},1]$.
\end{enumerate}
\item If $\zeta_{k}$ is convex on $[0,x^{*}]$, then there exist $0\leq a^{1}<1$ and $0\leq q<k$ such that: \label{Cor_k>1_3}
\begin{enumerate}
    \item $\overline{u}'(x)=q$ for $x\in [0,a^{1})$,
    \item $\overline{u}'(x)=k$ for $x\in [a^{1},1]$.
\end{enumerate}
\item If $\zeta_{k}$ is concave on $[0,x^{*}]$, then there exist $0\leq b^{q}\leq a^{1}<1$ and $0\leq q<k$ such that:\label{Cor_k>1_4}
\begin{enumerate}
    \item $\overline{u}'(x)=q$ for $x\in [0,b^{q})$,
    \item $\overline{u}(x)=1-\zeta_{k}(x)$ for $x\in [b^{q},a^{1})$,
    \item $\overline{u}'(x)=k$ for $x\in [a^{1},1]$.
\end{enumerate}
\end{enumerate}
\end{acorollary}

\begin{proof}[Proof Sketch]
Part \ref{Cor_k>1_1} of the corollary is new and we will prove it by contradiction. Suppose that $\zeta_{k}$ satisfies all the features listed in the corollary, and $\mathbf{e}_{2}$ is active. Then $a^{0}=0$, and $\mathcal{D}_{k}(\mathbf{e}_{2})=\{(x_{1},x_{2})_{k}\in \mathcal{X}:x_{1}<0,x_{2}\geq \max (1-\overline{u}(0),z_{k}(0))\}$, meaning that $\mathbf{e}_{2}$ is pooling. If $b^{0}=0$, $\mathcal{D}_{k}(\mathbf{e}_{2})$ has no intersection with the top boundary of $\mathcal{X}$, which implies $\MR (\mathbf{e}_{2})>0$. If $b^{0}>0$, then by \Cref{Prop_SubsComp}, $\max (1-\overline{u}(b^{0}),z_{k}(b^{0}))\leq \zeta_{k}(b^{0})$. Since $\zeta_{k}$ is nonincreasing, the single-dipped curve $\max (1-\overline{u}(x),z_{k}(x))$ must lie below $\zeta_{k}$ for all $x\in [0,b^{0})$, which also implies $\MR (\mathbf{e}_{2})>0$. Both cases lead to a contradiction to \Cref{Prop_SubsComp}.

Parts \ref{Cor_k>1_2}--\ref{Cor_k>1_4} of the corollary are similar to Corollaries \ref{Cor_k=1/2} and \ref{Cor_1/2<k<1}. According to \Cref{Prop_Pooling}, $\mathbf{e}_{1}$ and $\mathbf{e}_{2}$ may not be pooling, while $\mathbf{k}$ must be pooling in optimal mechanisms. Consequently, we allow $q=0$ but require $a^{1}<1$ in parts \ref{Cor_k>1_2}--\ref{Cor_k>1_4}.
\end{proof}

\end{document}